\newcommand{\ExternalLink}{%
    \tikz[color=magenta, x=1.2ex, y=1.2ex, baseline=-0.05ex]{%
        \begin{scope}[x=1ex, y=1ex]
            \clip (-0.1,-0.1) --++ (-0, 1.2) --++ (0.6, 0) --++ (0, -0.6) --++ (0.6, 0) --++ (0, -1);
            \path[draw, line width = 0.5, rounded corners=0.5] (0,0) rectangle (1,1);
        \end{scope}
        \path[draw, line width = 0.5] (0.5, 0.5) -- (1, 1);
        \path[draw, line width = 0.5] (0.6, 1) -- (1, 1) -- (1, 0.6);
    }
}
\def\@maketitle{%
\newpage\null\vskip 2em%
\begin{center}%
\let \footnote \thanks
  {\Large\bf \@title \par}\vskip 1.5em{\large\lineskip .5em\begin{tabular}[t]{c}\@author\end{tabular}\par}\vskip 1em{\large \@date}%
\end{center}\par\vskip 1.5em
}
\def\url@leostyle{\@ifundefined{selectfont}{\def\UrlFont{\sf}}{\def\UrlFont{\small\ttfamily}}}
\crefname{algocf}{Algorithm}{Algorithms}
\Crefname{algocf}{Algorithm}{Algorithms}
\crefname{axiom}{}{}
  \newcommand{\tikzpreamble}{%
    \def\teps{0.075}
    \def\nsc{0.33}
    \def\tanS{1.7320508}
    \def\bsc{0.075}
    \def\colpNPh{red!85!black}
    \def\colXPWh{orange}
    \def\colFPTnoPK{green!80!black}
    \def\colFPTPK{green!40!white}
    \def\colFPT{green}
    \def\colXP{yellow}
    \def\colOpen{brown!20!white}
    \tikzstyle{xnode}=[circle,scale=\nsc,draw,fill=white];
    \tikzstyle{xnodeA}=[circle,scale=\nsc,fill=white,draw=red];
    \tikzstyle{xnodeB}=[circle,scale=\nsc,fill=lightgray,draw=green];
    \tikzstyle{xnodeC}=[circle,scale=\nsc,fill=black,draw=blue];
    \tikzstyle{xnodex}=[circle,fill,scale=\nsc,draw];
    \tikzstyle{xnodey}=[diamond,fill,scale=\nsc,draw];
    \tikzstyle{xstarL}=[star,star points=6,draw,scale=0.4];
    \tikzstyle{xstar}=[star,star points=8,draw,scale=0.4];
    \tikzstyle{xstarB}=[star,star points=10,draw,scale=0.4];
    \tikzstyle{xedge}=[thick,-];
    \tikzstyle{xedgex}=[thick,-,dashed];
    \tikzstyle{xedgedot}=[thick,-,dotted];
    
    \tikzstyle{xpath}=[color=blue,opacity=0.25,line cap=round,line width=6pt];
    \tikzstyle{xpathS}=[color=blue!40!white,opacity=0.3,line cap=round,line join=round,line width=5pt];
    \tikzstyle{xpathSx}=[color=green!40!black,opacity=0.3,line cap=round,line join=round,line width=5pt];
    \tikzstyle{xpathx}=[color=magenta,opacity=0.40,line cap=round,line width=6pt];
    \tikzstyle{xpathy}=[color=green,opacity=0.40,line cap=round,line width=6pt];
    
    \tikzstyle{xhili}=[circle,scale=1.25,opacity=0.25,fill,color=orange,draw];
    \tikzstyle{xhiliS}=[circle,scale=0.625,opacity=0.25,fill,color=orange,draw];
    \tikzstyle{xhiliIS}=[circle,scale=1.25,opacity=0.25,fill,color=magenta,draw];
    \tikzstyle{xxhiliS}=[rectangle,scale=0.85,opacity=0.25,fill,color=cyan,draw];
  }
  \newcommand{\Grid}[9]{
    \foreach\x in {0,...,#2}{
      \foreach \y in {0,...,#3}{
        \node (#1\x\y) at (#8*\x*\xr+#4*\xr,#9*\y*\yr+#5*\yr)[xnode,#6]{};
      }
    }
    \ifnum#3>0
      \pgfmathsetmacro\yx{int(#3 - 1)}
      \foreach \x in {0,...,#2}
        \foreach \y [count=\yi] in {0,...,\yx}  
          \draw[#7] (#1\x\y)--(#1\x\yi) ;
    \fi
    \ifnum#2>0
    \pgfmathsetmacro\yx{int(#2 - 1)}
    \foreach \x in {0,...,#3}
      \foreach \y [count=\yi] in {0,...,\yx}  
        \draw[#7] (#1\y\x)--(#1\yi\x) ;
    \fi
  }
\newcommandx{\mydefenv}[4][3=A]{%
  \newtheorem{#1}{#2}
  \ifstrequal{#3}{A}{\crefname{#1}{#2}{#2s}}{\crefname{#1}{#2}{#3}}%
  \Crefname{#1}{#4{.}}{#4s{.}}
}
\theoremstyle{plain}
\theoremstyle{definition}
\theoremstyle{remark}
\newcommand{\cqed}{\hfill$\diamond$}
\newcommand{\rqed}{\hfill$\triangleleft$}
\newcommand{\prob}[1]{\textnormal{\textsc{#1}}}
\newcommand{\wpb}{when parameterized by}
\newcommand{\RD}{$(\Rightarrow)\quad$}
\newcommand{\LD}{$(\Leftarrow)\quad$}
\newcommandx{\set}[2][1=1]{\ensuremath{\{#1,\ldots,#2\}}}
\newcommandx{\tlog}[3][1=,3=]{\log_{#1}^{#3}(#2)}
\newcommandx{\ith}[2][1=th]{#2\nobreakdash-#1}
\newcommandx{\decprob}[6][3=Input,5=Question]{\vspace{0.125em}
\begin{samepage}
\begingroup
\label{DEACTIVATEDprob:#2}{
{\noindent \textsc{#1}}}
  \nopagebreak[4]\nopagebreak[4]
  \par\noindent\hangindent=\parindent\textbf{#3}:  #4\nopagebreak[4]
  \par\noindent\hangindent=\parindent\textbf{#5}:  #6
  \par\medskip\endgroup
  \end{samepage}
}
\newcommand{\N}{\mathds{N}}
\newcommand{\Nzero}{\mathds{N}_0}
\newcommand{\R}{\mathds{R}}
\newcommand{\calC}{\mathcal{C}}
\newcommand{\calX}{\mathcal{X}}
\newcommand{\cocl}[1]{\textrm{#1}}
\newcommand{\WK}[1]{\cocl{WK[#1]}}
\newcommand{\WKone}{\WK{1}}
\newcommand{\FPT}{\cocl{FPT}}
\newcommand{\fpt}{fixed-parameter tractable}
\newcommand{\NP}{\cocl{NP}}
\newcommand{\classP}{\cocl{P}}
\newcommand{\coNP}{\cocl{coNP}}
\newcommand{\cpoly}{\cocl{poly}}
\newcommand{\NPincoNPslashpoly}{\ensuremath{\NP\subseteq\coNP/\cpoly}}
\newcommand{\unlessPK}{unless \NPincoNPslashpoly}
\newcommand{\UnlessPK}{Unless \NPincoNPslashpoly}
\newcommand{\Hbreaks}[1]{the #1 breaks}
\newcommand{\ETHbreaks}{\Hbreaks{ETH}}
\newcommand{\croco}{cross-composition}
\newcommand{\ORcroco}{OR-\croco}
\newcommand{\yes}{\emph{yes}}
\newcommand{\no}{\emph{no}}
\newcommand{\ceq}{\coloneqq}
\newcommand{\false}{\bot}
\newcommand{\true}{\top}
\newcommand{\tref}[1]{\scriptsize{(\Cref{#1})}}
\DeclareMathOperator{\poly}{poly}
\newcommand{\eps}{\varepsilon}
\newcommand{\Wilog}{Without loss of generality}
\newcommand{\ol}[1]{\overline{#1}}
\newcommand{\lneg}[1]{\ensuremath{\ol{#1}}}
\newcommand{\setto}{\leftarrow}
\newcommandx{\bicmce}[4][1=$\leq$,2=$\geq$,3=$\geq$]{(#1$\,\mid\,$#2,#3)\nobreakdash-\prob{BiCMCE$[$#4$]$}}
\newcommand{\smlaTsc}{\prob{Egalitarian Committee Sequence Election}}
\newcommand{\smlaAcr}{\prob{GCSE}}
\newcommand{\fmlaTsc}{\prob{Equitable Committee Sequence Election}}
\newcommand{\fmlaAcr}{\prob{QCSE}}
\newcommand{\pesmlaTsc}{\prob{Pre-Elected \smlaAcr}}
\newcommand{\pesmlaAcr}{\prob{PE-\smlaAcr}}
\newcommand{\pefmlaTsc}{\prob{Pre-Elected \fmlaAcr}}
\newcommand{\pefmlaAcr}{\prob{PE-\fmlaAcr}}
\newcommand{\xOTsatTsc}{\prob{Exactly 1-in-3 SAT}}
\newcommand{\xOTsatAcr}{\prob{X1-3SAT}}
\newcommand{\appsymb}{$\star$}
\newcommand{\appref}[1]{{\hyperref[proof:#1]{\appsymb}}}
\newcommand{\apprefX}[1]{{\hyperref[#1]{\appsymb}}}
\newcommand{\appendixsection}[1]{%
  \gappto{\appendixProofText}{\section{Additional Material for Section~\ref{#1}}\label{app:#1}}
}
\newcommand{\toappendix}[1]{%
  \gappto{\appendixProofText}{{#1}}
}
\newcommand{\appendixproof}[2]{%
  \gappto{\appendixProofText}{\subsection{Proof \texorpdfstring{of \cref{#1}}{}}\label{proof:#1}#2}
}
\title{Algorithmics of Egalitarian versus Equitable Sequences of Committees}
\newcommand{\AFFA}{AFFA (BR~5207/1 and NI~369/15)}
\author[1]{Eva Michelle Deltl}
\author[1,2]{Till~Fluschnik\thanks{Supported by the DFG, project \AFFA.}$^,$}
\author[2]{Robert Bredereck}
\affil[1]{\small
  Technische Universität Berlin, Faculty~IV, Algorithmics and Computational Complexity, Germany\\
  \texttt{e.deltl@campus.tu-berlin.de}
}
\affil[2]{\small
  Institut für Informatik, TU Clausthal, Germany\\
  \texttt{robert.bredereck@tu-clausthal.de,till.fluschnik@tu-clausthal.de}
}
\date{}
\begin{document}

\maketitle

\begin{abstract}
We study the election of sequences of committees, where in each of $\tau$~levels (e.g.\ modeling points in time)
a committee consisting of $k$~candidates from a common set of $m$~candidates is selected.
For each level, each of $n$~agents (voters) may nominate one candidate whose selection would satisfy her.
We are interested in committees which are good with respect to the satisfaction per day and per agent.
More precisely, we look for \emph{egalitarian} or \emph{equitable} committee sequences.
While both guarantee that at least $x$~agents per day are satisfied,
egalitarian committee sequences ensure that each agent is satisfied in at least $y$~levels
while equitable committee sequences ensure that each agent is satisfied in exactly $y$~levels.
We analyze the parameterized complexity of finding such committees for the
parameters~$n,m,k,\tau,x$, and~$y$,
as well as combinations thereof.
\end{abstract}

\section{Prologue}
Consider the very basic committee selection scenario where every agent
may nominate one candidate for the committee.
The only committee that gives \emph{certain satisfaction to each} agent,
which we call \emph{egalitarian committee},
consist of all nominated candidates.
A committee that gives \emph{each agent the same satisfaction},
which we call \emph{equitable committee},
would also have to consist of all nominated candidates,
or of no candidate at all.
Either outcome appears impractical.
So,
aiming for an equitable or egalitarian committee seems pointless in this setting.

With a small twist, however, it becomes a meaningful yet unstudied case:
what happens when the agents can nominate candidates in different levels,
or, 
to put differently,
for different points in time?
Are there non-trivial egalitarian or equitable committee sequences?
Can we simultaneously guarantee a certain minimum number of nominations in each level?
And if so,
what is the computational complexity we have to face when trying to find such a committee?

What probably appears abstract at first glace is indeed quite natural:
when selecting the menu for some event, each participant may nominate a food option
(with levels being courses),
when organizing a panel, each organizer may nominate a session topic
(with levels being days with different topic frames),
or when planning activities as sketched next.

{\newcommand{\ff}[1]{\cellcolor{blue!25}\bf #1}
\begin{figure}[t]
 \begin{tabular}{l c c}
     ~   &     1 &     2\\
   $a_1$ &     D &     R\\
   $a_2$ &     S &     M\\
   $a_3$ &     D &     H\\
   $a_4$ &     S &     T\\
   $a_5$ &     M &     H\\
   $a_6$ &     R &     M
 \end{tabular}
 ~~~~~
 \begin{tabular}{l c c}
     ~   &     1 &     2\\
   $a_1$ & \ff D &     R\\
   $a_2$ & \ff S & \ff M\\
   $a_3$ & \ff D & \ff H\\
   $a_4$ & \ff S &     T\\
   $a_5$ &     M & \ff H\\
   $a_6$ &     R & \ff M
 \end{tabular}
 ~~~~~
 \begin{tabular}{l c c}
     ~   &     1 &     2\\
   $a_1$ & \ff D &     R\\
   $a_2$ &     S & \ff M\\
   $a_3$ & \ff D &     H\\
   $a_4$ &     S & \ff T\\
   $a_5$ & \ff M &     H\\
   $a_6$ &     R & \ff M
 \end{tabular}
 \caption{Illustration to~\cref{ex:intro1}.
 Left: Preferences indicating the favorite activity of each agent for each day.
          Middle: An egalitarian committee sequence. Right: An equitable committee sequence.}
 \label{fig:ex1}
\end{figure}
\begin{example}
\label{ex:intro1}
We want to bring together six agents at some weekend trip.
Each one announces 
what they want to do on each day of the weekend.
They will only form a group 
if each of them is happy with at least one of the chosen activities
over all days.
Possible activities are:
dancing~(D),
hiking~(H),
museum~(M),
restaurant~(R),
sightseeing~(S),
and
theater~(T).
The agents' preferences are given in \cref{fig:ex1}.
Assume we can choose two activities per day.
To get an overall good satisfaction, we aim to ensure that a strict majority of agents is satisfied each day (in addition to requiring each agent being satisfied at least once).
To realize this, we must select~$\{D,S\}$ for day one and~$\{M,H\}$ for day two.
While this egalitarian committee sequence indeed maximizes satisfaction per day, the agents might not find this fair, because some are satisfied on two days while others are only satisfied once.
We can fix this by aiming to ensure that each agent is satisfied \emph{exactly once} and only a weak majority of agents is satisfied each day.
To realize this, we select~$\{D,M\}$ for day one and~$\{M,T\}$ for day two, which gives an equitable committee sequence.
\end{example}

More formally, we study the following two problems and analyze their (parameterized) complexity
with respect to the following parameters and their combinations:
number~$n$ of agents,
number~$m$ of candidates,
number~$\tau$ of levels (e.g., 
time points),
size~$k$ of each committee,
number~$x$ of nominations the selected committee shall receive in each level, and
number~$y$ of successful nominations each agent makes in total.\footnote{In \cref{ex:intro1},
we have $n=m=6$, $\tau=k=2$, $y=1$, as well as $x=4$ in the egalitarian and $x=3$ in the equitable case.}

\decprob{\smlaTsc\,(\smlaAcr{})\!\nolinebreak}{smla}
{A set~$A$ of $n$~agents,
a set~$C$ of $m$~candidates,
a sequence of nomination profiles~$U=(u_1,\dots,u_\tau)$ with~$u_t\colon A\to C\cup\{\emptyset\}$,
and three integers~$k,x,y\in\Nzero$.}
{Is there a sequence~$C_1,\dots,C_\tau$ of subsets of~$C$ each of size at most~$k$ such that 
\begin{align}
 \text{$\forall t\in\set{\tau}$:} && |\{a\in A\mid u_t(a)\in C_t\}| &\geq x, \label{prob:smla:x}\\
 \text{and $\forall a\in A$:} && \sum\nolimits_{t=1}^\tau |u_t(a)\cap C_t| &\geq y? \label{prob:smla:y}
\end{align}
}

\noindent
We also refer to the left-hand side of~\eqref{prob:smla:x} 
and of~\eqref{prob:smla:y}
as
committee and agent \emph{score},
respectively.
\fmlaTsc{} (\fmlaAcr{}) denotes the variant where we replace~``$\geq$'' with~``$=$'' in \eqref{prob:smla:y}.

\paragraph{Related Work.}
From the motivations perspective, our model aims to select committees,
which is a well-studied core topic of computational social choice.
The three main goals of selecting committees discussed in the literature are individual excellence, proportionality, and diversity (cf.\ \citet{EFSS17}).
The latter is usually reached by egalitarian approaches~\cite{AFGST18} (on which we also focus), where %
the quality of a committee is defined by the least satisfied voter.

Our model considers preferences with more than one level.
Related,
in the 
\emph{multistage} setting~\cite{EMS14,GuptaTW14}
one finds committee election problems 
with multiple preferences for each agent~\cite{kellerhals2021parameterized,BFK22}.
While they also require a minimum satisfaction in each time step,
they do not require a minimum satisfaction of agents.
Instead, they have explicit constraints on the differences between two successive committees.

Also other aspects of selecting multiple (sub)committees have been studied before.
\citet{BKN20} augment classic multiwinner elections with a time dimension,
also selecting a sequence of committees.
The crucial differences with our work is that they do not allow agents (voters)
to change their ballots over time.
While \citet{FZC17}, \citet{Lac20}, and \citet{PP13} allow this, they consider
online scenarios in contrary to our offline scenario.
Moreover, they mostly focus on single-winner decisions and
evaluate the quality of solutions quite differently.
\citet{BHPRT21} also consider an offline setting but aim
for justified representation, a fairness notion for groups
of individuals.

\paragraph{Our Contributions.}

\cref{fig:results} gives a results overview
from our parameterized analysis.
\begin{figure*}[t!]\centering
 \begin{tikzpicture}

  \def\xr{1.19}
  \def\yr{0.95}
  \def\bfntH{\footnotesize}
  \def\bfnt{\scriptsize}
  \tikzpreamble{}
  
  \def\Hcol{gray!25!white}
  \def\Mcol{gray!15!white}
  \def\Lcol{gray!5!white}
  
  \def\bw{3.45}
  \def\bh{1.6}
  \def\xsh{3.75}
  \def\ysh{1.85}
  \tikzstyle{Tparabox}=[anchor=north,minimum width=1*\bw*\xr cm,text width=0.8*\bw*\xr cm,align=center,minimum height=0.31*\bh*\yr cm,rounded corners,fill=\Hcol,draw]
  \tikzstyle{Lparabox}=[minimum width=1*\bw*\xr cm,text width=0.8*\bw*\xr cm,align=center,minimum height=0.36*\bh*\yr cm,rounded corners,fill=\Lcol,draw]
  \tikzstyle{subLparabox}=[minimum width=0.5*\bw*\xr cm,text width=0.4*\bw*\xr cm,align=center,minimum height=0.36*\bh*\yr cm,rounded corners,fill=\Lcol,draw]
  \tikzstyle{Mparabox}=[minimum width=1*\bw*\xr cm,text width=0.8*\bw*\xr cm,align=center,minimum height=0.33*\bh*\yr cm,rounded corners,fill=\Mcol,draw,yshift=1pt]
  \tikzstyle{subMparabox}=[minimum width=0.5*\bw*\xr cm,text width=0.4*\bw*\xr cm,align=center,minimum height=0.33*\bh*\yr cm,rounded corners,fill=\Mcol,draw,yshift=1pt]
  
  \newcommand{\paraboxA}[8]{%
    \node (n#1) at (#2*\xr,#3*\yr)[rounded corners, minimum width=1*\bw*\xr cm,minimum height=1*\bh*\yr cm,fill=black,draw]{};
    \node at (n#1.north)[Tparabox]{{\bfntH\boldmath#4}};
    \node at (n#1.west)[anchor=west,subMparabox]{\bfnt#5};
    \node at (n#1.east)[anchor=east,subMparabox,fill=\Mcol,draw]{\bfnt#6};
    \node at (n#1.south west)[anchor=south west,subLparabox]{\bfnt#7};
    \node at (n#1.south east)[anchor=south east,subLparabox]{\bfnt#8};
    
  }
  \newcommand{\paraboxAB}[7]{%
    \node (n#1) at (#2*\xr,#3*\yr)[rounded corners, minimum width=1*\bw*\xr cm,minimum height=1*\bh*\yr cm,fill=black,draw]{};
    \node at (n#1.north)[Tparabox]{{\bfntH\boldmath#4}};
    \node at (n#1.west)[anchor=west,subMparabox]{\bfnt#5};
    \node at (n#1.east)[anchor=east,subMparabox]{\bfnt#6};
    \node at (n#1.south)[anchor=south,Lparabox]{\bfnt#7};
    
  }
  \newcommand{\paraboxB}[6]{%
    \node (n#1) at (#2*\xr,#3*\yr)[rounded corners, minimum width=1*\bw*\xr cm,minimum height=1*\bh*\yr cm,fill=black,draw]{};
    \node at (n#1.north)[Tparabox]{{\bfntH\boldmath#4}};
    \node at (n#1.east)[anchor=east,Mparabox]{\bfnt#5};
    \node at (n#1.south)[anchor=south,Lparabox]{\bfnt#6};
  }
  \newcommand{\paraboxBA}[7]{%
    \node (n#1) at (#2*\xr,#3*\yr)[rounded corners, minimum width=1*\bw*\xr cm,minimum height=1*\bh*\yr cm,fill=black,draw]{};
    \node at (n#1.north)[Tparabox]{{\bfntH\boldmath#4}};
    \node at (n#1.east)[anchor=east,Mparabox]{\bfnt#5};
    \node at (n#1.south west)[anchor=south west,subLparabox]{\bfnt#6};
    \node at (n#1.south east)[anchor=south east,subLparabox]{\bfnt#7};
  }
  
  \newcommand{\thepboxes}{
    \paraboxB{k}{1*\xsh}{0}{$k$}{\Cref{thm:constmxyk}}{$k=1$}
    \paraboxB{x}{2*\xsh}{0}{$x$}{\Cref{thm:constmxyk}}{$x=0$}
    \paraboxB{y}{3*\xsh}{0}{$y$}{\Cref{thm:constmxyk}}{$y=1$}
    \paraboxBA{nmx}{0*\xsh}{0*\ysh}{$n-x$}{\Cref{thm:nmx}}{$n-x= 2$}{$n-x= 3$}
    \paraboxBA{m}{0.5*\xsh}{1*\ysh}{$m$}{\Cref{thm:constmxyk}}{$m=2$}{$m=1$}
    \paraboxA{tau}{3*\xsh}{1*\ysh}{$\tau$}{\Cref{thm:twolevels}}{\Cref{thm:threelevels}}{$\tau=2$}{$\tau=3$}
    \paraboxB{mpypx}{1.5*\xsh}{2*\ysh}{$m+x+y$}{\Cref{thm:constmxyk}}{$m=2$, $y=1$, $x=0$}
    \paraboxA{taupx}{3*\xsh}{2*\ysh}{$x+\tau$}{\Cref{thm:twolevels}}{\Cref{thm:threelevels}}{$x=0$, $\tau=2$}{$x=0$, $\tau=3$}
    \paraboxB{n}{0.5*\xsh}{3*\ysh}{$n$}{\Cref{thm:n,thm:eep}}{via ILP}
    \paraboxA{taupk}{2*\xsh}{3*\ysh}{$k+\tau$}{\Cref{thm:ktau,thm:nopkmtaux}}{\Cref{thm:ktau}}{$k=1$}{\emph{PPK open}}
    \paraboxA{npy}{0.5*\xsh}{4*\ysh}{$n+y$}{\Cref{thm:ny},~\Cref{thm:pknpy}}{\Cref{thm:FMLAnoPKnpy}}{$O(n^3\cdot y)$}{$y=1$}
    \paraboxB{nptau}{1.25*\xsh}{5*\ysh}{$n+\tau$}{\Cref{cor:pkntau}}{$O(n^2\cdot\tau)$}
    \paraboxA{mptaupx}{2*\xsh}{4*\ysh}{$m+x+\tau$}{\Cref{thm:nopkmtaux}}{\Cref{thm:ktau}}{$m=2$, $x=0$}{\emph{PPK open}}
  }
  \newcommand{\theparcs}{
    \foreach \x/\y in {
      m/n,
      k/m,
      x/mpypx,
      tau/taupx,
      y/tau,
      m/mpypx,
      taupk/mptaupx,
      mptaupx/nptau,
      n/npy,npy/nptau}{\draw[xedge] (n\x) to (n\y);}
    \draw[xedge] ($(nmpypx.north)+(-0.8*\xr,0)$) to (nmptaupx.south west);  
    \draw[xedge] (ntau.north west) to (ntaupk);
    \draw[xedge] (nx.north) to (ntaupx.south west);
    \draw[xedge] (ny.north west) to (nmpypx);
    \draw[xedge] (nk.north east) to (ntaupk.south east);
    \draw[xedge] (ntaupx.north) to (nmptaupx.south east);
    \draw[xedge] (nmpypx) to (nnpy.south east);
    \draw[xedge] (nnmx.north west) to (nn.south);
    \draw[xedge] (nx.north west) to (nn.south);
  }
  \def\tepsx{1.125*\teps}
  \def\tepsxx{2.25*\tepsx}
  \newcommand{\thecarea}{
    \fill[ultra thick,color=\colpNPh,rounded corners] 
      (-0.5*\xsh*\xr,2.5*\ysh*\yr-\teps) -- 
      (3.5*\xsh*\xr,2.5*\ysh*\yr-\teps) --
      (3.5*\xsh*\xr,-0.5*\ysh*\yr) --
      (-0.5*\xsh*\xr,-0.5*\ysh*\yr) -- cycle;
    \fill[ultra thick,color=\colFPT,rounded corners] 
      (-0.5*\xsh*\xr,2.5*\ysh*\yr) -- 
      (3.5*\xsh*\xr,2.5*\ysh*\yr) --
      (3.5*\xsh*\xr,5.5*\ysh*\yr) --
      (-0.5*\xsh*\xr,5.5*\ysh*\yr) -- cycle;
    \fill[ultra thick,color=\colFPTnoPK!80!black,rounded corners] 
      (2*\xsh*\xr+\tepsxx,2.5*\ysh*\yr+\tepsx) -- 
      (2*\xsh*\xr,3.5*\ysh*\yr) --
      (2*\xsh*\xr,4.5*\ysh*\yr) --
      (0.5*\xsh*\xr,4.5*\ysh*\yr) -- 
      (0.5*\xsh*\xr,3.5*\ysh*\yr) -- 
      (-0.5*\xsh*\xr+\tepsxx,3.5*\ysh*\yr) -- 
      (-0.5*\xsh*\xr+\tepsxx,2.5*\ysh*\yr+\tepsx) --cycle;
    \fill[ultra thick,color=\colFPTPK!80!white,rounded corners] 
      (-0.5*\xsh*\xr+\tepsxx,3.5*\ysh*\yr) -- 
      (0.5*\xsh*\xr,3.5*\ysh*\yr) --
      (0.5*\xsh*\xr,4.5*\ysh*\yr) --
      (1.75*\xsh*\xr+\tepsxx,4.5*\ysh*\yr) --
      (1.75*\xsh*\xr+\tepsxx,5.5*\ysh*\yr-\tepsx) --
      (-0.5*\xsh*\xr+\tepsxx,5.5*\ysh*\yr-\tepsx) -- cycle;
  }
  \thecarea{}
  \thepboxes{}
  \theparcs{}
  
  \node at (0.15*\xsh*\xr,5*\ysh*\yr)[scale=1.25,font=\bf, align=left]{Polynomial \\ problem kernel (PPK)};
  \node at (3*\xsh*\xr,4*\ysh*\yr)[scale=1.5,font=\bf, align=left]{fixed-parameter \\ tractable};
  \node at (-0.21*\xsh*\xr,3*\ysh*\yr)[scale=0.95,font=\bf, align=left]{Presumably \\ no PPK};
  \node at (0.25*\xsh*\xr,2*\ysh*\yr)[scale=1.5,font=\bf, align=left]{\NP-hard for a constant \\ parameter value};
 \end{tikzpicture}
 \caption{Overview of our results for~\smlaAcr{} and \fmlaAcr{}. 
 Each box has three horizontal layers,
 the top layer gives the parameter,
 the middle layer the result's reference,
 and the bottom layer gives additional information.
 If a box is vertically split,
 then the left and right side corresponds 
 to~\smlaAcr{} and~\fmlaAcr{},
 respectively;
 Otherwise,
 the information holds for both.
 The boxes are arranged according to the corresponding parameter hierarchy:
 If two boxes are connected by an edge,
 the upper one's parameter upper bounds the lower one's parameter (by some function).}
 \label{fig:results}
\end{figure*}
We highlight the following:
Each of \smlaAcr{} and \fmlaAcr{} is solvable in uniform polynomial time
\begin{itemize}
 \item for constantly many constant-size committees, 
 but not for constantly many committees 
 where each must have a committee score of at least a given constant (unless~$\classP=\NP$);
 or 
 \item for a constant number of agents, 
 but not for a constant number of candidates (unless~$\classP=\NP$).
\end{itemize}
We discovered the following differences between 
the egalitarian and equitable case:
\begin{itemize}
 \item For two stages, \smlaAcr{} is \NP-hard while
  \fmlaAcr{} is polyno\-mial-time solvable
  (\fmlaAcr{} is \NP-hard for three stages);
 \item For parameter~$n+y$, 
 \smlaAcr{} admits a polynomial problem kernel 
  while \fmlaAcr{} presumably does not;
 \item When~$k=m$,
  \smlaAcr{} is polynomial-time solvable,
  while~\fmlaAcr{} is still \NP-hard in this case.
  Notably,
  \smlaAcr{} is \NP-hard even if~$k=m-1$.
\end{itemize}
Due to the space constraints, many details, marked by~\appsymb, 
can be found in
the appendix.

\section{Preliminaries and Basic Observations}
\label{sec:basics}
\appendixsection{sec:basics}

We use standard notation from parameterized algorithmics~\cite{bluebook}.
A problem with parameter~$p$ is \fpt{} (in the class~$\FPT$),
if it can be solved in~$f(p)\cdot s^c$,
where~$s$ denotes the input size,
for some constant~$c$
and computational function~$f$ only depending on~$p$;
i.e.,
it can be solved in uniform polynomial time~$O(s^c)$ for every constant value of~$p$.
A (decidable) parameterized problem is \fpt{} if and only if it admits a problem kernel,
that is,
a polynomial-time algorithm that maps any instance with parameter~$p$
to an equivalent instance of size at most~$g(p)$,
where~$g$ is some function only depending on~$p$.
We speak of a polynomial problem kernel if~$g$ is a polynomial.

\toappendix{
We denote by~$B_N(i)$ the binary string of length~$2N$
with first~$N\in \N$ bits encoding number~$i-1\in\set[0]{2^N-1}$ on~$N$ bits, 
where the first bit is the leading bit, 
and the last~$N$ bits is the complement encoding;
e.g.,
$B_2(2)=(0,1,1,0)$.
For~$c\in C$,
we denote by~$u^{-1}_t(c)=\{a\in A\mid u_t(a)=c\}$.
}

\paragraph*{Basic Observations.}

We first discuss two trivial cases for \smlaAcr{} and \fmlaAcr{}
regarding the value of~$y$ 
and of~$k$.

\begin{observation}\label{obs:trivialy}
 If~$y\in\{0,\tau\}$,
 then~$\smlaAcr$ and~$\fmlaAcr$ are solvable in linear time.
\end{observation}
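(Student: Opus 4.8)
The plan is to exploit the single decisive structural fact behind both problems: since $u_t(a)$ is either one candidate or $\emptyset$, each per-level contribution $|u_t(a)\cap C_t|$ in \eqref{prob:smla:y} is an indicator in $\{0,1\}$, so the agent score is a sum of $\tau$ such indicators and always lies in $\{0,\ldots,\tau\}$. This pins down the two extreme values of $y$ almost completely, so I would split the argument first by $y\in\{0,\tau\}$ and then by the two problems.

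For $y=0$ I would argue as follows. In $\smlaAcr{}$, constraint \eqref{prob:smla:y} reads ``$\ge 0$'' and is vacuous, so the levels decouple and the instance is a yes-instance if and only if each level $t$ independently admits a committee of size at most $k$ with committee score at least $x$. Because distinct candidates are nominated by disjoint sets of agents, the score of $C_t$ equals $\sum_{c\in C_t}|\{a\in A: u_t(a)=c\}|$, which is maximized by choosing the $k$ candidates with the largest nomination counts; deciding whether this maximum reaches $x$ then suffices. In $\fmlaAcr{}$, constraint \eqref{prob:smla:y} reads ``$=0$'', forcing every agent's total to be $0$; hence no agent is ever satisfied, every committee score is $0$, and the instance is a yes-instance if and only if $x=0$ (witnessed by all-empty committees).

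For $y=\tau$ the two variants in fact coincide: since the agent score never exceeds $\tau$, ``$\ge\tau$'' and ``$=\tau$'' are equivalent, so \eqref{prob:smla:y} forces $|u_t(a)\cap C_t|=1$ for every agent $a$ and every level $t$ --- that is, every agent must be satisfied in every level. This in turn forces $u_t(a)\ne\emptyset$ throughout and $C_t\supseteq\{u_t(a):a\in A\}$. Consequently the instance is a yes-instance precisely when (i) no agent abstains in any level, (ii) at most $k$ distinct candidates are nominated per level, and (iii) $x\le n$; conversely these conditions suffice, as taking $C_t=\{u_t(a):a\in A\}$ gives committee score $n\ge x$ and agent score $\tau$ everywhere.

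Each of these checks is essentially a single pass over the input of size $\Theta(n\tau)$, so the only point requiring care --- the sole obstacle to a genuinely \emph{linear} running time --- is the top-$k$ computation in the $y=0$ egalitarian case, where naive sorting would cost $O(m\log m)$ per level. I would avoid this by using a linear-time selection routine to extract the $k$ largest nomination counts (or by exploiting that these counts are nonnegative integers summing to at most $n$ and aggregating them by counting), which keeps the total work linear.
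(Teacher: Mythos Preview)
Your argument is correct. The paper states this observation without proof, treating it as immediate; your write-up supplies the details the paper leaves implicit. Your case split and the reasoning in each case are sound: for $y=0$ in \smlaAcr{} the levels decouple and a greedy top-$k$ per level suffices, for $y=0$ in \fmlaAcr{} every committee score is forced to $0$, and for $y=\tau$ the two variants coincide and the committee at each level is pinned down to the set of nominated candidates. Your remark about avoiding an $O(m\log m)$ sort via linear-time selection (or integer counting, since the per-level nomination counts are bounded by~$n$) is the right way to keep the running time genuinely linear; note also that only the at most~$n$ candidates actually nominated at level~$t$ matter, so the per-level work is~$O(n)$ regardless of~$m$.
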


Note that \cref{obs:trivialy} implies that for~$\tau=1$,
each of \smlaAcr{} and \fmlaAcr{} is linear-time solvable.
Another trivial case for~\smlaAcr{} is the following.

\begin{observation}\label{obs:kgeqmSMLA}
 \smlaAcr{} is linear-time solvable if $k\geq m$.
\end{observation}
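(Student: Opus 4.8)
The plan is to observe that when $k\geq m$ one may always take every committee to be the entire candidate set, and that in the egalitarian setting this choice is optimal. First I would set $C_t\ceq C$ for all $t\in\set{\tau}$; since $|C|=m\leq k$, this respects the size bound $|C_t|\leq k$. The heart of the argument is a monotonicity observation: for each fixed level~$t$, both the committee score $|\{a\in A\mid u_t(a)\in C_t\}|$ and each summand $|u_t(a)\cap C_t|$ of the agent score are non-decreasing when $C_t$ is enlarged, because an agent can only gain (never lose) a satisfied nomination as more candidates are selected. Hence the choice $C_t=C$ simultaneously maximizes the left-hand side of~\eqref{prob:smla:x} at every level and the left-hand side of~\eqref{prob:smla:y} for every agent.

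Next I would exploit that in \smlaAcr{} both constraints are lower bounds. If the all-candidates choice violates~\eqref{prob:smla:x} at some level or~\eqref{prob:smla:y} for some agent, then, since no other choice can attain a larger value on the relevant side, no committee sequence can satisfy the violated constraint, so the instance is a \no-instance. Conversely, if $C_t=C$ meets both constraints, it is itself a solution. Thus the instance is a \yes-instance if and only if the single sequence $C_1=\dots=C_\tau=C$ is feasible, and it suffices to test this one candidate.

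Finally, the feasibility test runs in linear time: for each level~$t$ I would count the agents with $u_t(a)\neq\emptyset$ and compare this count to~$x$, and for each agent~$a$ I would count the levels with $u_t(a)\neq\emptyset$ and compare the total to~$y$. Both passes inspect every entry of the nomination profiles exactly once.

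There is essentially no obstacle here; the only point worth flagging is that the argument hinges on the egalitarian ``$\geq$'' form of~\eqref{prob:smla:y}. For the equitable variant \fmlaAcr{}, where that constraint becomes an equality, enlarging a committee can overshoot the target, so the all-candidates choice is no longer safe---consistent with the later finding that \fmlaAcr{} remains \NP-hard even when $k=m$.
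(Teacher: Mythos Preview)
Your proposal is correct and is exactly the argument the paper has in mind; the paper states the observation without proof precisely because taking $C_t=C$ for all~$t$ is the obvious monotonicity-based choice, and your remark that this fails for \fmlaAcr{} matches the paper's own comment immediately following the observation.
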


We will see that \cref{obs:kgeqmSMLA} does not transfer to~\fmlaAcr{}:
\fmlaAcr{} remains \NP-hard, even if~$k\geq m$ (\cref{thm:threelevels}). 

The following allows us to assume throughout 
to have at most number of agents many candidates.

\begin{lemma}[\appref{lem:agentmanycandidates}]
 \label{lem:agentmanycandidates}
 Each instance~$(A,C,U,k,x,y)$ of \smlaAcr{} (of \fmlaAcr{})
 can be mapped in linear time to an equivalent 
 instance~$(A,C',U',k,x,y), |C'|\leq |A|$ of \smlaAcr{} (of \fmlaAcr{}).
\end{lemma}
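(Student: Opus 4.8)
The plan is to exploit the fact that the feasibility constraints \eqref{prob:smla:x} and \eqref{prob:smla:y} impose no direct coupling between the committees of distinct levels: each~$C_t$ is drawn from the common set~$C$ independently, subject only to its own committee score and to a per-agent aggregate taken over all levels. Consequently, the \emph{identity} of a candidate across levels is irrelevant---a candidate name may be freely permuted or reused within each level without altering any committee score or any agent score. Combined with the observation that a candidate can contribute to a score in level~$t$ only if it is nominated there (i.e.\ lies in~$u_t(A)$), this lets me collapse the globally shared candidate set down to a pool of only~$n=|A|$ names, reused across all levels.

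Concretely, I would proceed as follows. For each level~$t$, the nominated candidates~$N_t\ceq u_t(A)\setminus\{\emptyset\}$ satisfy~$|N_t|\leq|A|=n$, since~$u_t$ maps~$n$ agents each to at most one candidate. I fix a fresh set~$C'=\{c'_1,\dots,c'_n\}$ and, for every level~$t$ \emph{independently}, choose an injection~$\pi_t\colon N_t\to C'$. I then define~$u'_t(a)\ceq\pi_t(u_t(a))$ when~$u_t(a)\neq\emptyset$ and~$u'_t(a)\ceq\emptyset$ otherwise, and output~$(A,C',U',k,x,y)$ with~$U'=(u'_1,\dots,u'_\tau)$ and~$|C'|=n\leq|A|$. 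Computing the~$\pi_t$ and~$U'$ needs one pass per level; with a timestamped lookup table indexed by the original candidate identifiers (to avoid re-initialization between levels) this runs in time linear in the number of non-empty nominations, hence linear in the input size.

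For correctness I would argue both directions through the same normalization step: in any candidate solution~$(C_t)_t$, discarding from each~$C_t$ every candidate not nominated in level~$t$ leaves all committee scores and \emph{all} agent scores unchanged (a non-nominated candidate satisfies no agent in that level) and can only shrink~$C_t$, so~$|C_t|\leq k$ is preserved. Hence I may assume~$C_t\subseteq N_t$ in the forward direction and~$C'_t\subseteq\pi_t(N_t)$ in the backward direction, after which~$\pi_t$ (respectively~$\pi_t^{-1}$) transports one committee sequence to the other while preserving~$|\{a\in A\mid u_t(a)\in C_t\}|=|\{a\in A\mid u'_t(a)\in C'_t\}|$ for every~$t$ and~$\sum_{t}|u_t(a)\cap C_t|=\sum_{t}|u'_t(a)\cap C'_t|$ for every~$a$. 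Because these counts are preserved \emph{exactly}, constraints \eqref{prob:smla:x} and \eqref{prob:smla:y} transfer verbatim, so the argument is identical for \smlaAcr{} (inequality in \eqref{prob:smla:y}) and \fmlaAcr{} (equality); this exactness is precisely why normalizing away non-nominated candidates remains safe in the equitable case.

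The main obstacle is conceptual rather than technical: one must be fully convinced that no hidden inter-level constraint ties candidate identities together, so that the per-level relabelings~$\pi_t$ may indeed be chosen completely independently. Once that independence is isolated, the remaining work---the size bound~$|C'|\leq n$, the linear-time relabeling, and the score-preservation checks---is routine.
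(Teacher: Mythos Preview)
Your proposal is correct and follows essentially the same approach as the paper: per-level relabeling of the at most~$n$ nominated candidates into a fresh pool~$C'$ of size~$n$, with correctness resting on the observation that only the equality pattern~$u_t(a_i)=u_t(a_j)$ matters. Your write-up is in fact more explicit than the paper's---you spell out the normalization step (discarding non-nominated candidates from each~$C_t$) and the score-preservation argument that the paper compresses into a single equivalence.
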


\appendixproof{lem:agentmanycandidates}
{
  \begin{proof}
  Let~$C'=\{c_1',\dots,c_n'\}$ be a set of candidates
  with~$C' \cap C = \emptyset$.
  Now, for each~$t\in\set{\tau}$,
  do the following 
  (consider~$A=\{a_1,\dots,a_n\}$ implicitly ordered).
  If~$i$ is the smallest index such that~$u_t(a_i)\in C$,
  and~$j$ the smallest index such that~$c_j'$ is not yet nominated,
  replace each nomination of~$u_t(a_i)$ by~$c_j'$.
  Correctness follows from the fact that~$u_t(a_i)=u_t(a_j) \iff u_t'(a_i)=u_t'(a_j)$ 
  for every~$i,j\in\set{n}$ and~$t\in\set{\tau}$.
  \end{proof}
}

\begin{corollary}
 \label{cor:pkntau}
 \begin{inparaenum}[(i)]
  \item Each of \smlaAcr{} and \fmlaAcr{}
    admits a problem kernel of size~$O(n^2\cdot\tau)$.
  \item There are at most~$(n+1)^n$ pairwise different nomination profiles.
 \end{inparaenum}
\end{corollary}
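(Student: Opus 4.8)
The plan is to prove the two parts of \cref{cor:pkntau} as consequences of \cref{lem:agentmanycandidates}, which lets us assume~$|C|\leq n$.

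\paragraph{Part (i): the kernel of size $O(n^2\cdot\tau)$.}
First I would apply \cref{lem:agentmanycandidates} to reduce, in linear time, to an equivalent instance with~$|C'|\leq|A|=n$. The remaining task is to bound the total encoding size of such an instance by~$O(n^2\cdot\tau)$. The instance consists of the agent set~$A$ (size~$n$), the candidate set~$C'$ (size at most~$n$), the integers~$k,x,y$ (each bounded by~$\max\{n,\tau\}$ and hence encodable in~$O(\log(n\tau))$ bits), and the sequence of nomination profiles~$U'=(u_1',\dots,u_\tau')$. The dominant term is~$U'$: for each of the~$\tau$ levels, each of the~$n$ agents nominates either one candidate from~$C'$ or~$\emptyset$, which requires~$O(\log|C'|)=O(\log n)$ bits per entry, giving~$O(n\cdot\tau\cdot\log n)$ in total. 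To get the stated~$O(n^2\cdot\tau)$ bound I would simply observe that each profile entry can be written down using at most~$n$ symbols (indexing one of at most~$n$ candidates), so one profile has size~$O(n^2)$ and the whole sequence has size~$O(n^2\cdot\tau)$. Since the reduction is the linear-time map of \cref{lem:agentmanycandidates} and equivalence is guaranteed there, this yields a problem kernel of the claimed size for both \smlaAcr{} and \fmlaAcr{}. I expect no real obstacle here; the only subtlety is being slightly generous with the per-entry bit count so that the clean~$O(n^2\cdot\tau)$ bound falls out rather than the tighter~$O(n\tau\log n)$.

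\paragraph{Part (ii): at most $(n+1)^n$ distinct nomination profiles.}
A single nomination profile is a function~$u_t\colon A\to C\cup\{\emptyset\}$. Having reduced to~$|C|\leq n$, I would count such functions directly: each of the~$n$ agents is mapped to one element of~$C\cup\{\emptyset\}$, and there are at most~$n+1$ such elements (at most~$n$ candidates plus the symbol~$\emptyset$). Hence the number of distinct functions is at most~$(n+1)^n$. This is a one-line counting argument and the main point is merely to invoke \cref{lem:agentmanycandidates} first so that the base~$n+1$ (rather than~$m+1$) is justified.

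\paragraph{Main obstacle.}
There is no substantive difficulty; the whole corollary is a packaging of \cref{lem:agentmanycandidates} together with elementary encoding-size and counting estimates. The one place to be careful is ensuring the statement is really a \emph{problem kernel} in the formal sense: the map of \cref{lem:agentmanycandidates} runs in polynomial (indeed linear) time and preserves yes/no-instances, so composing it with the trivial identity output gives a polynomial-time instance-to-instance map whose output has size~$O(n^2\cdot\tau)$, which is exactly a polynomial problem kernel for the parameter~$n+\tau$. I would state both parts together and let part~(ii) follow as the counting half of the same~$|C|\leq n$ reduction.
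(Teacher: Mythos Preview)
Your proposal is correct and matches the paper's approach: the corollary is stated immediately after \cref{lem:agentmanycandidates} without an explicit proof, precisely because both parts follow directly from the reduction to~$|C|\leq n$ together with the elementary size and counting estimates you give.
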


\section{Intractability}
\label{sec:intract}
\appendixsection{sec:intract}

We discuss the general intractability
of our problems as well as several special cases where
they remain hard.

\subsection{Dichotomies Regarding the Number of Levels}
\label{ssec:dichotau}

Both \smlaAcr{} and \fmlaAcr{} are easy problems if there is only one level.
Yet, 
already for two levels, 
\smlaAcr{} becomes \NP-hard while \fmlaAcr{} stays efficiently solvable.
For three levels,
however, 
also \fmlaAcr{} becomes \NP-hard.
We have the following.

\begin{theorem}
 \label{thm:tau}
 We have the following dichotomies for \smlaAcr{} and \fmlaAcr{} regarding~$\tau$:
 \begin{compactenum}[(i)]
  \item If~$\tau=1$, then
    each of \smlaAcr{} and \fmlaAcr{} is polynomial-time solvable.
  \item\label{thm:tau:ii} If~$\tau=2$, then
    \begin{inparaenum}[(a)]
     \item\label{thm:tau:ii:a} \smlaAcr{} is \NP-hard 
      and, \unlessPK,
      admits no problem kernel of size~$O(m^{2-\eps})$ for any~$\eps>0$,
      and 
     \item\label{thm:tau:ii:b} \fmlaAcr{} is polynomial-time solvable. 
    \end{inparaenum}
  \item\label{thm:tau:iii} If~$\tau\geq 3$, then
    each of \fmlaAcr{} with $k\geq m$ and \smlaAcr{} is \NP-hard
    and,
    unless \ETHbreaks,
    admits no $2^{o(n+m)}\cdot \poly(n+m)$-time algorithm.
 \end{compactenum}
\end{theorem}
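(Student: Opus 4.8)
The theorem bundles three regimes for $\tau$, and my plan is to treat each as a separate argument, since the easy case is definitional while the two hardness claims require reductions.

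\textbf{Part (i), $\tau=1$.} This is immediate from \cref{obs:trivialy}: with a single level, the per-agent requirement $\sum_{t=1}^\tau |u_t(a)\cap C_t|\geq y$ (or $=y$) collapses to asking $|u_1(a)\cap C_1|\geq y$, so the only relevant thresholds are $y\in\{0,1\}$. As already noted after \cref{obs:trivialy}, this lands in the linear-time cases, so I would simply invoke that observation.

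\textbf{Part (ii), $\tau=2$.} For \fmlaAcr{} being polynomial-time solvable (ii.b), I would set up a matching/flow formulation. With two levels and the \emph{exactly}-$y$ constraint, each agent $a$ must be satisfied in exactly $y\in\{0,1,2\}$ of the two levels; the cases $y\in\{0,2\}$ are handled by \cref{obs:trivialy}, so the interesting case is $y=1$, meaning each agent is satisfied in exactly one of the two levels. I would model the choice ``which level satisfies $a$'' together with the committee-size bounds $|C_t|\le k$ and the coverage bound $x$ as a bipartite structure and argue that feasibility reduces to checking a polynomial-size flow or $b$-matching instance; the subtlety is that selecting a candidate in $C_t$ simultaneously affects many agents, so I expect the cleanest route is to guess/iterate over aggregate quantities (how many agents are satisfied in level $1$ versus $2$) and then verify consistency via a polynomial subroutine. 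The hardness half (ii.a) I would prove by a reduction to \smlaAcr{} with $\tau=2$ from a suitable \NP-hard problem; the excerpt's notation suggests \xOTsatAcr{} (\prob{Exactly 1-in-3 SAT}), and I would design the two levels so that one level enforces a ``pick exactly the right literals'' structure while the second, together with the per-level score bound $x$ and the at-least-$y$ agent bound, forces the 1-in-3 condition. For the kernel lower bound of $O(m^{2-\eps})$ I would give a \croco{} (in the sense of the \ORcroco{}/\ANDcroco{} machinery referenced in the preliminaries) that composes many instances while keeping $m$ small, so that a subquadratic kernel in $m$ would contradict \NPincoNPslashpoly{}.

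\textbf{Part (iii), $\tau\geq 3$.} Here I would produce a single reduction, again most naturally from \xOTsatAcr{}, that is simultaneously an \NP-hardness reduction and a \emph{linear-parameter} reduction in $n+m$, so that the ETH-based $2^{o(n+m)}$ lower bound follows from the sparsification lemma and the known ETH-hardness of the source problem. The construction must work for \smlaAcr{} and, crucially, also for \fmlaAcr{} in the regime $k\ge m$; the latter is the delicate point, because when $k\ge m$ one may always take the full candidate set in each level, so hardness for \fmlaAcr{} cannot come from committee-size pressure and must instead be engineered entirely through the \emph{exactly}-$y$ balancing across the (at least three) levels. I would exploit the third level as a ``gadget level'' that lets me encode the 1-in-3 choice via which candidate among a triple becomes the unique satisfying nomination for each clause-agent, using the exactness constraint to rule out satisfying a clause in two different ways. \textbf{The main obstacle} I anticipate is exactly this last point: making the reduction honor $k\ge m$ for \fmlaAcr{} while keeping both $n$ and $m$ linear in the \xOTsatAcr{} instance size, so that the ETH consequence is preserved. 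Once the gadget correctly forces ``exactly one satisfying nomination per clause'' without relying on committee-size restrictions, the equivalence and the parameter bounds should follow by routine counting.
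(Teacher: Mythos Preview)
Your high-level decomposition into the three regimes is right, and Part~(i) is fine. But Parts~(ii) and~(iii) diverge from the paper in ways that matter, and one of your proposed tools would not deliver the claimed result.

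\textbf{Part~(ii)(a).} The paper does \emph{not} reduce from \xOTsatAcr{}; it reduces from \prob{Constraint Bipartite Vertex Cover} (CBVC). Vertices of the two sides become candidates in the two levels, edges become agents, and $x=0$, $y=1$. This is not just a stylistic choice: the $O(m^{2-\eps})$ kernel lower bound is obtained by \emph{transferring} the known $O(|V|^{2-\eps})$ lower bound for CBVC (due to Jansen) through this polynomial-parameter transformation, since $m=|V|$. Your plan of building a fresh \ORcroco{} ``keeping $m$ small'' is the wrong instrument: a standard OR-composition rules out polynomial kernels in the composed parameter, not subquadratic kernels in~$m$. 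To get an $O(m^{2-\eps})$ bound you need either a weak/degree-$2$ composition or, far more simply, a reduction from a problem that already carries such a bound---which is exactly why the paper picks CBVC.

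\textbf{Part~(ii)(b).} The paper does not use flow or $b$-matching. It generalizes two-level \fmlaAcr{} to allow level-specific $k_t,x_t$, applies two data reduction rules to eliminate agents that nominate in at most one level, and then reduces the resulting instance to \prob{CBIVCS} (independent bipartite vertex cover with per-side size and degree-sum constraints), which it solves by dynamic programming over connected components (each component forces one of its two sides entirely into the solution). Your ``guess aggregate quantities, then check by flow'' sketch may be salvageable, but it does not address the coupling that selecting one candidate satisfies all agents nominating it; the paper's component-wise observation is precisely what tames this coupling.

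\textbf{Part~(iii).} The paper reduces from \prob{3-SAT} for \smlaAcr{} (this is what directly carries the $2^{o(n+m)}$ ETH bound) and reuses the \emph{same} construction from \xOTsatAcr{} for \fmlaAcr{}. The crucial ingredient you are missing is an explicit variable gadget: for each variable~$x_i$ there are six agents whose cyclically shifted nominations over the three levels force, in any solution, that $|C_j\cap\{c_i,\ol{c_i}\}|=1$ and that this choice is \emph{identical} across $j\in\{1,2,3\}$ (\cref{obs:threelevels}). Clause agents then simply nominate the three literal-candidates in the three levels. For \fmlaAcr{} this gadget still works even with $k=2N\ge m$, because putting both $c_i$ and $\ol{c_i}$ into any committee would oversatisfy one of the six gadget agents; so committee-size pressure is indeed not needed, but the mechanism enforcing that is the gadget, not just ``use a third level''. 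Your outline identifies the right obstacle but does not supply the gadget that resolves it.
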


We first discuss \eqref{thm:tau:ii:a},
then \eqref{thm:tau:ii:b},
and finally \eqref{thm:tau:iii}.

\subsubsection{Two Levels Make \smlaAcr{} Intractable}

\begin{proposition}[\appref{thm:twolevels}]
 \label{thm:twolevels}
 Even for two levels and~$x=0$,
 \smlaAcr{} is \NP-hard 
 and, \unlessPK,
 admits no problem kernel of size~$O(m^{2-\eps})$ for any~$\eps>0$.
\end{proposition}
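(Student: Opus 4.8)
By \cref{obs:trivialy}, the case $y\in\{0,2\}$ is trivial for two levels, so I focus on $y=1$; with $x=0$ the committee-score constraint~\eqref{prob:smla:x} is vacuous. An instance then asks for $C_1,C_2\subseteq C$ with $|C_1|,|C_2|\le k$ such that every agent~$a$ satisfies $u_1(a)\in C_1$ or $u_2(a)\in C_2$. The plan is to read this as a two-budget covering problem: each agent is a pure disjunction of one level-1 literal and one level-2 literal, so the task is to cover all agents by buying at most $k$ candidates in each level. I would then recognize this as (the complement of) a biclique problem: writing $X$ and $Y$ for the candidates appearing as level-1 and level-2 nominations and turning each agent into a non-edge $\{x,y\}$ of a bipartite graph $G=(X\cup Y,E)$, a feasible pair $(C_1,C_2)$ is exactly a pair of vertex sets whose removal leaves no non-edge, i.e.\ $(X\setminus C_1,\,Y\setminus C_2)$ induces a complete bipartite subgraph of~$G$.

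For the \NP-hardness I would reduce from \prob{Balanced Biclique} (given a bipartite graph with both sides of size~$N$ and an integer~$h$, does it contain a $K_{h,h}$?), which is \NP-complete. From such an instance I create candidates $C=X\cup Y$ (so $m=2N$), one agent per non-edge with $u_1(a)=x\in X$ and $u_2(a)=y\in Y$, and set $\tau=2$, $x=0$, $y=1$, and $k=N-h$. Since every level-1 (level-2) nomination lies in~$X$ ($Y$), I may assume $C_1\subseteq X$ and $C_2\subseteq Y$; then all agents are covered iff $(X\setminus C_1,Y\setminus C_2)$ spans a biclique, which by the size bounds $|X\setminus C_1|,|Y\setminus C_2|\ge h$ happens iff $G$ contains $K_{h,h}$. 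This yields the claimed \NP-hardness and, crucially, shows that $m$ is only linear in the biclique parameter.

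For the kernel lower bound I would give a \croco{} into \smlaAcr{} parameterized by~$m$ that composes $t$~instances into one with $m=O(\sqrt{t}\cdot\poly(s))$ candidates, so that the framework of weak compositions rules out a kernel of size $O(m^{2-\eps})$ \unlessPK{}. The key idea is that the two levels provide a two-dimensional selector: I take $t=r^2$ instances of \prob{Balanced Biclique} (padded to share parameters~$N,h$), arrange them in an $r\times r$ grid, and use a separate block of $N$ level-1 candidates per grid-row and $N$ level-2 candidates per grid-column, giving $m=O(r\cdot N)=O(\sqrt{t}\cdot\poly(s))$. The non-edges of the instance in cell $(i,j)$ become agents on the $i$-th level-1 block and the $j$-th level-2 block. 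Setting $k=(r-1)N+(N-h)$ should let a solver ``switch off'' all but one grid-row by buying its entire level-1 block, and all but one grid-column by buying its entire level-2 block, leaving exactly one cell whose non-edges must be covered by a genuine $K_{h,h}$; conversely, an accepting selection should certify such a biclique in some cell. A kernel of size $O(m^{2-\eps})$ would then compress the \emph{or} of $t$~instances below the $\Omega(t)$ barrier, yielding \NPincoNPslashpoly.

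The main obstacle is the converse direction of this composition, i.e.\ forcing a \emph{clean} selection. Because each agent is only a two-literal disjunction, the budget alone does not stop a solver from leaving rows unselected in several grid-rows (and columns in several grid-columns) at once; the unselected rows and columns then merely certify, for every cell $(i,j)$, that the unselected part forms a biclique of instance~$(i,j)$, which need not concentrate into one large $K_{h,h}$ anywhere. Overcoming this requires additional consistency agents that make it impossible to spread unused budget across two grid-rows or two grid-columns, so that the unselected rows collapse into a single level-1 block and the unselected columns into a single level-2 block, all while keeping the candidate count at $O(\sqrt{t}\cdot\poly(s))$. I expect this concentration gadget, plausibly exploiting a fixed-weight complement encoding such as~$B_N$ to pin down exactly how much budget is spent per block, to be the technical heart of the proof. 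The matching $\widetilde{O}(m^2)$ kernel obtainable by discarding duplicate agent types confirms that the quadratic exponent is the right, and best possible, target.
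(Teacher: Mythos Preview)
Your \NP-hardness argument is correct and is essentially the complement of the paper's: the paper reduces from \textsc{Constraint Bipartite Vertex Cover} (CBVC), turning vertices into candidates and \emph{edges} into agents, whereas you reduce from \textsc{Balanced Biclique}, turning \emph{non\nobreakdash-edges} into agents. Since a bipartite vertex cover with budgets $(k,k)$ is precisely the complement of a $K_{|V_1|-k,\,|V_2|-k}$ biclique, these are the same reduction viewed from opposite sides, and both yield $m=|V|$.

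For the kernel lower bound you take a much harder route than necessary, and you yourself flag the gap: your proposed $r\times r$ grid composition cannot force the solver to concentrate its unused budget in a single grid-row and a single grid-column, because the agents are mere two-literal disjunctions; you leave the required ``concentration gadget'' unspecified. The paper sidesteps all of this. CBVC parameterized by~$|V|$ is already known to admit no kernel of size $O(|V|^{2-\eps})$ \unlessPK{} (a result of Jansen), and the reduction above is a polynomial-parameter transformation with $m=|V|$, so the lower bound transfers immediately---no new \croco{} is needed. Your own \textsc{Balanced Biclique} reduction would serve just as well once you observe that the matching lower bound for the source problem (equivalently, for CBVC) is already in the literature; the point of choosing CBVC is precisely that this work has been done for you.
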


The following problem is \NP-hard~\cite{KuoF87}.

\decprob{Constraint Bipartite Vertex Cover (CBVC)}{cbvc}
{An undirected bipartite graph~$G=(V,E)$ with~$V=V_1\uplus V_2$ and~$k_1,k_2\in\N$.}
{Is there a set~$X\subseteq V$ with~$|X\cap V_i|\leq k_i$ for each~$i\in\{1,2\}$ such that~$G-X$ contains no edge?}

Note that we can assume that~$k_1=k_2$.
CBVC is in \FPT{} \wpb{}~$k_1+k_2$~\cite{FernauN01}
but,
\unlessPK,
admits no problem kernel of 
size~$O(|V|^{2-\eps})$ for any~$\eps>0$~\cite{Jansen16}.
The construction behind the proof of~\cref{thm:twolevels}
is the following 
(the correctness proof is deferred to the appendix).

\begin{construction}\label{constr:twolevels}
 Let~$I=(G=(V=V_1\uplus V_2,E),k,k)$ be an instance of CBVC.
 We construct an instance~$I'\ceq (A,C,(u_1,u_2),k,x,y)$
 with~$x=0$ and~$y=1$ as follows.
 For each vertex~$v_{i,j}$ with~$i\in\{1,2\}$ and~$j\in\set{|V_i|}$,
 add a candidate~$c_{i,j}$ to~$C$.
 For each edge~$\{v_{1,j},v_{2,j'}\}$,
 add agent~$a_{j,j'}$ to~$A$
 which nominates~$c_{1,j}$ in level~$1$ and~$c_{2,j'}$ in level~$2$.
 This finishes the construction.
 \cqed
\end{construction}

\appendixproof{thm:twolevels}
{
  \begin{proof}[Proof of~\cref{thm:twolevels}]
    \RD{}
    Let~$X$ with~$X_i\ceq X\cap V_i$ be a solution to~$I$.
    We claim that~$C_i\ceq \{c_{i,j}\mid v_{i,j}\in X_i\}$,
    $i\in\{1,2\}$,
    is a solution to~$I'$.
    Note that~$|C_i|=|X_i|\leq k$ for each~$i\in\{1,2\}$.
    Moreover,
    suppose that there is an agent~$a_{j,j'}$ not being satisfied.
    Then,
    by construction,
    edge $\{v_{1,j},v_{2,j'}\}$ is not covered---a contradiction.
    
    \LD{}
    Let~$(C_1,C_2)$ be a solution to~$I$.
    We claim that $X=X_1\cup X_2$ with~$X_i\ceq \{v_{i,j}\mid c_{i,j}\in C_i\}$,
    $i\in\{1,2\}$,
    is a solution to~$I$.
    Note that~$|X_i|=|C_i|\leq k$ for each~$i\in\{1,2\}$.
    Moreover,
    suppose that there is an edge $\{v_{1,j},v_{2,j'}\}$ not being covered.
    Then,
    by construction,
    agent $a_{j,j'}$ is not satisfied---a contradiction.
  \end{proof}
}

\subsubsection{Two Levels Leave \fmlaAcr{} Tractable}

Interestingly,
in contrast to \smlaAcr{},
just one additional level does not change the tractability of \fmlaAcr.

\begin{proposition}
 \label{thm:tautwofmla}
 \fmlaAcr{} is polynomial-time solvable if~$\tau=2$.
\end{proposition}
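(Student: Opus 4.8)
The plan is to reduce \fmlaAcr{} with $\tau=2$ to a constraint system with a clean combinatorial shape, and then to dispatch the residual feasibility question by a dynamic program over polynomially bounded quantities. First I would dispose of the easy values of~$y$: since $\tau=2$, every agent's score lies in~$\{0,1,2\}$, so \cref{obs:trivialy} settles $y\in\{0,2\}$ in linear time and $y\geq 3$ is a trivial no-instance. The only remaining case is $y=1$, i.e.\ every agent must be satisfied in \emph{exactly} one level. By \cref{lem:agentmanycandidates} I may also assume $m\leq n$. Now classify agents by their nominations: an agent nominating $\emptyset$ in both levels can never be satisfied, so its existence makes the instance a no-instance; an agent nominating a candidate in only one level must be satisfied there, forcing that candidate into the corresponding committee; and an agent nominating $c_1$ in level~$1$ and $c_2$ in level~$2$ must be satisfied exactly once, which is precisely the exclusive-or ``$c_1\in C_1$ if and only if $c_2\notin C_2$''.

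Next I would capture the exclusive-or constraints by a graph $H$ whose vertices are the ``level-$1$ slots'' and ``level-$2$ slots'' of the candidates (one pair per candidate, recording membership in $C_1$ and in $C_2$), with one edge per two-sided agent joining the level-$1$ slot of its $c_1$ to the level-$2$ slot of its $c_2$. Because every edge runs between the level-$1$ side and the level-$2$ side, each connected component is bipartite with a \emph{unique} bipartition into its level-$1$ slots and its level-$2$ slots, so the only assignments respecting all exclusive-or edges are the two proper $2$-colourings, namely ``all level-$1$ slots selected, all level-$2$ slots unselected'' and its swap. A forcing constraint always demands that a slot be \emph{selected}, hence pins a component to one of the two colourings; a component carrying both a level-$1$ and a level-$2$ forcing is contradictory and yields a no-instance; and a component with no forcing is \emph{free} and may choose either colouring. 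Isolated forced candidates (those not touched by any two-sided agent) are simply placed into the appropriate committee. One routine point to verify here is that each candidate's slot lies in exactly one component, so this decomposition accounts for membership in $C_1$ and in $C_2$ without double counting.

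After this step the only remaining freedom is a binary colouring choice per free component, and I must still enforce the two size bounds $|C_1|,|C_2|\leq k$ and the two score bounds (committee score $\geq x$ in each level). The forced part contributes fixed amounts to all four quantities, while a free component, depending on its colouring, contributes a known amount either to $|C_1|$ and the level-$1$ score, or to $|C_2|$ and the level-$2$ score. Thus the task becomes a two-sided partition: assign each free component to side~$1$ or side~$2$ so that the four resulting totals meet their bounds. The key observation that makes this tractable is that all four weights are bounded by~$n$ (sizes by $m\leq n$, scores by~$n$), so a dynamic program that scans the free components while tracking the four partial sums, each capped at~$n$, runs in time polynomial in~$n$; a feasible colouring exists exactly when the table reaches a state meeting all four bounds, and a witnessing committee sequence is recovered by backtracking.

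The main obstacle is precisely this final feasibility question, because imposing the size and score constraints on \emph{both} levels simultaneously is a genuine bi-criteria two-sided partition, which is \NP-hard for arbitrary weights. What rescues polynomiality—and the step I would be most careful to justify—is that the relevant weights are all at most~$n$, so the pseudo-polynomial dynamic program is in fact polynomial; the earlier structural claims (the ``two colourings per component'' dichotomy and the correctness of the forced/free classification) are comparatively routine once the bipartite structure of~$H$ is in hand.
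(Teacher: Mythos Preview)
Your proposal is correct and follows essentially the same route as the paper. The paper packages the argument by introducing an auxiliary problem \prob{CBIVCS} (a bipartite independent vertex cover with size and degree-sum constraints on both sides), reduces the $y=1$ case to it via reduction rules that peel off one-sided agents, and then solves \prob{CBIVCS} by the very observation you make---each connected component admits exactly the two ``take one side of the bipartition'' solutions---followed by a DP over components tracking the same four polynomially bounded quantities; your ``pin a component via a forcing'' is just the exhaustive application of the paper's \cref{rrule:tautwofmla:onlyone} done in one shot.
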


We provide reduction rules for a generalization of \fmlaAcr{} on two levels,
and then reduce it to a special variant of CBVC.
The generalization of \fmlaAcr{} with~$\tau=2$ is the following.

\decprob{X2 \fmlaTsc{} (X2\fmlaAcr)}{X2smla}
{A set~$A$ of agents,
a set~$C$ of candidates,
a two nomination profiles~$U=(u_1,u_2)$ with~$u_t\colon A\to C\cup\{\emptyset\}$,
and five integers~$k_1,k_2,x_1,x_2,y\in\Nzero$.}
{Is there~$C_1\subseteq C$ with~$|C_1|\leq k_1$ and $C_2\subseteq C$ with~$|C_2|\leq k_2$ such that
\begin{align*}
 \text{$\forall t\in\{1,2\}$:} && |\{a\in A\mid u_t(a)\in C_t\}| &\geq x_t, \\
 \text{and $\forall a\in A$:} && \sum\nolimits_{t=1}^2 |u_t(a)\cap C_t| &= y?
 \end{align*}
}

\noindent
We know that~$y\in\{0,2\}$ are trivial cases.
Thus,
we assume that~$y=1$ is the remainder.
Our goal is to reduce X2\fmlaAcr{}
to the following problem,
which, as we will show subsequently,
is polynomial-time solvable.

\decprob{Constraint Bipartite Independent VC w/ Score (CBIVCS)}{cbivcs}
{An undirected bipartite graph~$G=(V,E)$ with~$V=V_1\uplus V_2$ and~$k_1,k_2,x_1,x_2\in\N$.}
{Is there an independent set~$X\subseteq V$ 
with~$|X\cap V_i|\leq k_i$ and~$\sum_{v\in X\cap V_i} \deg(v)\geq x_i$ for each~$i\in\{1,2\}$ 
such that~$G-X$ contains no edge?}

\begin{lemma}[\appref{prop:CBIVCSinP}]
 \label{prop:CBIVCSinP}
 \prob{CBIVCS} is polynomial-time solvable.
\end{lemma}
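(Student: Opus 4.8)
The plan is to exploit that the set $X$ sought in \prob{CBIVCS} must be simultaneously an independent set and a vertex cover, and to show that in a bipartite graph such sets have a rigid structure that decouples the four constraints into an easy per-component decision.

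First I would characterize the feasible sets. Since $X$ is independent and $G-X$ is edgeless, $X$ is an \emph{independent vertex cover}: both $X$ and $V\setminus X$ are independent sets. Consider the connected components of $G$. In a connected bipartite component $C$ the $2$-coloring is unique, so the only partitions of $V(C)$ into two independent sets are $(C\cap V_1,\,C\cap V_2)$ and its swap; hence $X\cap C$ equals either $C\cap V_1$ or $C\cap V_2$. An isolated vertex (a one-vertex component) may lie inside or outside $X$, but since it has degree $0$ it contributes nothing to either score and is not needed to cover any edge, while it does consume the size budget; so I may assume \wilog{} that $X$ contains no isolated vertex. Thus a feasible $X$ corresponds exactly to a choice, for every component $C_j$ carrying an edge, of one of its two sides.

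Next I would read off what each choice costs. For component $C_j$, write $a_j=|C_j\cap V_1|$, $b_j=|C_j\cap V_2|$, and $e_j=|E(C_j)|$. Taking the $V_1$-side (``choice $\mathrm{A}$'') adds $a_j$ to $|X\cap V_1|$ and, since every edge of the bipartite graph has exactly one endpoint in $V_1$, adds exactly $\sum_{v\in C_j\cap V_1}\deg(v)=e_j$ to the $V_1$-score; taking the $V_2$-side (``choice $\mathrm{B}$'') adds $b_j$ to $|X\cap V_2|$ and likewise $\sum_{v\in C_j\cap V_2}\deg(v)=e_j$ to the $V_2$-score. Writing $E=\sum_j e_j$ for the total number of edges and noting that every edge-bearing component is assigned to exactly one side, the $V_1$-score equals $\sum_{j\in\mathrm{A}}e_j$ and the $V_2$-score equals $E-\sum_{j\in\mathrm{A}}e_j$. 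Hence the two score constraints collapse to the single window $x_1\le\sum_{j\in\mathrm{A}}e_j\le E-x_2$, while the size constraints read $\sum_{j\in\mathrm{A}}a_j\le k_1$ and $\sum_{j\in\mathrm{B}}b_j\le k_2$.

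Finally I would solve this assignment problem by a reachability dynamic program over the components. Processing the edge-bearing components $C_1,\dots,C_r$ (with $r\le|V|$) in order, I maintain the set of reachable triples $(\alpha,\beta,\gamma)$, where $\alpha$ is the accumulated $V_1$-size on the $\mathrm{A}$-side, $\beta$ the accumulated $V_2$-size on the $\mathrm{B}$-side, and $\gamma$ the accumulated edges on the $\mathrm{A}$-side; each component offers the two transitions $(\alpha,\beta,\gamma)\mapsto(\alpha+a_j,\beta,\gamma+e_j)$ and $(\alpha,\beta,\gamma)\mapsto(\alpha,\beta+b_j,\gamma)$. Since $\alpha\le|V_1|$, $\beta\le|V_2|$, and $\gamma\le|E|$, there are $O(|V|^2\cdot|E|)$ states over $O(|V|)$ stages, so the computation runs in polynomial time; the instance is a \yes-instance iff some reachable triple satisfies $\alpha\le k_1$, $\beta\le k_2$, and $x_1\le\gamma\le E-x_2$. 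The main work is the structural step establishing the per-component dichotomy (uniqueness of the bipartition on each connected component and the safe removal of isolated vertices); once that is in place, the decoupling of the four constraints and the dynamic program are routine.
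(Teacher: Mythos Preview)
Your proposal is correct and follows essentially the same approach as the paper: both argue that in each connected bipartite component the independent vertex cover must equal one of the two color classes, and both then solve the resulting per-component two-way choice by a polynomial-size dynamic program over components. Your version is slightly tidier in that you explicitly dispose of isolated vertices and collapse the two score coordinates into one via $x_1'+x_2'=\sum_j e_j$, but these are cosmetic improvements on the same idea.
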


\appendixproof{prop:CBIVCSinP}
{
  The main ingredient is the following,
  which follows from the connectivity together with the unique 2-coloring.

  \begin{observation}
  \label{obs:ivcinccs}
  Let $I=(G=(V=V_1\uplus V_2,E),k_1,k_2,x_1,x_2)$ be a \yes-instance of \prob{CBIVCS} 
  and let $C_q=(V^q=V_1^q\cup V_2^q,E_q)$ be a connected component of~$G$.
  Then,
  for every solution~$X$ to~$I$,
  it holds true that either~$V_1^q\subseteq X$ or~$V_2^q\subseteq X$.
  \end{observation}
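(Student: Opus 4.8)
The plan is to recognize that the two structural requirements on a solution~$X$ — being an independent set and having $G-X$ edge-free — together say exactly that $X$ is an \emph{independent vertex cover} of~$G$. Indeed, that $G-X$ contains no edge means every edge of~$G$ has an endpoint in~$X$ (so $X$ is a vertex cover), while independence means no edge has \emph{both} endpoints in~$X$. Combining the two, every edge of~$G$ has \emph{exactly one} endpoint in~$X$.

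Next I would restrict attention to the connected component~$C_q$. Since every edge of~$C_q$ is also an edge of~$G$, the set $X\cap V^q$ likewise meets each edge of~$C_q$ in exactly one endpoint. I then reinterpret this as a proper $2$-coloring of~$C_q$: assign one color to the vertices of $X\cap V^q$ and the other color to the vertices of $V^q\setminus X$. The ``exactly one endpoint'' property is precisely the statement that no edge of~$C_q$ is monochromatic, i.e.\ this coloring is proper.

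The key step is to invoke the uniqueness of the proper $2$-coloring of a connected bipartite graph: fixing the color of any single vertex forces the color of every other vertex (determined by the parity of its distance to the fixed vertex), and connectivity guarantees that every vertex is reached along some path. Hence $C_q$ admits exactly one proper $2$-coloring up to swapping the two colors, and this coloring is the given bipartition $(V_1^q,V_2^q)$. Matching the two colorings yields $\{X\cap V^q,\,V^q\setminus X\}=\{V_1^q,\,V_2^q\}$, so $X\cap V^q$ equals $V_1^q$ or~$V_2^q$; in either case $V_1^q\subseteq X$ or $V_2^q\subseteq X$, as claimed.

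I expect no serious obstacle here; the only subtlety is the degenerate single-vertex component (which has no edges), where the $2$-coloring argument is vacuous. But there the claim holds trivially, since the empty side of the bipartition is contained in~$X$ regardless. Everything else is a direct translation between ``independent vertex cover restricted to a connected component'' and ``one color class of the unique bipartition'' of that component.
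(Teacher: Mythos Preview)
Your proposal is correct and follows essentially the same approach as the paper, which justifies the observation in a single line by appealing to ``the connectivity together with the unique 2-coloring.'' You have simply unpacked that line: reformulating the independent-vertex-cover condition as ``every edge has exactly one endpoint in~$X$,'' interpreting this as a proper $2$-coloring of the component, and invoking uniqueness of the bipartition in a connected bipartite graph. Your treatment of the isolated-vertex case is a nice touch the paper does not mention explicitly.
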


  \begin{proof}[Proof of \cref{prop:CBIVCSinP}]
  Let $(G=(V=V_1\uplus V_2,E),k_1,k_2,x_1,x_2)$ be an input instance of \prob{CBIVCS}.
  Let~$C_1=(V^1=V_1^1\cup V_2^1,E_1),\dots,C_p=(V^p=V_1^p\cup V_2^p,E_p)$ be the enumerated connected components of~$G$.
  We define~$N_j^i\ceq |V_j^i|$ and~$M_i\ceq |E_i|$.
  We define the following dynamic programming table:
  \begin{align}
    T&[i,k_1',k_2',x_1',x_2'] =\true \iff 
    \notag\\
    &\exists\text{ set~$X\subseteq \bigcup_{q=1}^i V^q$ with~$|X\cap \bigcup_{q=1}^i V^q_j|=k_j'$ and} \notag\\
      & \text{$\sum_{v\in X\cap \bigcup_{q=1}^i V^q_j} \deg(v)= x_j'$ for each~$j\in\{1,2\}$, and} \notag\\
      & \text{$X$ is an independent vertex cover of~$G[V^1\cup\dots\cup V^i]$.} \label{eq:cbivcs:dp:int}
  \end{align}
  Clearly, 
  $T[1,N_1^1,0,M_1,0]=T[1,0,N_2^1,0,M_1]\ceq \true$ and all other entries for~$i=1$ are set to~$\false$.
  Now we have
  \begin{align}
    T&[i,k_1',k_2',x_1',x_2']=\true \iff \notag\\
    & T[i-1,k_1'-N_1^i,k_2',x_1'-M_i,x_2']=\true \notag
    \\ &\lor T[i-1,k_1',k_2'-N_2^i,x_1',x_2'-M_i]=\true \label{eq:cbivcs:dp:rec}.
  \end{align}
  We return~$\yes$ if there is an entry~$T[p,k_1',k_2',x_1',x_2']$ with~$k_1'\leq k_1$, $k_2'\leq k_2$, $x_1'\geq x_1$, and $x_2'\geq x_2$,
  and~$\no$ otherwise.
  
  \smallskip\noindent\textit{Correctness:}
  We prove the statement via induction on the first entry~$i$, 
  using~\eqref{eq:cbivcs:dp:rec}.
  By construction,
  the statement holds true for~$i=1$.
  
  \RD{}
  Let~$T[i,k_1',k_2',x_1',x_2'] =\true$.
  As to~\eqref{eq:cbivcs:dp:rec},
  we have that either $T[i-1,k_1'-N_1^i,k_2',x_1'-M_i,x_2']=\true$ or $T[i-1,k_1',k_2'-N_2^i,x_1',x_2'-M_2]=\true$.
  Let,
  due to symmetry,
  $T[i-1,k_1'-N_1^i,k_2',x_1'-M_i,x_2']=\true$.
  By induction,
  we have that there is a set~$X\subseteq \bigcup_{q=1}^{i-1} V^q$ with~$|X\cap \bigcup_{q=1}^{i-1} V^q_1|=k_1'-N_1^i$
  and~$|X\cap \bigcup_{q=1}^{i-1} V^q_2|=k_2'$,
  as well as
  $\sum_{v\in X\cap \bigcup_{q=1}^{i-1} V^q_1} \deg(v)= x_1'-M_i$
  and $\sum_{v\in X\cap \bigcup_{q=1}^{i-1} V^q_2} \deg(v)= x_2'$.
  Then~$X'\ceq X\cup V_1^i$ fulfills the right side of~\eqref{eq:cbivcs:dp:int}.
  
  \LD{}
  Let~$X\subseteq \bigcup_{q=1}^i V^q$ with~$|X\cap \bigcup_{q=1}^i V^q_j|=k_j'$
  and $\sum_{v\in X\cap \bigcup_{q=1}^i V^q_j} \deg(v)= x_j'$ for each~$j\in\{1,2\}$.
  Note that~$X\cap V^i\in\{V_1^i,V_2^i\}$ since 
  $X$ is an independent vertex cover of~$G[V^1\cup\dots\cup V^i]$
  (see~\cref{obs:ivcinccs}).
  Let~$X\cap V^i = V_1^i$ and~$X'\ceq X\setminus V_1^i$.
  We have $X'\subseteq \bigcup_{q=1}^{i-1} V^q$ with~$|X'\cap \bigcup_{q=1}^{i-1} V^q_1|=k_1'-N_1^i$ 
  and $|X'\cap \bigcup_{q=1}^{i-1} V^q_2|=k_2'$
  and $\sum_{v\in X'\cap \bigcup_{q=1}^{i-1} V^q_1} \deg(v)= x_1'-M_i$
  and $\sum_{v\in X'\cap \bigcup_{q=1}^{i-1} V^q_2} \deg(v)= x_2'$.
  By induction,
  $T[i-1,k_1'-N_1^i,k_2',x_1'-M_i,x_2']=\true$,
  and hence,
  $T[i,k_1',k_2',x_1',x_2']=\true$.
  \end{proof}
}

To reduce X2\fmlaAcr{} to \prob{CBIVCS}
we have to deal with agents nominating none or only one candidate.
The first case is immediate.

\begin{rrule}
 \label{rrule:tautwofmla:trivno}
 If there is an agent nominating no candidate,
 then return \no.
\end{rrule}

If an agent nominates only one candidate in one level and none in the other,
we have to pick this nominated candidates.

\begin{rrule}[\appref{rrule:tautwofmla:onlyone}]
 \label{rrule:tautwofmla:onlyone}
 If there is an agent~$a^*$ nominating one candidate~$c^*$ in one level~$t\in\{1,2\}$,
 and none in the other level~$t'$,
 then do the following:
 Decrease~$k_t$ by one,
 $x_t$ by~$|\{a\in A\mid u_t(a)=c^*\}|$,
 replace each candidate in~$\{c'\in C\mid \exists a\in A:\: u_t(a)=c^* \land u_{t'}(a)=c'\}$ with~$\emptyset$,
 and delete all agents from $\{a\in A\mid u_t(a)=c^*\}$.
\end{rrule}

\appendixproof{rrule:tautwofmla:onlyone}
{
  \begin{proof}
  Let~$I=(A,C,U,k_1,k_2,x_1,x_2,y)$ be the input instance,
  and let the instance~$I'\ceq (A',C',U',k_1',k_2,x_1',x_2,y)$ be obtained by the reduction rule
  (we assume that the agent~$a^*$ nominates only one candidate~$c^*$ in the first level in~$I$).
  
  \RD{}
  Let~$(C_1,C_2)$ be a solution to~$I$.
  Note that due to~$a^*$,
  we have that~$c^*\in C_1$.
  Hence,
  no candidate in~$\{c'\in C\mid \exists a\in A:\: u_1(a)=c^* \land u_{2}(a)=c'\}$ is contained in~$C_2$.
  Hence,
  $(C_1',C_2')$ with~$C_1'\ceq C_1\setminus \{c^*\}$ and~$C_2'\ceq C_2$
  forms a solution to~$I'$.
  
  \LD{}
  Let~$(C_1',C_2')$ be a solution to~$I'$
  where~$C_2'\cap \{c'\in C\mid \exists a\in A:\: u_1(a)=c^* \land u_{2}(a)=c'\}=\emptyset$.
  Then~$(C_1,C_2)$ with~$C_1\ceq C_1'\cup\{c^*\}$ and~$C_2\ceq C_2'$
  is a solution to~$I$.
  \end{proof}
}

Using~\cref{rrule:tautwofmla:trivno} and~\ref{rrule:tautwofmla:onlyone}
exhaustively,
we can finally reduce X2\fmlaAcr{} to \prob{CBIVCS},
proving~\cref{thm:tautwofmla}.

\begin{observation}[\appref{lem:x2fmlaAcrToCBIVCS}]
 \label{lem:x2fmlaAcrToCBIVCS}
 There is a polynomial-time many-one reduction from \prob{X2\fmlaAcr} to \prob{CBIVCS}.
\end{observation}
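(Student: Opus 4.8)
The plan is to turn a \emph{reduced} X2\fmlaAcr{} instance into a bipartite graph whose independent vertex covers correspond exactly to equitable (that is, $y=1$) committee pairs. First I would apply \cref{rrule:tautwofmla:trivno} and \cref{rrule:tautwofmla:onlyone} exhaustively. Afterwards every surviving agent nominates a nonempty candidate in \emph{both} levels: the all-empty case has been rejected, and any agent with a single nomination in one level and none in the other has been eliminated. I would also note that selecting a candidate not nominated in level~$t$ is useless --- it satisfies nobody and contributes nothing to the committee score, while only consuming budget --- so \wilog{} each~$C_t$ consists solely of candidates nominated in level~$t$.

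Next I would build the graph~$G=(V_1\uplus V_2,E)$ by adding a vertex~$v_{1,c}\in V_1$ for every candidate~$c$ nominated in level~$1$, a vertex~$v_{2,c}\in V_2$ for every candidate nominated in level~$2$, and, for each remaining agent~$a$, an edge~$\{v_{1,u_1(a)},v_{2,u_2(a)}\}$, so that the number of edges equals the number of agents. The bounds~$k_1,k_2,x_1,x_2$ (as updated by the reduction rules) are passed to \prob{CBIVCS} unchanged, clamping any score that has dropped to a nonpositive value to~$0$ and rejecting outright if a size budget went negative. The intended correspondence sets~$X\ceq\{v_{1,c}\mid c\in C_1\}\cup\{v_{2,c}\mid c\in C_2\}$, and the whole construction clearly runs in polynomial (indeed near-linear) time.

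The heart of the correctness argument is that, because every agent now appears as exactly one edge, the equitability constraint~$\sum_{t}|u_t(a)\cap C_t|=1$ says precisely that each edge has \emph{exactly one} endpoint in~$X$: the vertex-cover condition supplies ``at least once'' and independence supplies ``at most once''. Hence solutions of the reduced instance biject with independent vertex covers~$X$ of~$G$. The bound~$|X\cap V_t|\le k_t$ matches~$|C_t|\le k_t$, and the committee score in level~$t$ equals~$\sum_{v\in X\cap V_t}\deg(v)$, since the degree of~$v_{t,c}$ counts exactly the agents nominating~$c$ in level~$t$; this is exactly the quantity the score constraint of \prob{CBIVCS} controls.

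The main obstacle I anticipate is bookkeeping around coincidences rather than the conceptual step. Two agents with identical nomination pairs produce \emph{parallel edges}, and an agent nominating the same candidate~$c$ in both levels produces an edge between the distinct vertices~$v_{1,c}$ and~$v_{2,c}$; both are legitimate and, crucially, do not disturb the per-component identity~$\sum_{v\in V_1^q}\deg(v)=|E_q|$ on which the algorithm of \cref{prop:CBIVCSinP} relies. Its component-wise unique $2$-colouring argument (\cref{obs:ivcinccs}) is insensitive to edge multiplicities and never creates a loop, since the two copies of~$c$ lie on opposite sides. I would therefore phrase the target as a bipartite \emph{multigraph} and remark that \cref{prop:CBIVCSinP} applies verbatim. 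The only other care needed is the edge-case arithmetic introduced by \cref{rrule:tautwofmla:onlyone} (forced candidates together with decremented budgets and scores), which is routine.
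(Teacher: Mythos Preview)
Your proposal is correct and follows essentially the same route as the paper: apply the two reduction rules exhaustively so that every remaining agent nominates in both levels, then model agents as edges of a bipartite (multi)graph on the level-$1$ and level-$2$ nominated candidates, passing the (updated) $k_1,k_2,x_1,x_2$ through unchanged. The paper's version is terser---it dispatches correctness in one line via the agent/edge bijection---whereas you spell out the independence/cover duality and flag the multigraph subtlety explicitly; the only item the paper adds that you leave implicit is the upfront handling of $y\in\{0,2\}$ before assuming $y=1$.
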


\appendixproof{lem:x2fmlaAcrToCBIVCS}
{
  \begin{proof}
    Let~$(A,C,(u_1,u_2),k_1,k_2,x_1,x_2,y)$ be an instance of \prob{X2\fmlaAcr}.
    If~$y\in\{0,2\}$, then solve the problem in polynomial time and output a trivial \yes- or \no-instance of \prob{CBIVCS} accordingly.
    Assume~$y=1$.
    Construct the instance~$(G=(V=V_1\uplus V_2,E),k_1',k_2',x_1',x_2')$ as follows.
    Apply \cref{rrule:tautwofmla:onlyone} and~\ref{rrule:tautwofmla:trivno} exhaustively.
    If an application of~\cref{rrule:tautwofmla:trivno} returned no,
    then return a trivial no-instance.
    Otherwise,
    every agent nominates exactly two candidates.
    For each candidate~$c\in \bigcup_{a\in A} u_1(a)$,
    add a vertex~$v^1_c$ to~$V_1$,
    and for each candidate~$c\in \bigcup_{a\in A} u_2(a)$,
    add a vertex~$v^2_c$ to~$V_1$.
    Now,
    for each agent~$a\colon c\: c'$,
    add the edge~$\{v^1_c,v^2_{c'}\}$ to~$E$.
    This finishes the construction.
    
    The correctness follows from the one-to-one correspondence of agents and edges.
  \end{proof}
}

\subsubsection{Three Levels Make \fmlaAcr{} Intractable}

We have seen that \fmlaAcr{} is polynomial-time solvable if~$\tau\leq 2$.
This changes for~$\tau\geq 3$.

\begin{proposition}
 \label{thm:threelevels}
  For at least three levels and~$x=0$,
  each of \fmlaAcr{} with $k\geq m$ and \smlaAcr{} is \NP-hard
  and,
  unless \ETHbreaks,
  admits no $2^{o(n+m)}\cdot \poly(n+m)$-time algorithm.
\end{proposition}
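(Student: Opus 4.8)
The plan is to give two linear-size reductions from satisfiability variants that are hard under ETH: for \fmlaAcr{} I would reduce from \xOTsatAcr{}, and for \smlaAcr{} from \prob{3-SAT}. Both admit no $2^{o(N+M)}$-time algorithm unless \ETHbreaks{} (via the sparsification lemma), where $N$ and $M$ denote the number of variables and clauses. Since all constructions below produce instances with $n+m\in O(N+M)$, a $2^{o(n+m)}\cdot\poly(n+m)$-time algorithm would transfer to the source, giving the claimed lower bounds together with \NP-hardness. The shared idea is to use the three levels as the three literal positions of a clause: for a formula with variables $z_1,\dots,z_N$ I create, for every variable $z_i$, every sign $\sigma\in\{+,-\}$, and every level $p\in\{1,2,3\}$, a candidate $c_{i,\sigma,p}$ (so $m=6N$), and for every clause one \emph{clause agent} that nominates the candidate of its $p$-th literal in level $p$. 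With $y=1$ a clause agent's score equals the number of satisfied literals of that clause; I set $x=0$ throughout.

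For \fmlaAcr{} with $k\ge m$ (so the committee-size bound is vacuous) the exact-satisfaction constraint does the work directly: a clause agent with score exactly $1$ encodes exactly-one-in-three. To force a single consistent truth value per variable I add, for every variable $z_i$ and every ordered pair $(s,t)$ of distinct levels, a \emph{consistency agent} nominating $c_{i,+,s}$ in level $s$ and $c_{i,-,t}$ in level $t$. Writing $\alpha_{i,p}$ resp.\ $\beta_{i,p}$ for the indicators that $c_{i,+,p}$ resp.\ $c_{i,-,p}$ is selected in level $p$, the six equations $\alpha_{i,s}+\beta_{i,t}=1$ (for $s\ne t$) force $\alpha_{i,1}=\alpha_{i,2}=\alpha_{i,3}$, $\beta_{i,1}=\beta_{i,2}=\beta_{i,3}$, and $\alpha_{i,p}+\beta_{i,p}=1$; this is precisely a Boolean assignment, and in particular it forbids selecting both candidates of a variable even though $k\ge m$. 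I would then verify both directions: a solution of \xOTsatAcr{} induces the selection picking the true literals' candidates in all levels, and conversely the consistency agents let me read off an assignment under which every clause has exactly one true literal. Selecting candidates not nominated in a level changes no score, so $k\ge m$ is harmless.

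For \smlaAcr{} the inequality ``$\ge y$'' cannot upper-bound the satisfied literals of a clause, so I instead reduce from \prob{3-SAT}, where ``at least one true literal per clause'' is exactly the goal; the clause and consistency agents are reused, now with ``$\ge 1$'' semantics. Because \smlaAcr{} is trivial for $k\ge m$ (\cref{obs:kgeqmSMLA}), the committee-size bound must now do real work: I set $k=N$ in every level, which is tight since a consistent assignment spends exactly one slot per variable per level. The key step is that this budget forces exactly one selection per variable per level even under ``$\ge$'': if some variable had $\alpha_{i,p}=\beta_{i,p}=0$, the consistency agents would force both $c_{i,+,s}$ and $c_{i,-,s}$ selected for the two levels $s\ne p$, costing that variable four slots in total; a counting argument over the $3N$ available slots then shows no variable can have an empty level, so every variable is selected exactly once per level, and the ``$\ge 1$'' consistency agents collapse to a single truth value (they forbid $\beta_{i,s}>\beta_{i,t}$, forcing all $\beta_{i,p}$ equal).

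The main obstacle is this budget-counting step for \smlaAcr{}: in contrast to the equitable case, the consistency gadget only yields consistency together with a globally tight per-level committee bound and $x=0$, and I must rule out ``cheating'' selections that exploit the weaker ``$\ge$'' constraints. The remaining work — checking that both reductions run in linear time, that each agent nominates at most one candidate per level (one per position for clause agents, two distinct levels for consistency agents), and that $n+m=O(N+M)$ — is routine and delivers the stated ETH lower bounds.
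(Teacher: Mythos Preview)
Your proposal is correct and follows essentially the same route as the paper: both reductions use the three levels as the three literal positions of a clause, equip every variable with six consistency agents covering all ordered pairs of (positive-level, negative-level), reduce \smlaAcr{} from \prob{3-SAT} with the tight budget $k=N$ and the same counting argument over the $3N$ slots, and reduce \fmlaAcr{} from \xOTsatAcr{} with $k\ge m$. The only cosmetic difference is that you index candidates by level ($m=6N$) whereas the paper reuses two candidates $c_i,\ol{c_i}$ per variable across all levels ($m=2N$); since each of your $c_{i,\sigma,p}$ is nominated only in level~$p$, this changes nothing in the argument.
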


For \smlaAcr{},
the proof is via a polynomial-time many-one reduction from the 
famous 
\NP-complete problem
\textsc{3-Satisfiability (3-SAT)},
which transfers the well-known ETH lower bound~\cite{ImpagliazzoPZ01}
as well as NP-hardness~\cite{GareyJ79}.
Given
a set~$X$ of~$N$ variables and a 3-CNF formula~$\phi=\bigwedge_{i=1}^M K_i$ over~$X$,
3-SAT asks whether
there is a truth assignment~$f\colon X \to \{\false,\true\}$ satisfying~$\phi$.

\begin{construction}\label{constr:threelevels}
 Let~$I=(X,\phi)$ be an instance of \prob{3-SAT} with $N$ variables and~$M$ clauses.
 We construct an instance~$I'\ceq (A,C,U,k,x,y)$ of \smlaAcr{}
 as follows
 (see \cref{fig:threelevels} for an illustration).
 \begin{figure}\centering
  \begin{tikzpicture}
    \def\xr{1}
    \def\yr{1}
    \def\ysh{0.4}
    \def\vfsize{0.75}
    \tikzpreamble{}
    
    \newcommand{\initlvls}{
      \node (t1) at (1*\xr,0.5*\yr)[]{$1$};
      \node (t2) at (2*\xr,0.5*\yr)[]{$2$};
      \node (t3) at (3*\xr,0.5*\yr)[]{$3$};
    }
      
    \newcommandx{\newagent}[5][2=]{%
      \node at (t1|-#1)[#2]{#3};
      \node at (t2|-#1)[#2]{#4};
      \node at (t3|-#1)[#2]{#5};
    }
    \newcommand{\theoutline}{
      \node (a00) at (0,-0*\ysh*\yr)[anchor=east,scale=\vfsize]{$\vdots$};
      \newagent{a00}[scale=\vfsize]{$\vdots$}{$\vdots$}{$\vdots$}
      \node (a1) at (0,-1*\ysh*\yr)[anchor=east]{$a_{i,1}$:};
      \newagent{a1}{$c_i$}{$\emptyset$}{$\ol{c_i}$}
      \node (a2) at (0,-2*\ysh*\yr)[anchor=east]{$a_{i,2}$:};
      \newagent{a2}{$\ol{c_i}$}{$c_i$}{$\emptyset$}
      \node (a3) at (0,-3*\ysh*\yr)[anchor=east]{$a_{i,3}$:};
      \newagent{a3}{$\emptyset$}{$\ol{c_i}$}{$c_i$}
      \node (a4) at (0,-4*\ysh*\yr)[anchor=east]{$\ol{a_{i,1}}$:};
      \newagent{a4}{$\ol{c_i}$}{$\emptyset$}{$c_i$}
      \node (a5) at (0,-5*\ysh*\yr)[anchor=east]{$\ol{a_{i,2}}$:};
      \newagent{a5}{$c_i$}{$\ol{c_i}$}{$\emptyset$}
      \node (a6) at (0,-6*\ysh*\yr)[anchor=east]{$\ol{a_{i,3}}$:};
      \newagent{a6}{$\emptyset$}{$c_i$}{$\ol{c_i}$}
      \node (a0) at (0,-7*\ysh*\yr)[anchor=east,scale=\vfsize]{$\vdots$};
      \newagent{a0}[scale=\vfsize]{$\vdots$}{$\vdots$}{$\vdots$}
      \draw [decorate,decoration={brace,amplitude=5pt},xshift=-5pt,yshift=0pt] (a6.south west) -- (a1.north west) node [black,midway,xshift=-10pt,anchor=east] {$A_i$};
    }
    
    \begin{scope}
      \initlvls{}
      \theoutline{}
      \node (c1) at (0,-8*\ysh*\yr)[anchor=east]{$a_r$:};
      \newagent{c1}{$c_i$}{$c_q$}{$\ol{c_p}$}
      \node (a0) at (0,-9*\ysh*\yr)[anchor=east,scale=\vfsize]{$\vdots$};
      \newagent{a0}[scale=\vfsize]{$\vdots$}{$\vdots$}{$\vdots$};
      
      \draw[thin,gray]($(a00.north)+(0.5*\xr,0)$) -- ($(a0.south)+(0.5*\xr,0)$);
      \draw[thin,gray]($(a00.north)+(0.5*\xr,0)$) -- ($(a00.north)+(3.5*\xr,0)$);
      \node at (5.5*\xr,0)[draw=none]{};
    \end{scope}

  \end{tikzpicture}
  \caption{Illustration to \cref{constr:threelevels}
    with~$K_r=(x_i\lor x_q \lor \ol{x_p})$.}
  \label{fig:threelevels}
 \end{figure}
 Let~$C\ceq \{c_i,\ol{c_i}\mid x_i\in X\}$.
 Let~$A_i\ceq \bigcup_{j=1}^3 \{a_{i,j},\ol{a_{i,j}}\}$ for each~$i\in\set{N}$.
 and~$A\ceq A_1\cup\dots\cup A_N\cup\{a_1,\dots,a_M\}$.
 See \cref{fig:threelevels} for the nominations.
 Let~$k\ceq N$,
 $x\ceq 0$,
 and~$y\ceq 1$.
 \cqed
\end{construction}

The construction provides the following key property 
when~$I'$ is a \yes-instance: for every variable,
exactly one of the two corresponding candidates must be in the committee.

\begin{lemma}[\appref{obs:threelevels}]
 \label{obs:threelevels}
 If~$I'$ is a \yes-instance,
 then for every solution~$(C_1,C_2,C_3)$ it holds true that~$|C_j\cap \{c_i,\ol{c_i}\}|=1$ and 
 $C_j\cap \{c_i,\ol{c_i}\} = C_{j'}\cap \{c_i,\ol{c_i}\}$
 for all~$j,j'\in\{1,2,3\}$ and~$i\in\set{N}$.
\end{lemma}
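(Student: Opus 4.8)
The plan is to fix an arbitrary solution $(C_1,C_2,C_3)$ of $I'$, analyze each variable gadget $A_i$ in isolation, and then tie the gadgets together through a global counting argument that uses $k=N$. First I would localize: every agent in $A_i$ nominates only candidates from the pair $\{c_i,\overline{c_i}\}$ (all other entries being $\emptyset$), so whether such an agent is satisfied depends solely on which of the six incidences ``$c_i\in C_j$'' and ``$\overline{c_i}\in C_j$'', $j\in\{1,2,3\}$, hold. I would model these six incidences as the vertices of an auxiliary graph $H_i$ and each of the six agents of $A_i$ as the edge joining its two nonempty nominations. Since $y=1$, an agent is satisfied iff at least one endpoint of its edge is a selected incidence; hence the set $S_i$ of selected incidences of pair~$i$ must be a vertex cover of $H_i$.

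The crux is to identify $H_i$. Reading the nomination pattern off \cref{fig:threelevels}, I expect $H_i$ to be a single $6$-cycle alternating between $c_i$- and $\overline{c_i}$-incidences, concretely
$(c_i,1)-(\overline{c_i},3)-(c_i,2)-(\overline{c_i},1)-(c_i,3)-(\overline{c_i},2)-(c_i,1)$.
From the structure of $C_6$ I then extract two facts: its minimum vertex cover has size $3$, so $|S_i|\geq 3$; and its only two minimum vertex covers are the two color classes, i.e.\ the three $c_i$-incidences or the three $\overline{c_i}$-incidences. Verifying that the six agent-edges indeed form this cycle, and that these are the only two minimum covers (equivalently, that $C_6$ has exactly two maximum independent sets), is the main—though entirely elementary—obstacle; everything else is bookkeeping.

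Then I would globalize. Every candidate lies in exactly one pair, so $\sum_{j=1}^{3}|C_j|=\sum_{i=1}^{N}|S_i|$. The lower bound $|S_i|\geq 3$ gives $\sum_j|C_j|\geq 3N$, while $|C_j|\leq k=N$ for each~$j$ forces $\sum_j|C_j|\leq 3N$. Hence equality holds throughout, so $|S_i|=3$ for every~$i$, meaning each $S_i$ is a \emph{minimum} vertex cover of $H_i$. By the uniqueness of the two minimum covers, for each~$i$ either all three $c_i$-incidences are selected and no $\overline{c_i}$-incidence is, or vice versa; this is exactly $|C_j\cap\{c_i,\overline{c_i}\}|=1$ together with $C_j\cap\{c_i,\overline{c_i}\}=C_{j'}\cap\{c_i,\overline{c_i}\}$ for all $j,j'$. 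I would remark that the clause agents $a_1,\dots,a_M$ impose only further satisfaction requirements and introduce no candidates outside the pairs, so they neither loosen the size bound nor otherwise interfere with the argument. (If one instead reads $y=1$ as \emph{exact} satisfaction, the same two patterns drop out directly by solving the six equalities over the incidences, and the size bound becomes unnecessary.)
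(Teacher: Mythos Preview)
Your argument is correct and shares the paper's core idea---a local lower bound per variable gadget combined with the global budget $\sum_j|C_j|\leq 3k=3N$---but you package the local analysis more cleanly. The paper proceeds in two separate steps: first it argues by direct case analysis that $|C_j\cap\{c_i,\ol{c_i}\}|=0$ for some~$j$ forces both other levels to contain the full pair (hence $>3$ incidences for that~$i$), invokes the budget to derive a contradiction, and then, once $|C_j\cap\{c_i,\ol{c_i}\}|=1$ everywhere is established, does a second case analysis to find an unsatisfied agent when two levels pick different literals. Your $C_6$ vertex-cover formulation collapses both steps into one: the single fact that the only size-$3$ vertex covers of $C_6$ are its two color classes yields both conclusions simultaneously once the counting pins down $|S_i|=3$. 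This is a genuine streamlining; the paper's version is more ad hoc but perhaps easier to verify without the (elementary) enumeration of minimum covers of~$C_6$.
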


\appendixproof{obs:threelevels}
{
  \begin{proof}
  Suppose towards a contradiction
  that there is an~$i\in\set{N}$ and~$j\in\{1,2,3\}$ such that $|C_j\cap \{c_i,\ol{c_i}\}|=0$.
  \Wilog{} let~$j=1$.
  Then in one level~$j'$,
  it must be~$|C_{j'}\cap \{c_i,\ol{c_i}\}|=2$.
  \Wilog{} let~$j'=2$.
  Then,
  neither~$a_{i,1}$ nor~$\ol{a_{i,1}}$ are satisfied.
  Since $u_3(a_{i,1})\cup u_3(\ol{a_{i,1}})=\{c_i,\ol{c_i}\}$,
  it also holds true that~$|C_{3}\cap \{c_i,\ol{c_i}\}|=2$.
  It is immediate that three candidates must be picked over the three levels to satisfy each agent in~$A_{i'}$ for every~$i'\in\set{N}$.
  Recall that the total budget is~$3N$.
  Thus, we have that~$|C_1|+|C_2|+|C_3|> 3N$,
  contradicting the fact that~$(C_1,C_2,C_3)$ is a solution.
  Note that the case that there is an~$i\in\set{N}$ and~$j\in\{1,2,3\}$ such that $|C_j\cap \{c_i,\ol{c_i}\}|=2$
  implies that there is an~$i'\in\set{N}$ and~$j'\in\{1,2,3\}$ such that $|C_{j'}\cap \{c_{i'},\ol{c_{i'}}\}|=0$.
  
  So,
  assume that for all~$i\in\set{N}$ and~$j\in\{1,2,3\}$ we have that~$|C_j\cap \{c_i,\ol{c_i}\}|=1$.
  Suppose towards a contradiction that there are $j,j'\in\{1,2,3\}$ and~$i\in\set{N}$ such that $C_j\cap \{c_i,\ol{c_i}\} \neq C_{j'}\cap \{c_i,\ol{c_i}\}$.
  Let~$\ell\neq \ell'$ be such that~$a_{i,\ell}$ is satisfied by~$C_j$ and $a_{i,\ell'}$ is satisfied by~$C_{j'}$.
  Then, for~$j''\in\{1,2,3\}\setminus\{j,j'\}$,
  we have that~$u_{j''}(a_{i,\ell''})=\emptyset$ with $\ell''\in\{1,2,3\}\setminus\{\ell,\ell'\}$ by construction,
  yielding a contradiction.
  \end{proof}
}

\begin{proof}[Proof of \cref{thm:threelevels} (\smlaAcr)]
 \RD{}
 Let~$f$ be a satisfying truth assignment.
 We claim that~$(C',C',C')$ with
 $C'=\{c_i\in C \mid f(x_i)=\true\} \cup \{\ol{c_i}\in C \mid f(x_i)=\false \}$ is a solution to~$I'$.
 Clearly,~$|C'|=N$.
 Moreover,
 if~$f(x_i)=\true$,
 then~$a_{i,j}$ is satisfied in level~$j$,
 and~$\ol{a_{i,1}}$ is satisfied in level~$3$ and
 $\ol{a_{i,j}}$ with~$j\in\{2,3\}$ is satisfied in~level~$j-1$.
 If~$f(x_i)=\false$,
 then~$\ol{a_{i,j}}$ is satisfied in level~$j$,
 and~$a_{i,1}$ is satisfied in level~$3$ and
 $a_{i,j}$ with~$j\in\{2,3\}$ is satisfied in~level~$j-1$.
 Since~$f$ is satisfying,
 there is exactly one level~$t$ with~$a_r$ being satisfied.
 
 \LD{}
 Let~$(C_1,C_2,C_3)$ be a solution to~$I'$.
 From~\cref{obs:threelevels} we know that~$C'=C_1=C_2=C_3$ 
 and that~$C'\cap \{c_i,\ol{c_i}\}=1$ for all~$i\in\set{N}$.
 Let~$f(x_i)=\true$ if~$c_i\in C'$,
 and~$f(x_i)=\false$ otherwise.
 Clearly,
 $f$ is a truth assignment.
 Suppose it is not satisfying,
 i.e.,
 there is a clause~$K_r$ with no literal evaluated to true.
 Then,
 agent~$a_r$ is satisfied in no level,
 contradicting that~$(C_1,C_2,C_3)$ is a solution to~$I'$.
\end{proof}

For \fmlaAcr{},
yet using again~\cref{constr:threelevels},
we instead reduce from the \NP-hard problem
\xOTsatTsc{} (\xOTsatAcr) 
\cite{Schaefer78},
where,
given
a boolean 3-CNF formula~$\phi$ over a set~$X$ of variables,
the question is whether there is a truth assignment~$f\colon X \to \{\false,\true\}$ 
such that for every clause,
there is exactly one literal evaluated to true?

\noindent
Notably,
\cref{obs:threelevels} also holds true here.
In fact,
we can even allow~$k=2N$,
since for each variable only one candidate is chosen,
as otherwise there is an agent scoring more than once.
\subsection{Few Candidates Are of No Help}
\label{ssec:fewcandidates}

One could conjecture that it should be possible to guess the committees,
and hence get some, possibly non-uniformly polynomial running time
when the number of candidates is constant.
In this section,
we will show that this 
conjecture is wrong unless $\classP=\NP$:
each of \smlaAcr{} and \fmlaAcr{} are \NP-hard even for two candidates.

\begin{theorem}[\appref{thm:constmxyk}]
 \label{thm:constmxyk}
 Even for $x=0$,
 $k=1$,
 and~$y=1$,
 each of \smlaAcr{} with two candidates and 
 \fmlaAcr{} with one candidate is \NP-hard.
 Moreover, 
 unless the SETH breaks, 
 \smlaAcr{}
 admits no~$(2 -\eps)^\tau \cdot \poly(\tau + n)$-algorithm.
\end{theorem}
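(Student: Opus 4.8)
The plan is to prove both hardness claims by reductions in which the $\tau$ levels play the role of Boolean variables and the committee chosen in each level encodes a truth value. Throughout, $x=0$ makes the per-level constraint \eqref{prob:smla:x} vacuous and $k=1$ forces every $C_t$ to be either empty or a single candidate. The two candidates available in the egalitarian case and the single candidate in the equitable case are exactly what is needed so that, respectively, the ``$\ge y$'' and the ``$=y$'' form of \eqref{prob:smla:y} encodes an \NP-hard logic problem; this also explains the asymmetry $m=2$ versus $m=1$, since \smlaAcr{} with a single candidate is trivial (activate that candidate in every level).

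For \smlaAcr{} with two candidates I would reduce from \prob{CNF-SAT}. Given $\phi$ over variables $x_1,\dots,x_N$ with clauses $K_1,\dots,K_M$ (\Wilog{} no clause contains a variable twice), create $\tau\ceq N$ levels and two candidates $c_{\true},c_{\false}$. For each clause $K_i$ add one agent $a_i$ that, for every literal of $K_i$ over $x_j$, nominates $c_{\true}$ in level $j$ if the literal is positive and $c_{\false}$ if it is negative (and $\emptyset$ elsewhere); this is well defined because each variable occurs at most once per clause. Set $x\ceq 0$, $k\ceq 1$, $y\ceq 1$. Reading $C_j=\{c_{\true}\}$ as $x_j=\true$ and everything else (including $C_j=\emptyset$) as $x_j=\false$, agent $a_i$ attains score at least $1$ exactly when the corresponding clause is satisfied, so a satisfying assignment and a valid committee sequence correspond. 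Since $\tau=N$, $n=M$, the number $m=2$ of candidates is constant, and the reduction is polynomial, a $(2-\eps)^{\tau}\cdot\poly(\tau+n)$-time algorithm would decide \prob{CNF-SAT} in $(2-\eps)^{N}\cdot\poly(N+M)$ time, contradicting SETH; taking $\phi$ a $3$-CNF gives the plain \NP-hardness.

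For \fmlaAcr{} with a single candidate $c$ the task reduces to choosing the set $S\ceq\{t\mid c\in C_t\}$ of ``active'' levels so that, since $y=1$, each agent has exactly one of its nominated levels in $S$. I would reduce from \xOTsatAcr{}. In the monotone case (clauses of three positive literals, which remains \NP-hard) take one level per variable and, for each clause, one agent $a_i$ nominating $c$ precisely in the levels of its three variables; writing $N(a_i)$ for the set of levels in which $a_i$ nominates $c$, the constraint $|S\cap N(a_i)|=1$ is exactly the requirement that the clause has exactly one true literal, with $S$ read as the set of true variables. To handle negations in the general \xOTsatAcr{}, split each variable $x_j$ into two complementary levels and add a consistency agent nominating $c$ in both, so that $y=1$ forces exactly one of them active; a positive (negative) literal then points to the corresponding level. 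Again setting $x\ceq 0$, $k\ceq 1$, $y\ceq 1$ yields \NP-hardness.

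The directions \RD{} and \LD{} are short in both reductions once the dictionary ``active level $\leftrightarrow$ true literal'' is fixed, so the main obstacle is conceptual rather than computational: making the equitable ``$=y$'' constraint encode the \emph{exactness} of \xOTsatAcr{} with only one candidate, and in particular designing the consistency gadget that, with a single candidate, ties the positive and negative occurrences of a variable together. A secondary point to verify in the egalitarian direction is that folding $C_t=\emptyset$ into the value $\false$ (rather than treating the empty committee as a genuine third option) preserves equivalence, which holds because empty levels can only drop satisfied agents.
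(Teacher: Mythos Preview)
Your proposal is correct and follows essentially the same approach as the paper: the paper's \cref{constr:constmxyk} is exactly your \prob{CNF-SAT} reduction for \smlaAcr{} (levels~$=$~variables, agents~$=$~clauses, $C=\{c_\true,c_\false\}$), and for \fmlaAcr{} the paper likewise reduces from positive \xOTsatAcr{} dropping~$c_\false$, which is your monotone case. Your single-candidate consistency gadget for non-monotone \xOTsatAcr{} (two levels per variable plus one agent nominating in both) is a clean variant that even stays at $m=1$, whereas the paper's appendix sketch of a consistency gadget uses both candidates; either way the monotone reduction already suffices for the theorem.
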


For \smlaAcr{},
we reduce
from the 
well-known \NP-complete problem
\textsc{Satisfiability (SAT)},
which transfers the well-known SETH lower bound~\cite{ImpagliazzoP01}
as well as NP-hardness~\cite{GareyJ79}.
Given
a set~$X$ of~$N$ variables and a CNF formula~$\phi=\bigwedge_{i=1}^M K_i$ over~$X$,
SAT asks whether 
there is a truth assignment~$f\colon X \to \{\false,\true\}$ satisfying~$\phi$.

The construction is quite intuitive:
Each level corresponds to a variable,
and each agent corresponds to a clause.
In each level,
if the corresponding variable appears as a literal in the agent's corresponding clause,
then the agent nominates a candidate regarding whether it appears negated or unnegated.

\begin{construction}
 \label{constr:constmxyk}
 Let~$I=(X,\phi)$ be an instance of~SAT.
 Construct an instance~$I'\ceq (A,C,U,k,x,y)$
 as follows.
 Let~$A\ceq\{a_1,\dots,a_M\}$,
 $C\ceq\{c_\true,c_\false\}$,
 $\tau\ceq N$,
 $x\ceq 0$, 
 $k\ceq 1$,
 and~$y\ceq 1$.
 In level~$i$,
 agent~$a_j$ nominates
 \[
  \begin{cases}
    c_\true, &\text{if $x_i$ appears unnegated in~$K_j$,}\\
    c_\false, &\text{if $x_i$ appears negated in~$K_j$, and}\\
    \emptyset, & \text{otherwise}.
  \end{cases}
 \]
 This finishes the construction.
 \cqed
\end{construction}

\appendixproof{thm:constmxyk}
{
  \begin{proof}
    Let~$I=(X,\phi)$ be an instance of~SAT.
    Construct instance~$I'\ceq (A,C,U,k,x,y)$
    using \cref{constr:constmxyk}.
    We prove that~$I$ is a \yes-instance if and only if $I'$~is a \yes-instance.
    
    \RD{}
    Let~$f\colon X\to\{\true,\false\}$ be a satisfying truth assignment.
    We claim that~$\calC\ceq (C_1,\dots,C_N)$
    with~$C_i\ceq \{c_{f(x_i)}\}$ is a solution to~$I'$.
    Suppose not,
    that is,
    there is an agent~$a_j$ being non-satisfied,
    i.e.,
    $\sum_{x_i\text{ appears as literal in }K_j} |u_i(a_j)\cap C_i|=0$.
    Thus,
    clause~$K_j$ is not satisfied by~$f$,
    a contradiction.
    
    \LD{}
    Let $\calC\ceq (C_1,\dots,C_N)$
    with~$C_i\neq \emptyset$ be a solution to~$I'$.
    We claim that~$f\colon X\to\{\true,\false\}$ with~$f(x_i)=\true$ if and only if~$c_\true\in C_i$ is a solution to~$I$.
    Suppose not,
    that is,
    there is a clause~$K_j$ such that no literal is evaluated to true.
    Then agent~$a_j$ is not satisfied in any level,
    contradicting the fact that $\calC$ is a solution to~$I'$.
    
    To get the statement for \fmlaAcr{},
    adapt as follows.
    Take~\xOTsatTsc{} as input problem and apply \cref{constr:constmxyk}.
    Then,
    add for each variable two new agents and another (new) level,
    where in both of the levels corresponding
    to the variable 
    one of the two agents nominates~$c_\top$ 
    and the other nominates~$c_\bot$;
    This enforces that for each variable a choice 
    (corresponding to true or false) 
    must be made.
  \end{proof}
}

\begin{remark}
 For \fmlaAcr{},
 we reduce from \xOTsatAcr{}
 (see previous section)
 where no variable appears negated~\cite{Schmidt2010d},
 where the construction is 
 very similar to~\cref{constr:constmxyk}
 (yet~$c_\false$ can be dropped).
 Hence,
 a lower bound based on the SETH
 as for \smlaAcr{} remains open for \fmlaAcr{}.
 \rqed
\end{remark}

\section{Tractability}
\label{sec:tract}
\appendixsection{sec:tract}

In this section,
we discuss non-trivial tractable cases of \smlaAcr{} and \fmlaAcr{}.
It turns out that fixed-parameter tractability starts with
the number~$n$ of agents or
the solution size~$k\cdot \tau$,
i.e.,
with the combination of the size~$k$ of each committee and the number~$\tau$ of levels.
As to the latter,
recall that each of \smlaAcr{} and \fmlaAcr{} is \NP-hard if either~$k$ is constant or~$\tau$ is constant.
Finally,
we discuss efficient and effective data reduction
regarding~$n$ and~$n+y$.

\subsection{Few Small Committees May be Tractable}

We show that each of \smlaAcr{} and \fmlaAcr{} 
when parameterized by the solution size~$k\cdot\tau$ 
is in \FPT{}.
That is,
we can deal with many agents and candidates,
as long as we are asked to elect few small committees.
We also show that \smlaAcr{} admits presumably
no problem kernel of size polynomial in~$m\cdot\tau$.

\begin{theorem}
  \label{thm:ktau}
 Each of \smlaAcr{} and \fmlaAcr{}
 is solvable in~$2^{k\cdot\tau^2}\cdot \poly(n+m+\tau)$ time,
 and hence \fpt{} \wpb{}~$k+\tau$.
\end{theorem}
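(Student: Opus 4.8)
The plan is to show fixed-parameter tractability \wpb{}~$k+\tau$ by a \emph{bounded search tree} over the relevant candidates, exploiting the fact that a solution consists of only $k\cdot\tau$ slots in total. First I would invoke \cref{lem:agentmanycandidates} to assume~$m\leq n$, so that all quantities are polynomial in~$n$. The key structural observation is that a candidate~$c$ is \emph{useful in level~$t$} only if it is nominated by some agent in that level, i.e.\ if~$u_t^{-1}(c)\neq\emptyset$; a committee~$C_t$ may as well contain only useful candidates, since useless ones contribute nothing to either the committee score~\eqref{prob:smla:x} or the agent score~\eqref{prob:smla:y}. The crux is to bound, for each level, the number of candidates we ever need to consider.

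\medskip

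The main idea is the following self-reducing branching. I would process the levels~$t=1,\dots,\tau$ and, within each level, decide which (at most~$k$) candidates to place into~$C_t$. The obstacle is that a level can have many distinct nominees, so naively branching over all size-$\leq k$ subsets is not bounded by a function of~$k+\tau$ alone. To overcome this, observe that the exponent in the claimed bound~$2^{k\cdot\tau^2}$ suggests grouping the candidates by their \emph{nomination signature across the $\tau$ levels}: two candidates that are nominated by exactly the same set of agents in every level are interchangeable for feasibility purposes. More usefully, for the purpose of satisfying the agent-score constraint we only care, per agent, about \emph{which of its (at most~$\tau$) nominated candidates get selected in which levels}. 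I would therefore branch as follows: repeatedly pick an as-yet-unsatisfied agent~$a$ (one with current agent score below~$y$); since~$a$ nominates at most one candidate per level, it has at most~$\tau$ distinct relevant (candidate,\,level) pairs, and to make progress on~$a$ at least one of these must be selected. Branching over the at most~$\tau$ choices for each of the at most~$y\leq\tau$ units of satisfaction~$a$ still needs, and over all agents, must be organized so the total branching depth is~$O(k\cdot\tau)$—bounded by the total number of committee slots.

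\medskip

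Concretely, I expect the cleanest argument to branch directly on the~$k\cdot\tau$ slots: there are~$\tau$ levels, each committee has~$\leq k$ slots, and each slot is filled by a candidate that must be a nominee in that level of some agent we are trying to satisfy. Whenever we commit a slot in level~$t$ to candidate~$c$, we charge it to an unsatisfied agent whose level-$t$ nomination is~$c$; there are at most~$\tau$ candidates that a single agent could be assigned across the levels, and the guess of \emph{which} committee slot and \emph{which} nominee gives a branching factor bounded by a function of~$k$ and~$\tau$. Summing a branching factor of order~$2^{k\tau}$ per level over~$\tau$ levels yields the~$2^{k\cdot\tau^2}$ factor, and each node of the search tree does~$\poly(n+m+\tau)$ work to update scores and check feasibility of~\eqref{prob:smla:x} and~\eqref{prob:smla:y}. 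The hardest part is making the charging scheme tight enough that the search-tree size is genuinely bounded by~$2^{k\cdot\tau^2}$ rather than by something depending on~$n$ or~$m$; the equitable variant \fmlaAcr{} additionally requires rejecting branches that \emph{over}-satisfy an agent (agent score exceeding~$y$), but this is a local consistency check that does not affect the branching bound, so the same tree handles both problems. Finally, fixed-parameter tractability \wpb{}~$k+\tau$ follows immediately since both~$2^{k\cdot\tau^2}$ and the polynomial factor depend on the parameter and input size in the required form.
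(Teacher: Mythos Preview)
Your overall strategy---pick an unsatisfied agent, branch on how to satisfy it, and bound the recursion depth by the $k\tau$ total committee slots---is exactly the paper's approach. But the branching rule is never stated precisely, and your running-time arithmetic does not match any concrete rule.

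The paper's clean formulation is this. Work in the pre-elected generalization with per-level budgets $k_t$, per-level thresholds $x_t$, and per-agent demands $y_a$. Pick any agent $a$ with $y_a>0$ and branch over all of $a$'s \emph{fingerprints} in $\{u_1(a),\emptyset\}\times\dots\times\{u_\tau(a),\emptyset\}$ having at least $y_a$ non-empty entries (for \smlaAcr{}) or exactly $y_a$ non-empty entries (for \fmlaAcr{}); there are at most $2^\tau$ of them. Committing to a fingerprint fixes, for every level, whether $a$'s nominee is in that committee; it consumes at least one slot, so the recursion depth is at most $\sum_t k_t\le k\tau$. Hence the tree has at most $(2^\tau)^{k\tau}=2^{k\tau^2}$ nodes. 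Your sentence ``a branching factor of order $2^{k\tau}$ per level over $\tau$ levels'' does not describe this or any other concrete rule; the correct accounting is branching factor $2^\tau$ and depth $k\tau$.

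You also skip what happens at the leaves. Once every agent has $y_a\le 0$, the $x$-constraint may still fail. For \smlaAcr{} the paper fills the remaining $k_t$ slots in each level greedily with the most-nominated remaining candidates and checks whether the score reaches $x_t$; for \fmlaAcr{} no further nominated candidate may be added without over-satisfying someone, so one simply checks whether the current committees already meet each $x_t$. This leaf test is where \smlaAcr{} and \fmlaAcr{} actually diverge, not merely in a ``local consistency check'' during branching.
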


\noindent
We introduce generalized versions of \smlaAcr{} and \fmlaAcr{}.
Intuitively, they allow to fix certain parts of the solution.
Moreover, one may request level-individual committee sizes,
level-individual numbers of nomination the committees shall receive,
and agent-individual numbers of successful nominations the agents
 still have to make.

\medskip
\decprob{\pesmlaTsc{} (\pesmlaAcr)}{pesmla}
{A set~$A$ of agents,
a set~$C$ of candidates,
a sequence of nomination profiles~$U=(u_1,\dots,u_\tau)$ with~$u_t\colon A\to C\cup\{\emptyset\}$,
integers~$x_t,k_t\in\Nzero$ for each~$t\in\set{\tau}$
and integers~$y_a\in\Nzero$ for each~$a\in A$.}
{Is there a sequence~$C_1,\dots,C_\tau\subseteq C$ with~$|C_t|\leq k_t$
for every~$t\in\set{\tau}$ 
such that
\begin{align}
 \text{$\forall t\in\set{\tau}$:} && \!\!|\{a\in A\mid u_t(a)\in C_t\}| &\geq x_t,
 \notag\\
 \text{and $\forall a\in A$:} && \sum\nolimits_{t=1}^\tau |u_t(a)\cap C_t| &\geq y_a. \label{prob:pesmla:y}
\end{align}
}

\noindent
\pefmlaTsc{} (\pefmlaAcr) denotes the variant 
when replacing~``$\geq$'' with~``$=$'' in \eqref{prob:pesmla:y}.

Each of \pesmlaAcr{} and \pefmlaAcr{} use slightly different approaches.
However,
the core idea is the same:
in any solution,
each agent has a fingerprint over all levels regarding whether or not
her candidate is elected into the respective committee.
Note that there are at most~$2^\tau$ fingerprints.
Hence,
we can guess such a fingerprint for any unsatisfied agent and branch.
Together with the fact that the sum of the committee sizes in the sequence is at most~$k\cdot \tau$,
the result follows.

Throughout,
we use the following.
Fix any agent~$a\in A$.
We define for~$A'\ceq  A\setminus\{a\}$ the utility function
\[u_t-u_t(a) \colon A'\to C\cup\{\emptyset\},\, (u_t-u_t(a))(a') \mapsto u_t(a')\setminus u_t(a).\]

\begin{algorithm}[t]
  \SetKwFunction{FMain}{main}
  \SetKwProg{Fn}{function}{:}{}
  \FMain{$(A,C,U,(k_t)_t,(x_t)_t,(y_a)_a)$}\;
  \Return{\no}\;
  \Fn{\FMain{$(A,C,U,(k_t)_t,(x_t)_t,(y_a)_a)$}}{
    \DontPrintSemicolon
    \lIf{$k_t<0$ for some~$t\in\set{\tau}$}{\textbf{break}}
    \If{$\forall a\in A:\: y_a \leq 0$}
    {
      \ForEach{$t\in\set{\tau}$}
        {
        \If{$x_t>0$}{\DontPrintSemicolon \lIf{any level-$t$ committee of $k_t$ most nominated candidates in~$C$ w.\,r.\,t.~$u_t$ scores less than~$x_t$}{\textbf{break}}}
        }
      \Return{\yes}
    }
    \DontPrintSemicolon
    \lIf{$\exists a\in A$ with~$y_a>0$ but no fingerpint with at least~$y_a$ non-empty entries}{\textbf{break}}
    Let~$a\in A$ be such that~$y_a>0$ with at least one fingerpint with at least~$y_a$ non-empty entries\;
    \ForEach(\tcp*[f]{$\leq 2^\tau$ many})
      {$X\in \{u_1(a),\emptyset\}\times\dots\times\{u_\tau(a),\emptyset\}$ with at least~$y_a$ non-empty entries}
      {

      \ForEach{$t\in\set{\tau}$}
      {
        Set~$x_t'\setto x_t-|\{a'\in A\mid u_t(a')=X_t\land X_t\neq \emptyset\}|$,
        $u_t'\setto u_t-u_t(a)$, and
        $k_t'\setto k_t-|X_t|$\;
      }
      \ForEach{$a'\in A'\setto A\setminus \{a\}$}{
        Set~$y_{a'}'\setto y_{a'}-\sum_t |u_t(a')\cap X_t|$\;
      }
      \FMain{$(A',C,U',(k_t')_t,(x_t')_t,(y_a')_a)$}\;
      }
  }
  \caption{FPT-algorithm for~\pesmlaAcr{} parameterized by~$k+\tau$ on input~$(A,C,U,(k_t)_t,(x_t)_t,(y_a)_a)$.}
  \label{alg:ktau:smla}
\end{algorithm}

We first show the following Turing-reduction for \pesmlaAcr{}.
The idea of this reduction is then used to obtain
fixed-parameter tractability through \cref{alg:ktau:smla}.

\begin{lemma}[\appref{lemma:turred:smla}]
 \label{lemma:turred:smla}
 Let $I\ceq (A,C,U,(k_t)_t,(x_t)_t,(y_a)_{a\in A})$ be an instance with
 at least one agent~$a\in A$ with~$y_a>0$ and 
 at least one fingerprint with at least~$y_a$ non-empty entries.
 Then,
 $I$ is a \yes-instance of \pesmlaAcr{}
 if and only if 
 for any agent $a\in A$ with~$y_a>0$ and 
 at least one fingerprint with at least~$y_a$ non-empty entries,
 one of the instances~$I^1,\dots,I^p$
 is a \yes-instance,
 where~$X^1,\dots,X^p\in \{u_1(a),\emptyset\}\times\dots\times\{u_\tau(a),\emptyset\}$ are the fingerprints with at least~$y_a$ non-empty entries 
 and for each~$q\in\set{p}$,
 $I^q=(A',C,U',(k_t^q)_t,(x_t^q)_t,(y_a^q)_{a\in A'})$,
 where~$A'\ceq A\setminus \{a\}$, 
 and 
 \begin{compactitem}
  \item for each $t\in\set{\tau}$, 
    $x_t'\ceq x_t-|\{a'\in A\mid u_t(a')=X_t^q\land X_t^q\neq \emptyset\}|$,
    $u_t'\ceq u_t-u_t(a)$, 
    $k_t'\ceq k_t-|X_t^q|$, 
    and 
  \item for each $a'\in A'$,
    $y_{a'}'\ceq y_{a'}-\sum_{t=1}^\tau |u_t(a')\cap X_t^q|$.
 \end{compactitem}
\end{lemma}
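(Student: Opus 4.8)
The plan is to prove the equivalence by exhibiting, in each direction, an explicit translation between a solution of $I$ and a solution of one of the branched instances $I^q$, where the only change is that the pre-placed part $X^q$ of agent~$a$'s fingerprint is peeled off or glued back on. Throughout I would use the basic fact that every satisfaction of a \emph{fixed} agent~$a$ at a level~$t$ is caused solely by $u_t(a)$ lying in $C_t$, so the satisfaction pattern that any solution induces on~$a$ is exactly a fingerprint in $\{u_1(a),\emptyset\}\times\dots\times\{u_\tau(a),\emptyset\}$.

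\RD{} I would fix an arbitrary valid branching agent~$a$ and a solution $(C_1,\dots,C_\tau)$ to~$I$, and read off the fingerprint $X_t\ceq u_t(a)$ if $u_t(a)\in C_t$ and $X_t\ceq\emptyset$ otherwise. Since the solution satisfies $y_a\leq \sum_t|u_t(a)\cap C_t|$, this~$X$ has at least $y_a$ non-empty entries and hence equals some~$X^q$. I would then claim that $C_t^q\ceq C_t\setminus X_t^q$ solves~$I^q$ and verify the three families of constraints level by level: the size bound follows from $|C_t^q|=|C_t|-|X_t^q|\leq k_t-|X_t^q|=k_t'$; for the committee score I would split the agents satisfied by~$C_t$ into those nominating~$X_t^q$ (exactly the quantity subtracted to form~$x_t'$, nonzero only when $X_t^q\neq\emptyset$) and the rest, the latter being precisely the agents of~$A'$ satisfied by~$C_t^q$ under $u_t'=u_t-u_t(a)$; and for each $a'\in A'$ the agent score drops by exactly $\sum_t|u_t(a')\cap X_t^q|$, matching~$y_{a'}'$.

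\LD{} Using that the hypothesis guarantees at least one valid~$a$, I would take a solution $(C_t^q)_t$ of the corresponding~$I^q$ and set $C_t\ceq C_t^q\cup X_t^q$. First I would note that we may assume \wilog{} that $u_t(a)\notin C_t^q$, since $u_t(a)$ is nominated by no agent in~$I^q$ (it was erased by $u_t-u_t(a)$) and therefore contributes nothing, so deleting it only helps the size bound. The three verifications then run in reverse: sizes grow back by at most $|X_t^q|$, the $x_t'$-many already-counted agents reappear, each $a'$ regains exactly its $\sum_t|u_t(a')\cap X_t^q|$ satisfactions, and agent~$a$ itself is satisfied at the $\geq y_a$ levels where $X_t^q=u_t(a)$ and, by the \wilog{} assumption, \emph{only} there, giving~$a$ agent score at least~$y_a$ as required in the egalitarian \pesmlaAcr{}.

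The main obstacle I anticipate is the bookkeeping for agents~$a'$ that collide with~$a$, that is $u_t(a')=u_t(a)$ at some level: I must treat separately the case $X_t^q=u_t(a)$ (where such~$a'$ are satisfied ``for free'', are removed from~$u_t'$, and are counted in both~$x_t'$ and~$y_{a'}'$) and the case $X_t^q=\emptyset$ (where $u_t(a)\notin C_t$ is forced, so such~$a'$ simply lose that nomination and the $x_t'$, $k_t'$ updates are trivial). Keeping these two cases mutually consistent across all three constraint types---in particular confirming that~$a$ is counted in the $x_t'$-subtraction exactly when $X_t^q\neq\emptyset$, with no double counting---is the delicate part; the remainder is routine level-by-level summation.
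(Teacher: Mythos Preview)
Your proposal is correct and follows essentially the same approach as the paper: peel off the fingerprint~$X^q$ via $C_t'\ceq C_t\setminus X_t^q$ in the forward direction, glue it back via $C_t\ceq C_t'\cup X_t^q$ in the backward direction (after the same \wilog{} removal of $u_t(a)$ from~$C_t'$), and verify the size, committee-score, and agent-score constraints level by level. Your anticipated bookkeeping obstacle for colliding agents is handled exactly as you outline, and the paper treats it the same way (indeed somewhat more tersely).
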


\appendixproof{lemma:turred:smla}
{
  \begin{proof}
  \RD{}
  Let~$\calC=(C_1,\dots,C_\tau)$ be a solution to~$I$.
  Let~$X^q\in \{u_1(a),\emptyset\}\times\dots\times\{u_\tau(a),\emptyset\}$ be the fingerprint
  of~$a$ regarding~$\calC$.
  We show that~$I^q$ is a \yes-instance
  by claiming that~$\calC'=(C_1',\dots,C_\tau')$ is a solution to~$I^q$,
  where~$C_t'\ceq C_t\setminus X_t^q$.
  Clearly,
  $|C_t'|=|C_t|-|X_t^q|\leq k_t-|X_t^q|=k_t'$.
  For~$a'\in A'$,
  we have~$\sum_{t=1}^\tau |u_t(a)\cap C_t'| = \sum_{t=1}^\tau (|u_t(a)\cap C_t| - |u_t(a)\cap X_t^q|) \geq y_a- \sum_{t=1}^\tau |u_t(a')\cap X_t^q| =y_a'$.
  Finally,
  for the~$x$-scores,
  note that
  $\sum_{c\in C_t'} |u^{-1}_t(c)| 
  = \sum_{c\in C_t} |u^{-1}_t(c)| - |\{a'\in A\mid u_t(a')=X_t^q\land X_t^q\neq \emptyset\}|
  \geq x_t- |\{a'\in A\mid u_t(a')=X_t^q\land X_t^q\neq \emptyset\}|
  = x_t'$.
  
  \LD{}
  Let~$\calC'=(C_1',\dots,C_\tau')$ be a solution to~$I^q$.
  We can assume that for every~$t\in\set{\tau}$ with~$u_t(a)\neq \emptyset$,
  $C_t'\cap u_t(a)=\emptyset$,
  since no agent nominates candidate~$u_t(a)$ by construction of~$u_t'$.
  By construction,
  we know that $a$'s fingerprint~$X^q$ contains at least $y_a$ non-empty entries.
  We claim that~$\calC\ceq (C_1,\dots,C_\tau)$ with~$C_t\ceq C_t'\cup X_t^q$
  is a solution to~$I$.
  Clearly,
  $|C_t|=|C_t'|+|X_t^q|\leq k_t'+|X_t^q|=k_t$.
  We know that the~$y$-score of~$a$ is fulfilled.
  For~$a'\in A'$,
  we have~$\sum_{t=1}^\tau |u_t(a)\cap C_t| = \sum_{t=1}^\tau (|u_t(a)\cap C_t'| + |u_t(a)\cap X_t^q|) \geq y_a'+ \sum_{t=1}^\tau |u_t(a')\cap X_t^q| =y_a$.
  Finally,
  for the~$x$-scores,
  note that
  $\sum_{c\in C_t} |u^{-1}_t(c)| 
  = \sum_{c\in C_t'} |u^{-1}_t(c)| + |\{a'\in A\mid u_t(a')=X_t^q\land X_t^q\neq \emptyset\}|
  \geq x_t'+ |\{a'\in A\mid u_t(a')=X_t^q\land X_t^q\neq \emptyset\}|
  = x_t$.
  \end{proof}
}

\begin{proposition}[\appref{prop:ktau:smla}]
 \label{prop:ktau:smla}
 \cref{alg:ktau:smla} is correct 
 and 
 runs in \FPT{}-time
 regarding~$k+\tau$.
\end{proposition}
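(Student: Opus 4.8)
The plan is to prove \cref{prop:ktau:smla} in two parts: correctness and running time. The correctness is essentially an inductive consequence of the Turing-reduction established in \cref{lemma:turred:smla}, so the main work is to set up the right induction and to verify the base cases handled directly by the algorithm.

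For correctness, I would argue by induction on the number of agents with a strictly positive demand~$y_a$, or more robustly on $|A|$ itself. The base case is when every agent~$a\in A$ satisfies~$y_a\leq 0$: here the $y$-constraints are all met vacuously, so the only thing to check is the $x$-constraints. The algorithm's inner loop does exactly this, greedily taking, for each level~$t$ with~$x_t>0$, the $k_t$ most-nominated candidates and testing whether their committee score reaches~$x_t$. I would justify that this greedy choice is optimal: since the levels are now independent (no agent still needs a successful nomination), each level can be solved on its own, and for a single level the best possible committee score under a size-$k_t$ budget is obtained precisely by picking the~$k_t$ candidates with the largest values of~$|u^{-1}_t(c)|$. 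This is a short exchange/greedy argument. For the recursive case, \cref{lemma:turred:smla} tells us that the instance is a \yes-instance if and only if one of the branches~$I^1,\dots,I^p$ is; the algorithm selects an agent~$a$ with~$y_a>0$ admitting a fingerprint with at least~$y_a$ non-empty entries, enumerates exactly those fingerprints, and recurses on the correspondingly updated instances, which are precisely the~$I^q$ of the lemma. I also need to address the early \textbf{break} conditions: if some~$k_t<0$ the instance is infeasible (a committee cannot have negative size), and if some agent has~$y_a>0$ but no fingerprint with at least~$y_a$ non-empty entries, then that agent can never reach her demand, so the instance is correctly rejected. Invoking the lemma together with the induction hypothesis on the recursive calls (each removes one agent, strictly decreasing~$|A|$) then yields correctness.

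For the running time, I would bound the size of the search tree and the work per node. Each recursive call fixes the fingerprint of one agent across all~$\tau$ levels and deletes that agent, so the recursion depth is at most~$n$; however, the branching itself must be bounded by a function of~$k+\tau$ alone to obtain \FPT{}-time in~$k+\tau$. The key observation is that whenever the algorithm branches on an agent~$a$, it commits~$|X_t^q|\leq 1$ to each nonempty level, decreasing the corresponding~$k_t$. Since the total budget is~$\sum_t k_t\leq k\cdot\tau$ and every branching node with a chosen fingerprint of at least~$y_a\geq 1$ non-empty entries consumes at least one unit of budget (the committees only grow, and each non-empty entry~$X_t^q=u_t(a)$ forces a candidate into~$C_t$ whose budget~$k_t$ then drops), the number of branching levels along any root-to-leaf path is at most~$k\cdot\tau$. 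At each such node the branching factor is the number of fingerprints, at most~$2^\tau$, giving a search tree of size~$(2^\tau)^{k\cdot\tau}=2^{k\cdot\tau^2}$. The per-node work — computing the updated~$x_t',k_t',u_t',y_{a'}'$ and, at leaves, the greedy $x$-check — is clearly polynomial in~$n+m+\tau$. Multiplying gives the claimed~$2^{k\cdot\tau^2}\cdot\poly(n+m+\tau)$ bound, which is \FPT{} in~$k+\tau$, matching \cref{thm:ktau}.

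The main obstacle I anticipate is pinning down the branching bound rigorously. The subtlety is that the naive recursion depth is~$n$, not~$f(k+\tau)$, so one must argue that only $O(k\cdot\tau)$ of the nodes on any path can actually branch: after the budgets~$k_t$ are exhausted the only remaining option is the all-empty fingerprint, which is unavailable for agents with~$y_a>0$, so such agents trigger a \textbf{break} rather than further branching. I would therefore formalize a potential-function argument on~$\sum_t k_t$, showing that every genuine branch strictly decreases this potential while the non-branching (linear) part of the recursion merely peels off already-satisfied agents in polynomial total time. Care is also needed to confirm that the greedy single-level $x$-test in the base case is valid even after the utility functions have been repeatedly modified by the~$u_t-u_t(a)$ operation, but since that operation only relabels or empties nominations without creating spurious ones, the counts~$|u^{-1}_t(c)|$ remain well-defined and the greedy argument carries over unchanged.
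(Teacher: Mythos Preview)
Your proposal is correct and follows essentially the same approach as the paper: correctness by induction on~$|A|$ via \cref{lemma:turred:smla} (with the greedy per-level check in the base case), and the running-time bound via the observation that every recursive call commits at least one candidate, decreasing~$\sum_t k_t$, so the branching depth is at most~$k\cdot\tau$ with branching factor at most~$2^\tau$. Your worry about a ``non-branching linear part'' peeling off already-satisfied agents is unnecessary: in \cref{alg:ktau:smla} every recursive call sits inside the fingerprint \texttt{foreach} and is made only for an agent with~$y_a>0$, so there is no such phase and the potential~$\sum_t k_t$ strictly decreases at every node; otherwise the analysis matches the paper's.
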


\appendixproof{prop:ktau:smla}
{
  \begin{proof}
  Clearly,
  in each branch,
  we decrease~$k_t$ for at least one~$t\in\set{\tau}$.
  Thus,
  we have at most~$k\cdot\tau$ branches,
  where in each we check for at most~$2^\tau$ fingerprints.

  The correctness follows from~\cref{lemma:turred:smla}.
  The algorithms returns \yes{}
  if and only if 
  input instance
  $I=(A,C,U,(k_t)_t,(x_t)_t,(y_a)_a)$ is a \yes-instance.
  The forward direction is clear:
  if it returns true in some branch,
  then,
  due to~\cref{lemma:turred:smla},
  $I$ is a \yes-instance.
  Hence, 
  we prove the backward direction.
  
  We prove via induction on the number of agents.
  If $A=\emptyset$,
  then the algorithm returns \yes{}
  if and only if 
  $k_t\geq0$ and~$x_t\leq 0$ for all~$t$,
  which is correct.
  So let~$|A|\geq 1$.
  Let~$I$ be a \yes-instance.
  
  Since~$I$ is a \yes-instance,
  there is no~$t\in\set{\tau}$ with~$k_t<0$.
  If for all~$a\in A$ we have~$y_a\leq 0$,
  and there is a~$t\in\set{\tau}$ with~$x_t>0$,
  then there are at least~$k_t$ candidates available to make a score of at least~$x_t$.
  In this case,
  the algorithm returns~\yes{}.
  Finally,
  again since~$I$ is a \yes{}-instance,
  there is no~$a\in A$ with~$y_a>0$ 
  but no fingerpint with at least~$y_a$ non-empty entries.

  If the algorithm did not report \yes{} yet,
  it chooses a fingerprint~$X$ and recurses on instance~$I'=(A',C,U',(k_t')_t,(x_t')_t,(y_a')_a)$.
  Due to the \cref{lemma:turred:smla},
  we know that
  $I$ is a \yes-instance if and only if~$I'$ is a \yes-instance.
  Since~$I'$ has one agent less,
  by induction,
  the algorithm returns~\yes{}.
  \end{proof}
}

\noindent
The proof for \pefmlaAcr{} works very similarly 
and is hence deferred to the 
appendix.

In terms of kernelization,
we cannot improve much further:
Presumably,
there is no problem kernel of size polynomial in~$k+\tau$.
In fact,
we have the following stronger result.

\begin{theorem}[\appref{thm:nopkmtaux}]
 \label{thm:nopkmtaux}
 \UnlessPK,
 \smlaAcr{} admits no problem kernel of size polynomial in~$\tau$,
 even if~$m=2$ and~$x=0$.
\end{theorem}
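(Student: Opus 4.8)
The plan is to rule out a polynomial kernel via an \ORcroco{} into \smlaAcr{} from \prob{SAT}, which is \NP-hard; by the standard framework~\cite{bluebook} such a composition establishes the claim \unlessPK. As polynomial equivalence relation I would call two \prob{SAT} instances equivalent iff they have the same number~$N$ of variables and the same number~$M$ of clauses (all malformed strings forming one extra class); within a class I pad the input list $I_1,\dots,I_T$ with trivially unsatisfiable $N$-variable, $M$-clause formulas so that $T=2^s$ with $s=\lceil\log_2 T\rceil$. The composed \smlaAcr{} instance~$I^*$ uses only the two candidates $c_\true,c_\false$ (hence $m=2$), sets $x=0$, $k=1$, and $y=1$, and has $\tau=N+2s$ levels: the first $N$ \emph{main} levels (one per variable $x_1,\dots,x_N$, shared by all instances) and $2s$ \emph{selector} levels. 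Since $N\le\max_i|I_i|$ and $s=O(\log T)$, the parameter satisfies $\tau=\poly(\max_i|I_i|+\log T)$, exactly as an \ORcroco{} requires.

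On the main levels I reuse \cref{constr:constmxyk}: for every instance $I_i$ and clause $K_j^{(i)}$ I add a clause-agent~$a_{i,j}$ that, in main level~$l$, nominates $c_\true$ (resp.~$c_\false$) if $x_l$ occurs unnegated (resp.~negated) in $K_j^{(i)}$, and $\emptyset$ otherwise. On the selector levels I use the complement encoding $B_s$ from the preliminaries to address instances: identifying selector level $p\in\set{2s}$ with coordinate~$p$ of $B_s$, I let $a_{i,j}$ nominate $c_\true$ in exactly those selector levels~$p$ with $B_s(i)_p=0$ (and $\emptyset$ elsewhere). Finally, for each $l\in\set{s}$ I add a \emph{gadget agent}~$g_l$ nominating $c_\false$ in selector levels~$l$ and~$s+l$, and $\emptyset$ elsewhere.

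Write $S$ for the set of selector levels in which a solution picks $c_\true$ (at most one candidate per level, as $k=1$). The design yields that $a_{i,j}$ is satisfiable on some selector level iff $S\not\subseteq\{p:B_s(i)_p=1\}$. In the forward direction, from a satisfying assignment of some $I_{\hat i}$ I set $S=\{p:B_s(\hat i)_p=1\}$, pick $c_\false$ on the complementary level of each pair $\{l,s+l\}$ (satisfying every~$g_l$), and pick on the main levels the candidates encoding the assignment; then each $a_{\hat i,j}$ is satisfied on a main level, and every $a_{i,j}$ with $i\neq\hat i$ is satisfied on a selector level, because the complement encoding guarantees a coordinate~$p$ with $B_s(\hat i)_p=1$ and $B_s(i)_p=0$. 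Conversely, since each $g_l$ needs $y=1$, any solution places $c_\false$ on level~$l$ or~$s+l$, so $S$ never contains both coordinates of a pair; hence $S$ extends to $\{p:B_s(\hat i)_p=1\}$ for some~$\hat i$, the agents $a_{\hat i,j}$ are \emph{not} satisfiable on any selector level, and therefore the main-level picks (a truth assignment) must satisfy every clause of $I_{\hat i}$, i.e.\ some input instance is a \yes-instance. The ``at least $y$'' semantics of \smlaAcr{} is what makes this work, as non-selected agents may be satisfied several times.

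I expect the main obstacle to be precisely the step used in the backward direction: preventing a solution from ``freeing'' all instances at once by switching $c_\true$ on in too many selector levels, which without a safeguard would turn $I^*$ into a trivial \yes-instance irrespective of the inputs. The gadget agents~$g_l$ are the crucial device confining $S$ to within the support of a single codeword $B_s(\hat i)$, thereby forcing at least one input instance to be solved genuinely on the main levels. Verifying that this confinement is both necessary and sufficient, and that no interaction between main and selector levels lets a clause-agent cheat, is the part that needs the most care.
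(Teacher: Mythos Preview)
Your proof is correct, but it takes a noticeably different route from the paper's own.

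The paper gives an \ORcroco{} \emph{from \smlaAcr{} into itself} (using that \smlaAcr{} with $m=2$, $k=1$, $x=0$, $y=1$ is already \NP-hard by \cref{thm:constmxyk}). It simply takes the disjoint union of the $p=2^q$ input agent sets, keeps their original $\tau$ levels, and appends only $q=\log p$ selector levels. In selector level~$i$, every agent of instance~$j$ nominates $c_0$ or $c_1$ according to bit~$i$ of the $q$-bit encoding of~$j$. Because each agent nominates \emph{some} candidate in every selector level, any sequence of singleton committees on those $q$ levels automatically leaves the agents of exactly one instance unsatisfied there---no auxiliary gadget agents are needed.

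Your construction instead composes from \prob{SAT}, uses $2s$ selector levels with the complement encoding, lets clause-agents nominate only $c_\true$ on selector levels, and then adds the gadget agents~$g_l$ to prevent a solution from setting $c_\true$ on both halves of a complementary pair. This works, and your backward argument (extending the set~$S$ to a codeword support) is sound. The trade-off: your version is self-contained in that it starts from \prob{SAT} rather than relying on the earlier \NP-hardness of the restricted \smlaAcr{}, but it pays for this with twice as many selector levels and the extra agent gadget. The paper's two-valued nomination trick on the selector levels is the slicker device; it is worth noting as an alternative that avoids your ``main obstacle'' entirely.
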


\appendixproof{thm:nopkmtaux}
{
\begin{construction}
 \label{constr:nopkmtaux}
 Let~$I_1,\dots,I_p$ with~$p=2^q$, 
 $q\in\N$,
 be $p$-instances of \smlaAcr{} with 
 $I_i=(A_i=\{a_i^1,\dots,a_i^{|A_i|}\},
 C=\{c_0,c_1\},U_i=\{u_i^1,\dots,u_i^{\tau}\},k=1,x=0,y=1)$ for all~$i\in\set{p}$.
 Construct an instance~$I\ceq (A,C,U,k,x,y)$
 as follows.
 Let~$A=A_1\cup\dots\cup A_p$.
 Let~$U=(u_1,\dots,u_\tau,u_{\tau+1},\dots,u_{\tau+q})$
 where for~$i\in\set{\tau}$ we have~$u_i(a_j^{\ell})=u_j^i(a_j^{\ell})$
 and for~$i\in\set{q}$ we have
 \[ u_{\tau+i}(a_j^{\ell}) = 
    \begin{cases} 
    c_0,& \text{if there is a $0$ at the $i$th position} \\ &\text{of the $q$-binary encoding of~$j$}\\
    c_1,& \text{if there is a $1$ at the $i$th position} \\ &\text{of the $q$-binary encoding of~$j$}. 
    \end{cases}
 \]
 This finishes the construction.
 \cqed
\end{construction}

  \begin{proof}[Proof of~\cref{thm:nopkmtaux}]
  Let~$I_1,\dots,I_p$ with~$p=2^q$, 
  $q\in\N$,
  be $p$-instances of \smlaAcr{} with 
  $I_i=(A_i=\{a_i^1,\dots,a_i^{|A_i|}\},
  C=\{c_0,c_1\},U_i=\{u_i^1,\dots,u_i^{\tau}\},k=1,x=0,y=1)$ for all~$i\in\set{p}$.
  Construct instance~$I\ceq (A,C,U,k,x,y)$
  using~\cref{constr:nopkmtaux}.
  We prove that any of $I_1,\dots,I_p$ is a \yes-instance
  if and only if
  $I$ is a \yes-instance.
  
  \RD{}
  Let~$r\in\set{p}$ be such that~$I_r$ is a \yes-instance.
  Let~$(C_1,\dots,C_\tau)$ be a solution to~$I_r$.
  Let~$b_1\dots b_q$ be the $q$-binary encoding of~$r$.
  Let~$C_{\tau+i}=\{c_{b_i\oplus 1}\}$ for all~$i\in\set{q}$,
  where~$\oplus$ denotes the \emph{exclusive or}.
  We claim that~$\calC=(C_1,\dots,C_{\tau+q})$ is solution.
  Since~$(C_1,\dots,C_\tau)$ is a solution to~$I_r$,
  we know that each agent from~$A_r$ is satisfied within the first~$\tau$ layers.
  Now note that the $q$-binary encoding of every~$r'\neq r$ 
  has at least one bit in common with the encoding~$b_1\oplus 1 \dots b_q\oplus 1$.
  Thus,
  each agent of every instance different to~$I_r$ 
  is satisfied in one of the last~$q$ levels.
  Thus,
  $\calC$ is a solution to~$I$.
  
  \LD{}
  Let~$\calC=(C_1,\dots,C_{\tau+q})$ be a solution to~$I$
  such that~$C_i\neq \emptyset$ for all~$i\in\set{\tau+q}$.
  Let~$C_{\tau+i}=\{c_{b_i}\}$ for all~$i\in\set{q}$.
  Since~$p=2^q$,
  there is an instance~$I_r$ with $q$-binary encoding~$b_1\oplus 1\dots b_q\oplus 1$.
  Thus,
  no agent from~$A_r$ is satsified in any of the last~$q$ levels,
  and thus must be satisfied by~$\calC'\ceq (C_1,\dots,C_\tau)$.
  Hence,
  $\calC'$ is a solution to~$I_r$.
  \end{proof}
}

\begin{remark}\label{rem:mtaux}
 We leave open whether the composition can be adapted for \fmlaAcr{}.
 For this, the last~$q$ levels forming the selection gadget 
 must be changed or extended
 such that each agent gets the same score over the selection.
 \rqed
\end{remark}

\toappendix
{
  \subsubsection{\fmlaAcr{} is in FPT w.r.t.~\texorpdfstring{$k+\tau$}{k plus tau}}
  \label{proof:sssec:ktau:fmla}

  We first show the following Turing-reduction for \pefmlaAcr{}.

  \begin{lemma}
  \label{lemma:turred:fmla}
  Let $I\ceq (A,C,U,(k_t)_t,(x_t)_t,(y_a)_{a\in A})$ be an instance with
  at least one agent~$a\in A$ with~$y_a>0$ and 
  at least one fingerprint with exactly~$y_a$ non-empty entries.
  Then,
  $I$ is a \yes-instance of \pefmlaAcr{}
  if and only if 
  for any agent $a\in A$ with~$y_a>0$ and 
  at least one fingerprint with exactly~$y_a$ non-empty entries,
  one of the instances~$I^1,\dots,I^p$
  is a \yes-instance,
  where~$X^1,\dots,X^p\in \{u_1(a),\emptyset\}\times\dots\times\{u_\tau(a),\emptyset\}$ are the fingerprints with exactly~$y_a$ non-empty entries 
  and for each~$q\in\set{p}$,
  $I^q=(A',C,U',(k_t^q)_t,(x_t^q)_t,(y_a^q)_{a\in A'})$,
  where~$A'=A\setminus \{a\}$, 
  and 
  \begin{compactitem}
    \item for each $t\in\set{\tau}$, 
      $x_t'\ceq x_t-|\{a'\in A\mid u_t(a')=X_t^q\land X_t^q\neq \emptyset\}|$,
      $u_t'\ceq u_t-u_t(a)$, 
      $k_t'\ceq k_t-|X_t^q|$, 
      and 
    \item for each $a'\in A'$,
      $y_{a'}'\ceq y_{a'}-\sum_{t=1}^\tau |u_t(a')\cap X_t^q|$.
  \end{compactitem}
  \end{lemma}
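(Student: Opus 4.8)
The plan is to mirror the proof of \cref{lemma:turred:smla} almost verbatim, since the only structural difference between \pesmlaAcr{} and \pefmlaAcr{} is that constraint~\eqref{prob:pesmla:y} on the agent scores becomes an equality. Accordingly, I would work with the fingerprints that have \emph{exactly}~$y_a$ non-empty entries (rather than at least~$y_a$) and replace every ``$\geq$'' appearing in the $y$-score bookkeeping by ``$=$''. The committee-size and $x$-score parts of the argument carry over unchanged, because those two constraints are literally identical in both problems.

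For the forward direction I would start from a solution $\calC=(C_1,\dots,C_\tau)$ to~$I$ and let $X^q$ be the fingerprint of the distinguished agent~$a$ with respect to~$\calC$, \ie $X_t^q=u_t(a)$ if $a$ is satisfied in level~$t$ and $X_t^q=\emptyset$ otherwise. Since $\calC$ witnesses that $a$ scores \emph{exactly}~$y_a$, the fingerprint $X^q$ has exactly~$y_a$ non-empty entries and is therefore among the listed $X^1,\dots,X^p$. Setting $C_t'\ceq C_t\setminus X_t^q$, the committee sizes and $x$-scores are bounded just as in \cref{lemma:turred:smla}. For an agent $a'\in A'$ the key point is that $X_t^q\subseteq C_t$ whenever it is non-empty (because $a$'s candidate is elected there), so $u_t(a')\cap C_t'=\big(u_t(a')\cap C_t\big)\setminus X_t^q$; summing over levels gives $\sum_t|u_t(a')\cap C_t'|=\sum_t|u_t(a')\cap C_t|-\sum_t|u_t(a')\cap X_t^q|=y_{a'}-\sum_t|u_t(a')\cap X_t^q|=y_{a'}'$, where the middle equality is precisely the equitable constraint of~$I$. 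Note this yields \emph{equality}, not merely an inequality, which is exactly what $I^q$ demands.

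For the backward direction I would take a solution $\calC'=(C_1',\dots,C_\tau')$ to~$I^q$ and reuse the observation that, by construction of $u_t'=u_t-u_t(a)$, no agent nominates $u_t(a)$ in~$I^q$, so I may assume $C_t'\cap u_t(a)=\emptyset$; this keeps $C_t'$ and $X_t^q$ disjoint. Setting $C_t\ceq C_t'\cup X_t^q$, disjointness gives $|u_t(a')\cap C_t|=|u_t(a')\cap C_t'|+|u_t(a')\cap X_t^q|$ for every $a'\in A'$, so the $y$-scores add back up to equality $y_{a'}$. The same disjointness ensures that $a$ is satisfied precisely in the levels with $X_t^q\neq\emptyset$, hence exactly~$y_a$ times, never more.

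The main obstacle, and the only genuine departure from the egalitarian case, is controlling \emph{over}-satisfaction: in \cref{lemma:turred:smla} an agent being satisfied more often than required is harmless, whereas here it would break the equality constraint. I expect this to be handled cleanly by the disjointness of $C_t'$ and $X_t^q$ in both directions, guaranteeing that no level contributes an unaccounted extra nomination; the \wilog{} assumption $C_t'\cap u_t(a)=\emptyset$ in the backward direction is what makes this airtight.
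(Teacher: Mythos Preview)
Your proposal is correct and follows essentially the same approach as the paper's own proof: both directions are handled by setting $C_t'=C_t\setminus X_t^q$ (forward) and $C_t=C_t'\cup X_t^q$ (backward), with the \wilog{} assumption $C_t'\cap u_t(a)=\emptyset$ justified via~$u_t'$, and the only change from \cref{lemma:turred:smla} being that the $y$-score chain of (in)equalities becomes a chain of equalities. Your explicit remark that the fingerprint of~$a$ under a solution to~$I$ has \emph{exactly}~$y_a$ non-empty entries, and your explicit use of the disjointness of $C_t'$ and $X_t^q$ to rule out over-satisfaction, are in fact spelled out more carefully than in the paper.
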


  \begin{proof}
  \RD{}
  Let~$\calC=(C_1,\dots,C_\tau)$ be a solution to~$I$.
  Let~$X^q\in \{u_1(a),\emptyset\}\times\dots\times\{u_\tau(a),\emptyset\}$ be the fingerprint
  of~$a$ regarding~$\calC$.
  We show that~$I^q$ is a \yes-instance
  by claiming that~$\calC'=(C_1',\dots,C_\tau')$ is a solution to~$I^q$,
  where~$C_t'\ceq C_t\setminus X_t^q$.
  Clearly,
  $|C_t'|=|C_t|-|X_t^q|\leq k_t-|X_t^q|=k_t'$.
  For~$a'\in A'$,
  we have~$\sum_{t=1}^\tau |u_t(a)\cap C_t'| = \sum_{t=1}^\tau (|u_t(a)\cap C_t| - |u_t(a)\cap X_t^q|) = y_a- \sum_{t=1}^\tau |u_t(a')\cap X_t^q| =y_a'$.
  Finally,
  for the~$x$-scores,
  note that
  $\sum_{c\in C_t'} |u^{-1}_t(c)| 
  = \sum_{c\in C_t} |u^{-1}_t(c)| - |\{a'\in A\mid u_t(a')=X_t^q\land X_t^q\neq \emptyset\}|
  \geq x_t- |\{a'\in A\mid u_t(a')=X_t^q\land X_t^q\neq \emptyset\}|
  = x_t'$.
  
  \LD{}
  Let~$\calC'=(C_1',\dots,C_\tau')$ be a solution to~$I^q$.
  We can assume that for every~$t\in\set{\tau}$ with~$u_t(a)\neq \emptyset$,
  $C_t'\cap u_t(a)=\emptyset$,
  since no agent nominates candidate~$u_t(a)$ by construction of~$u_t'$.
  By construction,
  we know that $a$'s fingerprint~$X^q$ contains exactly~$y_a$ non-empty entries.
  We claim that~$\calC\ceq (C_1,\dots,C_\tau)$ with~$C_t\ceq C_t'\cup X_t^q$
  is a solution to~$I$.
  Clearly,
  $|C_t|=|C_t'|+|X_t^q|\leq k_t'+|X_t^q|=k_t$.
  We know that the~$y$-score of~$a$ is fulfilled.
  For~$a'\in A'$,
  we have~$\sum_{t=1}^\tau |u_t(a)\cap C_t| = \sum_{t=1}^\tau (|u_t(a)\cap C_t'| + |u_t(a)\cap X_t^q|) = y_a'+ \sum_{t=1}^\tau |u_t(a')\cap X_t^q| =y_a$.
  Finally,
  for the~$x$-scores,
  note that
  $\sum_{c\in C_t} |u^{-1}_t(c)| 
  = \sum_{c\in C_t'} |u^{-1}_t(c)| + |\{a'\in A\mid u_t(a')=X_t^q\land X_t^q\neq \emptyset\}|
  \geq x_t'+ |\{a'\in A\mid u_t(a')=X_t^q\land X_t^q\neq \emptyset\}|
  = x_t$.
  \end{proof}

  In contrast to~\pesmlaAcr{},
  if for an agent~$a$ we have~$y_a=0$,
  we know that no candidate of~$a$ can additionally be elected into any committee.
  Hence,
  we have the following.

  \begin{rrule}\label{rrule:fmlaktau}
  If for agent $a\in A$ we have that~$y_a=0$,
  then for all $t\in\set{\tau}$
  set $u_t\ceq u_t-u_t(a)$,
  and delete~$a$.
  \end{rrule}

  We point out that~\cref{rrule:fmlaktau}
  will asymptotically not speed up our algorithm;
  however,
  we see potential for improvement of any practical running time.

  \begin{algorithm}[t]
    \SetKwFunction{FMain}{main}
    \SetKwProg{Fn}{function}{:}{}
    \FMain{$(A,C,U,(k_t)_t,(x_t)_t,(y_a)_a)$}\;
    \Return{\no}\;
    \Fn{\FMain{$(A,C,U,(k_t)_t,(x_t)_t,(y_a)_a)$}}{
      \If{$k_t<0$ for some~$t$}{\textbf{break}}
      \If{$A\neq\emptyset$ and~$\exists a\in A:\: y_a<0$}{\textbf{break}}
      Apply \cref{rrule:fmlaktau} exhaustively\;
      \If{$A=\emptyset$ or~$\forall a\in A\colon y_a=0$}{
        \eIf{$\exists t\in\set{\tau}:\: x_t> 0$}{\textbf{break}}{\Return{\yes}}
      }
      \If{$\exists a\in A$ with~$y_a>0$ but no fingerpint with exactly~$y_a$ non-empty entries}{\textbf{break}}
      Let~$a\in A$ be such that~$y_a>0$ at least one fingerprint with exactly~$y_a$ %
      non-empty entries\;
      \ForEach(\tcp*[f]{$\leq 2^\tau$ many})
        {$X\in \{u_1(a),\emptyset\}\times\dots\times\{u_\tau(a),\emptyset\}$ with exactly $y_a$ %
         non-empty entries}
        {
        \ForEach{$t\in\set{\tau}$}
          {
            Set~$x_t'\setto x_t-|\{a'\in A\mid u_t(a')=X_t\land X_t\neq \emptyset\})$\;
            Set~$u_t'\setto u_t-u_t(a)$\;
            Set~$k_t'\setto k_t-|X_t|$\;
          }
          Set~$A'\setto A\setminus \{a\}$\;
          \ForEach{$a'\in A'$}{
            Set~$y_{a'}'\setto y_{a'}-\sum_t |u_t(a')\cap X_t|$\;
          }
        \FMain{$(A',C,U',(k_t')_t,(x_t')_t,(y_a')_a)$}\;
        }
    }%
    \caption{FPT-algorithm for~\pefmlaAcr{} parameterized by~$k+\tau$ on input~$(A,C,U,(k_t)_t,(x_t)_t,(y_a)_a)$.}
    \label{alg:ktau:fmla}
  \end{algorithm}

  \begin{proposition}
  \cref{alg:ktau:fmla} is correct 
  and 
  runs in \FPT{}-time
  regarding~$k+\tau$.
  \end{proposition}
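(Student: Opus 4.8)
The plan is to mirror the correctness and running-time analysis we gave for the egalitarian variant in \cref{prop:ktau:smla}, replacing the role of \cref{lemma:turred:smla} by its equitable counterpart \cref{lemma:turred:fmla} and additionally accounting for \cref{rrule:fmlaktau}. As before, \cref{alg:ktau:fmla} branches by guessing, for a not-yet-satisfied agent~$a$ (one with~$y_a>0$), its fingerprint~$X\in\{u_1(a),\emptyset\}\times\dots\times\{u_\tau(a),\emptyset\}$; the only structural change is that we now restrict to fingerprints with \emph{exactly}~$y_a$ non-empty entries, reflecting the equality constraint, and that agents whose quota has dropped to~$0$ are eliminated up front by \cref{rrule:fmlaktau}.

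For the running time, I would argue that every recursive call strictly reduces the total committee budget~$\sum_{t\in\set{\tau}} k_t$. Indeed, whenever we recurse on a fingerprint~$X$ we set~$k_t'\setto k_t-|X_t|$, and since the chosen~$X$ has exactly~$y_a\geq 1$ non-empty entries we have~$\sum_{t=1}^\tau |X_t| = y_a\geq 1$, so at least one~$k_t$ decreases. As the initial budget is at most~$k\cdot\tau$, the recursion depth is bounded by~$k\cdot\tau$, while the branching factor at each node is the number of admissible fingerprints, at most~$2^\tau$. Hence the search tree has at most~$(2^\tau)^{k\cdot\tau}=2^{k\cdot\tau^2}$ nodes, and together with the polynomial work per node (exhaustively applying \cref{rrule:fmlaktau} and updating the scores) this yields the~$2^{k\cdot\tau^2}\cdot\poly(n+m+\tau)$ bound claimed in \cref{thm:ktau}, and in particular fixed-parameter tractability \wpb{}~$k+\tau$.

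For correctness I would prove, by induction on~$|A|$, that the algorithm returns~\yes{} if and only if its input is a \yes-instance. The forward direction is immediate from \cref{lemma:turred:fmla} and the correctness of \cref{rrule:fmlaktau}: any \yes-answer is produced either at the base case or after a chain of equivalence-preserving reductions and branchings. For the backward direction, in the base case (reached when~$A=\emptyset$ or all~$y_a=0$ after exhaustive application of \cref{rrule:fmlaktau}) a \yes-instance must satisfy~$k_t\geq 0$ and~$x_t\leq 0$ for all~$t$, which is exactly the condition the algorithm tests; note that here, unlike the egalitarian case, no further candidates can be added to boost an~$x_t$-score, since the surviving candidates are nominated by no remaining agent. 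In the inductive step we pick an agent~$a$ with~$y_a>0$; since the instance is \yes{}, $a$ admits a fingerprint with exactly~$y_a$ non-empty entries, so the algorithm does not break prematurely, and by \cref{lemma:turred:fmla} the instance is \yes{} if and only if one of the derived instances~$I^1,\dots,I^p$ is \yes{}. Each~$I^q$ has one fewer agent, so by the induction hypothesis the algorithm reports~\yes{} on the corresponding branch.

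The main obstacle—and the genuine difference from the egalitarian case—is justifying \cref{rrule:fmlaktau} and wiring it correctly into the induction. Here the exactness of the quota is essential: once~$y_a=0$, agent~$a$ must receive no further nomination, so no candidate~$u_t(a)$ may enter~$C_t$; deleting~$a$ together with all of its nominations (via~$u_t\setto u_t-u_t(a)$, which simultaneously forbids these candidates for every other agent sharing them in level~$t$) therefore preserves the solution set. I would verify that this deletion interacts correctly with the~$x_t$-constraints—the forbidden candidates simply cease to contribute to any committee score, so no feasible solution is lost or gained—and that the rule terminates, each application removing one agent, so that it can be applied exhaustively in polynomial time before each branching step.
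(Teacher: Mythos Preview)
Your proposal is correct and follows essentially the same approach as the paper: the running-time bound via the decreasing total budget~$\sum_t k_t$ (depth~$\leq k\tau$, branching~$\leq 2^\tau$) and the correctness argument by induction on~$|A|$ using \cref{lemma:turred:fmla} are identical to the paper's proof. Your additional paragraph justifying \cref{rrule:fmlaktau} and its interaction with the~$x_t$-constraints goes slightly beyond the paper, which states the rule without proof and explicitly notes it is only a practical optimization, not needed for the asymptotic bound.
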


  \begin{proof}
  Clearly,
  in each branch,
  we decrease~$k_t$ for at least one~$t\in\set{\tau}$.
  Thus,
  we have at most~$k\cdot\tau$ branches,
  where in each we check for at most~$2^\tau$ fingerprints.
  
  The correctness follows from~\cref{lemma:turred:fmla}.
  The algorithms returns \yes{}
  if and only if 
  input instance
  $I=(A,C,U,(k_t)_t,(x_t)_t,(y_a)_a)$ is a \yes-instance.
  The forward direction is clear:
  if it returns true in some branch,
  then,
  due to~\cref{lemma:turred:fmla},
  $I$ is a \yes-instance.
  Hence, 
  we prove the backward direction.
  
  We prove via induction on the number of agents.
  If $A=\emptyset$,
  then the algorithm returns \yes{}
  if and only if 
  $k_t\geq0$ and~$x_t\leq 0$ for all~$t$,
  which is correct.
  So let~$|A|\geq 1$.
  Let~$I$ be a \yes-instance.
  
  Since~$I$ is a \yes-instance,
  there is no~$t\in\set{\tau}$ with~$k_t<0$
  and no agent~$a\in A$ with~$y_a<0$.
  If for all~$a\in A$ we have~$y_a= 0$,
  then there is no~$t\in\set{\tau}$ with~$x_t>0$.
  In this case,
  the algorithm returns~\yes{}.
  Finally,
  again since~$I$ is \yes{},
  there is no~$a\in A$ with~$y_a>0$ 
  but no fingerpint with exactly~$y_a$ non-empty entries.

  If the algorithm did not report \yes{} yet,
  it chooses a fingerprint~$X$ and recurses on instance~$I'=(A',C,U',(k_t')_t,(x_t')_t,(y_a')_a)$.
  Due to the \cref{lemma:turred:fmla},
  we know that
  $I$ is a \yes-instance if and only if~$I'$ is a \yes-instance.
  Since~$I'$ has one agent less,
  by induction,
  the algorithm returns~\yes{}.
  \end{proof}
}

\subsection{Tractability Borders Regarding~\texorpdfstring{$n$}{n}}

We first show that both problems become fixed-parameter tractable when
parameterized by the number~$n$ of agents.

\begin{theorem}
  \label{thm:n}
 Each of \smlaAcr{} and~\fmlaAcr{} is \fpt{} \wpb{}~$n$.
\end{theorem}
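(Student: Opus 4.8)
The plan is to formulate the decision problem as an integer linear program (ILP) whose number of variables is bounded by a function of $n$ alone, and then invoke the classical result that ILP feasibility is \fpt{} in the number of variables (Lenstra's theorem). The whole argument is essentially identical for \smlaAcr{} and \fmlaAcr{}, the only difference being a single relation symbol in one family of constraints.

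First I would preprocess: by \cref{lem:agentmanycandidates} we may assume $m\leq n$, and by \cref{obs:trivialy} we may assume $0<y<\tau$. The key structural observation is that within a single level~$t$ a committee~$C_t$ influences the instance only through the set $S_t\ceq\{a\in A\mid u_t(a)\in C_t\}$ of agents it satisfies: constraint~\eqref{prob:smla:x} depends only on~$|S_t|$, and constraint~\eqref{prob:smla:y} depends only on in how many levels a given agent lies in the respective~$S_t$. Since a candidate satisfies exactly the agents nominating it, the achievable sets~$S_t$ are precisely the unions of the nomination classes $\{a\mid u_t(a)=c\}$, $c\in C$, and the smallest committee realizing a given union~$S$ consists of one candidate per class, i.e.\ has size equal to the number of distinct candidates nominated by agents in~$S$. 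Hence a level admits pattern~$S$ with a committee of size at most~$k$ iff~$S$ is such a union of cost at most~$k$; I call~$S$ \emph{feasible} for the level if additionally $|S|\geq x$ (so that the per-level bound is met).

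Next I would group the levels by their nomination profile. By \cref{cor:pkntau} there are at most~$(n+1)^n$ distinct profiles, so the levels fall into at most~$r\leq(n+1)^n$ types~$\theta_1,\dots,\theta_r$; let~$n_j$ denote the number of levels of type~$\theta_j$. Two levels of the same type are interchangeable, so a solution is determined, up to permuting equal-type levels, by how many levels of each type realize each satisfaction pattern. For each type~$\theta_j$ and each pattern~$S\subseteq A$ feasible for~$\theta_j$ I introduce a nonnegative integer variable~$z_{j,S}$ counting the levels of type~$\theta_j$ whose committee realizes~$S$; there are at most~$(n+1)^n\cdot 2^n$ such variables, a function of~$n$ only. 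The constraints are $\sum_{S}z_{j,S}=n_j$ for every type~$j$ (every level receives a feasible pattern, which already encodes both $|C_t|\leq k$ and the bound~$x$), together with, for every agent~$a\in A$, the requirement $\sum_{j}\sum_{S:\,a\in S}z_{j,S}\geq y$ for \smlaAcr{} and $=y$ for \fmlaAcr{}. One checks that the original instance is a \yes-instance iff this ILP is feasible, and a concrete committee sequence can be read off from any feasible point. As the number of variables is bounded by a function of~$n$ while all coefficients and right-hand sides are bounded by the input size, Lenstra's algorithm decides feasibility in $f(n)\cdot\poly$ time, establishing \fpt{}-membership \wpb{}~$n$.

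The main obstacle is not the ILP machinery but the bookkeeping that collapses each per-level decision to a ``pattern plus cost'' and, crucially, the argument that grouping levels by type is lossless. This is what keeps the variable count at~$f(n)$ rather than $f(n)\cdot\tau$, and hence what makes the parameter~$n$ (rather than $n+\tau$) suffice; once the type-grouping and the pattern-cost characterisation are in place, both the correctness of the ILP and the appeal to Lenstra's theorem are routine.
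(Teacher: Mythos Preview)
Your proposal is correct and follows essentially the same approach as the paper: reduce to $m\le n$ via \cref{lem:agentmanycandidates}, group the levels into at most $(n+1)^n$ types, introduce one integer variable per (type, per-level choice) pair, and solve the resulting ILP with Lenstra's algorithm. The only cosmetic difference is that you index variables by satisfaction patterns $S\subseteq A$ whereas the paper indexes by valid committees $\phi\subseteq C$; since both are at most $2^n$ per type and encode the same information, the two formulations are interchangeable.
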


\begin{proof}
 Due to~\cref{lem:agentmanycandidates},
 we know that there are at most~$n$ candidates,
 and 
 at most~$\nu\ceq (n+1)^n$ pairwise different nomination profiles.
 That is,
 we have at most~$\nu$ types,
 each having at most~$\binom{n}{k}$ committees of size~$k$ and score of at least~$x$
 (we call such a committee \emph{valid} subsequently; 
 note that we can check whether a committee is valid in linear time).
 
 Let~$x_{t,\phi}$ denote the variable for type~$t$ and valid committee~$\phi$. 
 Let~$n_t$ denote the number of type-$t$ profiles.
 For an agent~$a\in A$,
 let~$\calX_a$ denote the set of tuples~$(t,\phi)$ where valid committee~$\phi$ respects~$a$'s nomination in level~$t$.
 We then have the following integer programming constraints for \smlaAcr{}:
 
 \begin{align}
   \forall a\in A \colon && 
   \sum\nolimits_{(t,\phi)\in \calX_a} x_{t,\phi} &\geq y \label{al:n:atleasty:smla} 
   \\ 
   \forall t \colon && \sum\nolimits_{\text{valid~}\phi} x_{t,\phi} &= n_t \nonumber\\
   \forall t,\text{valid~}\phi \colon && 0\leq x_{t,\phi} &\leq n_t \nonumber
 \end{align}
 
 As to \citet{Lenstra83}, 
 having~$2^{O(n\tlog{n})}$ variables and constraints,
 and numbers upper bounded by~$\tau$,
 the result follows.
 For \fmlaAcr{},
 we replace~``$\geq$'' with~``$=$'' in~\eqref{al:n:atleasty:smla}.
\end{proof}

\noindent
\cref{thm:n} is in fact tight in the following sense:
decreasing~$n$ by~$x$ gives a useless parameter (presumably).

\begin{theorem}[\appref{thm:nmx}]
 \label{thm:nmx}
 \smlaAcr{} is \NP-hard even if~$n-x= 2$ and~$m=3$,
 and \fmlaAcr{} is \NP-hard even if~$n-x=3$ and~$m=2$.
\end{theorem}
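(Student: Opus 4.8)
The plan is to give two separate polynomial-time many-one reductions, one for each problem, both built around a single idea: make almost every agent satisfied in every level \emph{for free} through a ``parking'' candidate, so that the committee-score threshold $x=n-c$ is met automatically, while the genuine combinatorial difficulty sits entirely in the agent-score constraint. The crux is that the per-level dissatisfaction budget $n-x$ is tiny (two, resp.\ three), so I would reduce from a \emph{bounded-occurrence} variant of the source problem, ensuring that the number of agents who can become dissatisfied in any single level never exceeds the budget.

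For \smlaAcr{} with $n-x=2$ and $m=3$ I would reduce from a bounded-occurrence \prob{3-SAT} in which every literal occurs at most twice (NP-hard; obtainable from ordinary \prob{3-SAT} via the standard implication-cycle copy gadget). Use candidates $c_1,c_2,c_3$, set $k\ceq 2$, create one level $\ell_i$ per variable $x_i$, one agent $a_r$ per clause $K_r$, and three ``blocker'' agents nominating $c_3$ in every level. In $\ell_i$ a clause agent nominates $c_1$ if $x_i$ occurs positively in its clause, $c_2$ if negatively, and $c_3$ (parking) otherwise. The blockers force $c_3\in C_{\ell_i}$ in every solution, since excluding $c_3$ would dissatisfy at least three agents; hence, with $|C_{\ell_i}|\le 2$, exactly one of $c_1,c_2$ is dropped, and reading ``$c_1$ kept'' as $x_i=\true$ turns each level into an honest binary choice. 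Because each literal occurs at most twice, dropping $c_1$ (resp.\ $c_2$) dissatisfies at most two agents, so $x=n-2$ holds. Setting $y\ceq\tau-2$, an agent $a_r$ is satisfied in all $\tau-3$ of its parking levels plus once per \emph{true} literal, so it reaches the threshold iff $K_r$ has at least one true literal, giving the equivalence in both directions.

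For \fmlaAcr{} with $n-x=3$ and $m=2$ I would reduce from a bounded-occurrence monotone \xOTsatAcr{} in which each variable occurs in at most three clauses (the monotone variant is already used in the paper, cf.~\cite{Schmidt2010d}, and the occurrence bound is secured by a standard occurrence-reduction). Now $c_1$ is the always-included parking candidate and $c_2$ the ``signal'' candidate: with $k\ceq 2$ and $n-x=3$ one argues that $c_1$ must lie in every committee (excluding it would dissatisfy almost all agents), so the only genuine choice in level $\ell_i$ is whether to also include $c_2$, read as $x_i=\true$. A clause agent $a_r$ nominates $c_2$ in the three levels of its variables and $c_1$ everywhere else; excluding $c_2$ in $\ell_i$ dissatisfies exactly the (at most three) agents whose clause contains $x_i$, meeting $x=n-3$. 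With $y\ceq\tau-2$, the agent-score of $a_r$ equals $\tau-3$ plus its number of true literals, and the \emph{equality} form of constraint~\eqref{prob:smla:y} forces this to be exactly one true literal per clause, i.e.\ precisely the \xOTsatAcr{} condition.

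The binding tension, and the step I expect to be the main obstacle, is reconciling the vanishing per-level budget with the need for genuine binary choices. Concretely I expect two delicate points: first, verifying that the parking and blocker candidates really force the intended committee shape in every level (so each level is an honest variable), which is where the tightness $x\in\{n-2,n-3\}$ is used; and second, securing the right bounded-occurrence source problem so that at most $n-x$ agents are ever dissatisfied in a single level, handling in particular variables that occur with only one polarity and clauses sharing variables. The $(\RD)$ direction of each reduction is routine once the assignment-to-committee translation is fixed, whereas the $(\LD)$ direction requires checking that \emph{every} feasible committee sequence, including degenerate committees such as $C_{\ell_i}=\{c_3\}$ that assign no truth value, still induces a consistent satisfying (resp.\ exactly-one) assignment.
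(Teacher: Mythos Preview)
Your approach is essentially the paper's: it, too, takes \cref{constr:constmxyk}, replaces the empty nomination by a ``parking'' candidate~$c^*$, sets $k=2$, $y=\tau-2$, and $x=M-2$ (resp.\ $x=M-3$), and reduces from \prob{3-SAT} with each variable occurring exactly twice in each polarity \cite{DarmannD21} for \smlaAcr{} and from monotone \xOTsatAcr{} with each variable occurring exactly three times \cite{Schmidt2010d} for \fmlaAcr{}. The only differences are that the paper uses \emph{exact} occurrence counts (so the degenerate committee~$\{c^*\}$ you worry about never meets the score~$x$ and the $(\Leftarrow)$ direction becomes immediate) and that it dispenses with your blocker agents, since the $M-4$ (resp.\ $M-3$) clause agents nominating~$c^*$ already force~$c^*$ into every committee.
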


\noindent
The 
construction behind the proof of~\cref{thm:nmx}
is very similar to~\cref{constr:constmxyk}
but with no empty nominations
(we hence defer also the construction to the appendix). %

\appendixproof{thm:nmx}
{

  \begin{construction}
  \label{constr:nmx}
  Let~$(X,\phi)$ be an instance of \prob{3-SAT} 
  where each clause contains exactly three literals and 
  every variable appears as a literal exactly two times negated and exactly two times unnegated,
  which remains \NP-hard~\cite{DarmannD21}.
  Construct~$(A,C,U,k,x.y)$ as follows.
  Let~$A\ceq\{a_1,\dots,a_M\}$
  and~$C\ceq \{c_\true,c_\false,c^*\}$. 
  Let~$\tau\ceq N$,
  where each level~$i$ correspond to variable~$x_i$.
  In level~$i$,
  agent~$j$ nominates
  $c_\true$, if~$x_i$ appears unnegated in~$K_j$,
  $c_\false$, if~$x_i$ appears negated in~$K_j$,
  and~$c^*$ otherwise.
  Let~$k\ceq 2$, 
  $y\ceq \tau-2$,
  and~$x\ceq M-2$.
  \cqed
  \end{construction}

  \begin{proof}
    The proof is analogous to the proof of~\cref{thm:constmxyk},
    with the difference that~$c^*$ is in every committee 
    (since~$x\geq M-4$)
    and hence every clause-agent is satisfied in~$\tau-3=y-1$ levels by~$c^*$.
    
    For~\fmlaAcr{},
    we reduce from \xOTsatAcr{} where every clause is of size exactly three and every variable appears exactly three times and never negated~\cite[Theorem~29]{Schmidt2010d}.
    Note that we hence can drop~$c_\false$ from~$C$.
    With~$x\ceq M-3$ the correctness now follows.
  \end{proof}
}

The FPT-algorithm behind~\cref{thm:n}
is not running in single-exponential time.
Combining~$n$ with~$y$ gives 
single-exponential running time.

\begin{theorem}%
  \label{thm:ny}
 Each of \smlaAcr{} and~\fmlaAcr{}
 is solvable in~$O((y+1)^n\cdot 2^n\cdot n\cdot \tau)$ time.
\end{theorem}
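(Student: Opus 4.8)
The plan is to give a level-by-level dynamic program whose state records, for each agent, how many of her nominations have been successful so far, capped at~$y$. First I would invoke \cref{lem:agentmanycandidates} to assume $|C|\le n$, so that at every level~$t$ there are at most $2^{n}$ candidate subsets and hence at most $2^{n}$ committees of size at most~$k$ to consider. The crucial structural observation is that in a single level each agent~$a$ contributes at most one to her score, since $u_t(a)$ is a single candidate and $|u_t(a)\cap C_t|\in\{0,1\}$; consequently the running score of every agent lies in $\{0,1,\dots,\tau\}$ and increases by at most one per level.

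Next I would define the DP. A \emph{state} is a vector $g\in\{0,1,\dots,y\}^{A}$, where $g_a$ is the (capped) number of successful nominations of~$a$ so far; there are exactly $(y+1)^{n}$ states. I process the levels $t=1,\dots,\tau$, maintaining the set of reachable states. For the transition at level~$t$ I enumerate all at most $2^{n}$ committees $C_t\subseteq C$ with $|C_t|\le k$, discard those whose committee score $|\{a\in A\mid u_t(a)\in C_t\}|$ is below~$x$, and for each surviving committee compute its satisfied set $S\ceq\{a\in A\mid u_t(a)\in C_t\}$. From a reachable state~$g$ this yields the successor~$g'$ with $g'_a=g_a+1$ for $a\in S$ and $g'_a=g_a$ otherwise. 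For \smlaAcr{} I cap each entry at~$y$ (so $g'_a=\min\{g_a+1,y\}$), reflecting the ``at least~$y$'' requirement; for \fmlaAcr{} I instead discard the transition whenever some $a\in S$ already has $g_a=y$, reflecting the ``exactly~$y$'' requirement. After level~$\tau$ I answer \yes{} exactly if some reachable state equals the all-$y$ vector.

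For the running time, note that each of the $\tau$ levels iterates over the $(y+1)^{n}$ reachable states and, for each, over the at most $2^{n}$ valid committees, spending $O(n)$ per transition to build the successor; precomputing the satisfied sets and committee scores per level is also within $O(2^{n}\cdot n)$ and is dominated. This yields the claimed $O((y+1)^{n}\cdot 2^{n}\cdot n\cdot\tau)$ bound. The main thing to argue carefully is correctness of the two acceptance rules: for \smlaAcr{} I must check that capping at~$y$ loses nothing, which follows from monotonicity---more successful nominations never hurt the ``$\ge y$'' goal and the acceptance test only asks whether each agent reached~$y$; for \fmlaAcr{} I must check that pruning overflowing branches is sound, i.e.\ that a global ``exactly~$y$'' solution never needs to temporarily exceed~$y$, which holds because the per-level increments are nonnegative and no entry can decrease. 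A minor point is that different committees may induce the same satisfied set~$S$; one may deduplicate to the at most $2^{n}$ distinct patterns, but this does not change the asymptotics.
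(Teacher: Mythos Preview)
Your proposal is correct and follows essentially the same dynamic-programming approach as the paper: track a per-agent score vector in~$\{0,\dots,y\}^n$ across levels and, at each level, iterate over the at most~$2^n$ relevant committees (the paper counts fingerprints rather than invoking \cref{lem:agentmanycandidates}, but the bound is the same). Your explicit treatment of the two cases---capping at~$y$ for \smlaAcr{} and pruning on overflow for \fmlaAcr{}---is exactly what the paper leaves as ``not hard to adapt,'' and your correctness arguments for both are sound.
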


{
  \begin{proof}
  We give the proof for \fmlaAcr{},
  and it is not hard to adapt it for \smlaAcr{}.
  We use dynamic programming,
  where table
  
  \begin{itemize}
    \item[{$D[t,\mathbf{y}]$}] is true if and only if there are committees
      $C_1,\dots,C_t$ each with committee size at most~$k$ and a score of at least~$x$ such that the score of each agent~$a_i$ at time~$t$ sums up to exactly~$y_i$, 
      where~$\mathbf{y}=(y_1,\dots,y_n)$.
  \end{itemize}
  
  Set~$D[t,\mathbf{y}]$,
  where $t>1$ and each entry of~$\mathbf{y}$ is at most~$y$,
  to true if and only if 
  there is a set-to-true~$D[t-1,\mathbf{y'}]$ 
  and a size-at-most~$k$ score-at-least~$x$ committee~$C'\subseteq C$ with respect to~$u_t$
  such that~$\mathbf{y'}+\vec{c}=\mathbf{y}$,
  where~$\vec{c}=(c_1,\dots,c_n)\in\{0,1\}^n$ with
    $c_i = 0 \iff u_t(a_i)\cap C'=\emptyset$
  is called the fingerprint of~$C'$ regarding level~$t$.
  Set
  \[ D[1,\vec{c}~] \ceq \begin{cases} \true, & \text{if there is a size-at-most~$k$} \\ & \text{score-at-least~$x$ committee~$C'\subseteq C$} \\ & \text{with fingerprint~$\vec{c}$ regarding level~$1$, and} \\
  \false,& \text{otherwise}.
  \end{cases} \]
  Return~\yes{}
  if the entry~$D[\tau,(y_1,\dots,y_n)]$ is set to true,
  where~$y_1=y_2=\dots=y_n=y$,
  and~\no{} otherwise.
  
  The running time of filling the table is clear:
  We have at most~$\tau\cdot (y+1)^n$ entries
  and at most~$2^n$ different committees per level.
  We defer the correctness proof to the appendix.
  \end{proof}
  \appendixproof{thm:ny}
  {
    \begin{proof}
    We next prove the equivalence by induction on~$t$.
    The base case~$t=1$ is clear.
    Now assume that the equivalence holds true for up to~$t-1$.
    
    \RD{}
    By construction,
    there is a set-to-true~$D[t-1,\mathbf{y'}]$
    with~$\mathbf{y'}=(y_1',\dots,y_n')$
    and a size-at-most~$k$ score-at-least~$x$ committee~$C'\subseteq C$ with respect to~$u_t$
    such that~$\mathbf{y'}+\vec{c}=\mathbf{y}$,
    where~$\vec{c}=(c_1,\dots,c_n)$ is the fingerprint of~$C'$.
    By induction,
      there are committees
      $C_1,\dots,C_{t-1}$ each with committee size at most~$k$ and a score of at least~$x$ such that the score of each agent~$a_i$ at time~$t$ sums up to exactly~$y_i'$.
      Thus,
      for the sequence $C_1,\dots,C_{t-1},C_t\ceq C'$ we have that
      each committee has size at most~$k$ and a score of at least~$x$ such that the score of each agent~$a_i$ at time~$t$ sums up to exactly~$y_i'+c_i$.

    \LD{}
    Let $C_1,\dots,C_t$ be a sequence of committees each with size at most~$k$ and a score of at least~$x$ 
    such that the score of each agent~$a_i$ at time~$t$ sums up to exactly~$y_i$,
    and let~$\mathbf{y}=(y_1,\dots,y_n)$.
    Let~$\vec{c}=(c_1,\dots,c_n)$ be the fingerprint of~$C_t$.
    By induction, 
    we know that~$D[t-1,\mathbf{y}-\vec{c}]$ is true.
    By construction,
    since~$C_t$ is a size-at-most~$k$ score-at-least~$x$ committee,
    we know that $D[t,\mathbf{y}]$ is also set to true.
  \end{proof}
  }
}

\subsection{Efficient and Effective Data Reduction Regarding~\texorpdfstring{$n$}{n} and~\texorpdfstring{$y$}{y}}
\label{ssec:eep}
\appendixsection{ssec:eep}

While \smlaAcr{}
admits 
a problem kernel of size polynomial in~$n+y$,
\fmlaAcr{} does not presumably.
Moreover,
for \smlaAcr{},
dropping~$y$ also leads to kernelization lower bounds.
We have the following.

\begin{theorem}[\apprefX{app:ssec:eep}]
  \label{thm:eep}
  \UnlessPK,
  \begin{inparaenum}[(i)]
   \item \smlaAcr{}
    admits no problem kernel of size polynomial in~$n$,
    even if~$m=2$ and~$k=1$, and
   \item \fmlaAcr{}
    admits no problem kernel of size polynomial in~$n$,
    even if~$m=2$,
    $k=1$, and~$y=1$.
   \item \smlaAcr{}
    admits 
    a problem kernel of size polynomial in~$n+y$.
  \end{inparaenum}
\end{theorem}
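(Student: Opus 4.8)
The plan is to treat the three parts separately: parts~(i) and~(ii) are kernelization lower bounds, which I would establish by \croco{}s, whereas part~(iii) is a kernelization upper bound, which I would obtain through data reduction rules that bound the number~$\tau$ of levels. For the lower bounds~(i) and~(ii) I would mirror the \ORcroco{} behind \cref{thm:nopkmtaux} (\cref{constr:nopkmtaux}), but now compose so that the \emph{number of agents} stays bounded rather than the number of levels. Using a polynomial equivalence relation that fixes the number of agents and of levels, I may assume that the~$p=2^q$ source instances share a common agent set~$A_0$; composing them side by side in the \emph{level} dimension then yields a \smlaAcr{}/\fmlaAcr{} instance with only~$|A_0|+O(q)=|A_0|+O(\log p)$ agents but~$\Theta(p)$ levels, as required for an~$n$-lower bound, while keeping the candidate set~$\{c_0,c_1\}$ (so~$m=2$) and~$k=1$. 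For \fmlaAcr{} with~$y=1$ (part~(ii)) the equitable ``exactly once'' requirement is the crucial leverage: it forbids an agent from collecting satisfaction in more than one block, so a binary selection gadget on~$O(\log p)$ additional selector levels can force all agents to be satisfied within a single common block, making the composed instance a \yes-instance if and only if some source instance is. For \smlaAcr{} (part~(i)) the ``at least~$y$'' requirement instead \emph{permits} aggregating satisfaction across blocks, so I would either reinforce the gadget by an upper-bounding construction or switch to an \ANDcroco{}, using the per-level score threshold~$x$ to pin the committee choices block by block.

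For the upper bound~(iii) I would first normalize the instance: by \cref{lem:agentmanycandidates} we have~$m\le n$, by \cref{obs:trivialy} we may assume~$1\le y\le \tau-1$, and by \cref{obs:kgeqmSMLA} we may assume~$k<m\le n$; hence~$k,x\le n$, and the only quantity not yet polynomial in~$n+y$ is the number~$\tau$ of levels. The goal is therefore a reduction rule that bounds~$\tau$ by~$\poly(n+y)$. The key structural fact is that the egalitarian constraints are monotone, so over-satisfaction never hurts: deleting a level is safe in one direction, because any solution of the smaller instance extends to the original by padding the deleted level with its highest-scoring size-$\le k$ committee, provided that committee already reaches score~$x$ (which I first verify by a rule returning \no{} otherwise). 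For the reverse direction I would retain only a representative subset of levels, capping the number of retained levels of each nomination profile at~$O(n\cdot y)$: from a single profile each candidate-group needs at most~$y$ successful nominations, and~$O(ny)$ copies of the profile realize any such coverage demand while each committee is padded with the largest groups to meet the threshold~$x$.

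The step I expect to be the main obstacle is bounding the number of \emph{distinct} profiles that must be retained by~$\poly(n+y)$: a priori there are up to~$(n+1)^n$ profiles, far too many to keep one level of each, so capping copies per profile alone is insufficient. I would attack this by merging coverage-equivalent profiles and discarding profiles whose set of achievable satisfaction patterns is dominated by that of retained ones, so that only~$\poly(n+y)$ essentially different profiles survive. Making this representative-set and domination argument precise---reconciling the global per-agent coverage requirement with the per-level score threshold---is the crux of part~(iii); the analogous crux for part~(i) is defeating cross-block aggregation without an explicit upper bound on satisfactions, which is precisely where the equitable semantics of part~(ii) make the composition comparatively clean.
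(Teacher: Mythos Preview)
Your plan for part~(ii) matches the paper: an \ORcroco{} from \xOTsatAcr{} with a binary selection gadget on~$O(\log p)$ extra agents, exploiting that ``$=y$'' with~$y=1$ forbids cross-block aggregation. That is exactly the paper's construction.

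For part~(i), however, your plan has a real gap that the paper avoids by taking a different route. You correctly diagnose the obstacle: in \smlaAcr{} the ``$\ge y$'' constraint lets an agent accumulate satisfaction across several blocks, so a plain \ORcroco{} cannot isolate one source instance. Your suggested fixes---``reinforce the gadget by an upper-bounding construction'' or ``switch to an \ANDcroco{} using~$x$''---are not worked out, and neither is straightforward under the restriction~$m=2$, $k=1$. The paper does \emph{not} cross-compose here at all. Instead it gives a polynomial-parameter transformation from \prob{Multicolored Clique} parameterized by~$k\log|V|$ (known to be \WKone-hard and to admit no polynomial kernel \unlessPK). The transformation has one ``color'' agent per color class, one ``pair'' agent per pair of colors, and~$2\log N$ ``bit'' agents per ordered color pair, giving~$n$ polynomial in~$k\log N$; levels encode vertices and edges, and~$y=M-1$ is large (which is consistent with~(iii)). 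This sidesteps the aggregation issue entirely, and it is a concrete idea your proposal is missing.

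For part~(iii), you are heading into the obstacle you yourself flag---up to~$(n+1)^n$ distinct profiles---and ``merge coverage-equivalent profiles / dominated profiles'' is not made precise. The paper's argument bypasses the profile count completely via an \emph{agent}-centric notion rather than a \emph{profile}-centric one. Call an agent~$a$ \emph{critical} if the number of levels in which some valid committee contains~$a$'s nominee is at most~$n\cdot y$, and \emph{non-critical} otherwise. Two rules suffice: if every agent is non-critical, return \yes{} (a simple greedy/induction shows~$n\cdot y$ valid levels per agent are enough to satisfy everyone); otherwise, delete any level whose valid committees satisfy only non-critical agents (safe because non-critical agents still have $\ge n\cdot y$ usable levels left). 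After exhaustive application, every remaining level is one of the at most~$n\cdot y$ ``critical'' levels of some critical agent, so~$\tau\le n^2\cdot y$. This is the missing key idea; your representative-set/domination programme is not needed.
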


We only discuss~(iii) briefly
(refer to the 
appendix 
for the remaining details).

\toappendix
{
\begin{proposition}%
 \label{thm:smla:nopkn}
 \UnlessPK,
 \smlaAcr{}
 admits no problem kernel of size polynomial in~$n$,
 even if~$m=2$ and~$k=1$.
\end{proposition}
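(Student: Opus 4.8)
The plan is to rule out a polynomial problem kernel via an OR-cross-composition into \smlaAcr{} parameterized by~$n$, in the same spirit as \cref{constr:nopkmtaux} (which bounds~$\tau$), but arranged so that it is the \emph{number of agents}, rather than the number of levels, that stays small. As source problem I would take \smlaAcr{} restricted to~$m=2$ and~$k=1$, which is \NP-hard by \cref{thm:constmxyk}, and as polynomial equivalence relation I would group well-formed instances by their number~$n'$ of agents, their number~$\tau'$ of levels, and the value of~$x$ (collapsing malformed or non-equivalent inputs into a single dummy no-instance). Given~$t=2^q$ such equivalent instances~$I_1,\dots,I_t$, the goal is to produce a single instance with~$m=2$, $k=1$, and~$n \le n' + O(\log t) \le \poly(\max_i |I_i| + \log t)$ that is a \yes-instance if and only if some~$I_i$ is.

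The construction I would attempt reuses the two candidates~$\{c_0,c_1\}$ across all instances and, crucially, \emph{shares} the agent set: relabel every~$I_i$ so that its agents are a common set~$a_1,\dots,a_{n'}$. The~$t$ instances are then laid out as~$t$ disjoint \emph{blocks} of~$\tau'$ levels, where in block~$i$ agent~$a_j$ nominates exactly as in~$I_i$. On top of this I would append a logarithmic-size \emph{selection gadget}: $O(q)$ additional levels together with~$O(q)$ identifier agents that carry the binary encoding of a block index, and I would let~$y$ grow. This last point is forced: since \cref{thm:eep}(iii) gives a polynomial kernel for~$n+y$, any lower bound for~$n$ alone must drive~$y$ up. The intended behaviour is that a solution is forced to \emph{commit} to a single block~$r$, to satisfy the shared agents entirely within block~$r$ (thus solving~$I_r$), while the selection gadget and the inflated~$y$-demand account for everything outside block~$r$. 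For correctness one argues both directions as in \cref{constr:nopkmtaux}: forward, if~$I_r$ is a \yes-instance, play its solution inside block~$r$, set the gadget committees to the encoding of~$r$, and verify the identifier agents and the~$y$-demand; backward, read off the block index spelled out by the gadget and show the shared agents could only have met their~$y$-demand within that single block.

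\textbf{The main obstacle} is exactly the commitment step, and it is sharper here than in \cref{constr:nopkmtaux}. All constraints of \smlaAcr{} are monotone lower bounds (``$\ge x$'' per level, ``$\ge y$'' per agent), so they cannot by themselves forbid a solution from collecting coverage for different agents from different blocks: the naive shared-agent layout only expresses ``every agent is satisfied in \emph{some} block'' (a union-coverage condition), which is strictly weaker than ``some single block solves everything''. In \cref{constr:nopkmtaux} this was circumvented by giving each instance its own agents and using one identifier string to hand out free satisfaction to all-but-one instance; that device is precisely what inflates the agent count to~$\sum_i n_i'$, so it cannot be transposed directly. The heart of the proof is therefore to design the gadget so that, using only~$O(\log t)$ extra agents and two candidates, the growing~$y$-demand can be met for every agent only by drawing it from a single block, i.e.\ to simulate ``commit to one instance'' with monotone constraints. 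I expect this to require using~$y$ as a tight counter/budget so that splitting even one agent's demand across two blocks becomes infeasible, which is the delicate counting argument to get right (cf.\ \cref{rem:mtaux}, where an analogous adaptation is flagged as nonobvious). Should the OR-commitment prove intractable to enforce, the natural fallback is an AND-cross-composition (which also precludes polynomial kernels), exploiting that the egalitarian objective is itself a conjunction over agents; but this reintroduces the problem of bounding each block's per-agent contribution, so I would still regard the single-block commitment as the crux.
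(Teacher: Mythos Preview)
Your plan correctly identifies the crux but does not resolve it. With purely monotone ``$\geq$'' constraints and shared agents, the composed instance you describe will be a \yes-instance whenever the \emph{union} over blocks covers all agents enough times; nothing in your gadget forces all of an agent's successes to come from one block. Your tight-budget intuition (``splitting even one agent's demand across two blocks becomes infeasible'') cannot work in the shared-agent layout: raising~$y$ only makes satisfaction harder globally, it does not penalise \emph{mixing} blocks, because an agent's contributions from different blocks simply add. The AND fallback runs into the symmetric problem of upper-bounding per-block contribution. So as stated the proposal is a sketch with the decisive step missing.

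The paper avoids cross-composition altogether and instead gives a polynomial-parameter transformation from \textsc{Multicolored Clique} parameterized by~$k\log|V|$, which is \WKone-hard and hence admits no polynomial kernel \unlessPK. The point is that only~$O(k^2\log N)$ agents are needed: one agent~$a_i$ per colour class, one agent~$a_{\{i,i'\}}$ per colour pair, and for each ordered pair~$(i,i')$ a pack of~$2\log N$ ``bit'' agents~$a_{i,i'}^\ell$. Vertices and edges become \emph{levels} (so~$\tau$ is large), $C=\{c^*,c_\true\}$, $k=1$, $x=1$, and~$y=M-1$ where~$M$ is the common number of edges between two colour classes. The commitment you were after is enforced by the tight budget applied to \emph{dedicated} counting agents, not to the shared ones: agent~$a_i$ nominates~$c^*$ in exactly~$M$ levels (the~$N$ vertex levels of colour~$i$ plus~$M{-}N$ filler levels), so the demand~$y=M-1$ allows~$c_\true$ in at most one of them, i.e.\ at most one vertex is selected per colour; the edge agents work analogously. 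The bit agents then force the selected edge between colours~$i,i'$ to be incident to the selected vertices via the usual binary-encoding trick. This is precisely the ``$y$ as a tight counter'' idea you gestured at, but it works because each counting agent owns its own pool of levels rather than sharing them across instances---which is exactly the structure a PPT from MCC provides and an \ORcroco{} on \smlaAcr{} itself does not.
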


{
We give a polynomial-parameter transformation~\cite{BTY11} from

\decprob{Multicolored Clique (MCC)}{mcc}
{A $k$-partite graph~$G=(V=V^1\uplus\dots\uplus V^k,E)$.}
{Is there $C\subseteq V$ such that~$G[C]$ is a clique and~$|C\cap V^i|=1$ for all~$i\in\set{k}$?}

\citet{HermelinKSWW15} proved that MCC \wpb{} $k\tlog{|V|}$
is \WKone-hard 
and admits no polynomial problem kernel
\unlessPK.

\begin{construction}\label{constr:smla:nopkn}
 Let~$I=(G=(V=V_1\uplus\dots\uplus V_k,E))$ be an input instance of MCC
 with~$V_i=\{v_i^1,\dots,v_i^N\}$ for all~$i\in\set{k}$,
 where~$N\ceq |V_i|$.
 Moreover, suppose that for each~$i\neq i'$,
 the number of edges between the two colors classes equals~$M>N$.
 We construct an instance $(A,C,U,k,x,y)$ of \smlaAcr{}  as follows.
 Let~$C=\{c^*,c_\true\}$
 and~$A=A_V \cup A_E \cup \bigcup_{i,i'\in\set{k},i\neq i'} A_{i,i'}$
 where~
 \begin{align*}
    A_V &\ceq \{a_i\mid i\in\set{k}\}, \\
    A_E &\ceq \{ a_{\{i,i'\}}\mid i,i'\in\set{k},i\neq i'\}, 
    \text{ and}\\
    A_{i,i'} &\ceq \{a^\ell_{i,i'}\mid \ell\in \set{2\tlog{N}}\}
    \\ &  \qquad\text{ for each~$i,i'\in\set{k},i\neq i'$.}  
 \end{align*}
 Let~$x\ceq 1$,
 $k\ceq 1$,
 and~$y\ceq M-1$.
 \begin{compactenum}[(i)]
  \item For each vertex~$v_i^j$,
 there is a level~$L_i^j$ with $u(a_i)=c^*$ and 
 for all~$\ell\in\set{2\tlog{N}}$ and~$i'\in\set{k}\setminus\{i\}$,
 $a^\ell_{i,i'}$ nominates~$c_\true$ if there is a $0$ at position~$\ell$ of~$B_N(j)$.
  \item For each edge~$\{v_i^j,v_{i'}^{j'}\}$,
 there is a level~$L_{i,i'}^{j,j'}$ with $u(a_{\{i,i'\}})=c^*$ and 
 $a^\ell_{i,i'}$ nominates~$c_\true$ if there is a $1$ at position~$\ell$ of~$B_N(j)$ and
 $a^{\ell'}_{i',i}$ nominates~$c_\true$ if there is a $1$ at position~$\ell'$ of~$B_N(j')$.
  \item There are~$M-2$ levels~$L_1^\dagger,\dots,L_{M-2}^\dagger$ 
 where only the agents in~$A_{i,i'}$ for all~$i,i'\in\set{k}$ nominate~$c^*$,
 and all other agents nominate nothing.
  \item Finally,
 there are ~$M-N$ levels~$L_1^\ddagger,\dots,L_{M-N}^\ddagger$
 where only the agents in~$A_V$ nominate~$c^*$,
 and all other agents nominate nothing.
 \end{compactenum}
 This finishes the construction.
 \cqed
\end{construction}

\begin{observation}%
 \label{obs:smla:nopkn:atmost}
 Let~$I'$ be a \yes-instance.
 For every solution,
 the following holds true:
 \begin{compactenum}[(i)]
  \item For every~$i\in\set{k}$,
 there is at most one~$j\in\set{N}$
 such that the committee in level~$L_i^j$ contains~$c_\true$.
  \item For every pair~$i,i'\in\set{k}$,
 there is at most one pair~$j,j'\in\set{N}$
 such that the committee in level~$L_{i,i'}^{j.j'}$ 
 contains~$c_\true$.
 \end{compactenum}
\end{observation}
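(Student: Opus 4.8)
The plan is to prove both parts by the same counting argument, exploiting the global satisfaction requirement $y = M-1$ together with $k=1$. Fix any solution $(C_1,\dots,C_\tau)$ of $I'$. Since $k=1$ and $C=\{c^*,c_\true\}$, every committee $C_t$ is one of $\emptyset$, $\{c^*\}$, or $\{c_\true\}$; in particular, a committee \emph{contains} $c_\true$ if and only if it equals $\{c_\true\}$, in which case it does not contain $c^*$. This simple dichotomy is what links ``committee contains $c_\true$'' to ``the corresponding $c^*$-nominating agent is unsatisfied''.

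First I would pin down, for each vertex-agent and each edge-agent, exactly the set of levels in which it nominates $c^*$. For a fixed $i\in\set{k}$, the agent $a_i\in A_V$ nominates $c^*$ precisely in the $N$ vertex-levels $L_i^1,\dots,L_i^N$ (item~(i) of \cref{constr:smla:nopkn}) and in the $M-N$ padding levels $L_1^\ddagger,\dots,L_{M-N}^\ddagger$ (item~(iv)); in every other level $a_i$ nominates nothing and hence cannot be satisfied. This gives exactly $M$ levels in which $a_i$ can ever be satisfied. Symmetrically, for a fixed pair $i\neq i'$, the edge-agent $a_{\{i,i'\}}\in A_E$ nominates $c^*$ precisely in the $M$ edge-levels $L_{i,i'}^{j,j'}$ (one per edge between $V_i$ and $V_{i'}$, item~(ii)) and nowhere else, as it does not appear among the $L^\dagger$-levels, the $L^\ddagger$-levels, or the vertex-levels.

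Now I would close the argument. Because $a_i$ must be satisfied in at least $y=M-1$ levels yet nominates $c^*$ in only these $M$ levels, at most one of them can carry a committee that fails to contain $c^*$. Every vertex-level $L_i^j$ whose committee contains $c_\true$ is such a level: its committee is $\{c_\true\}\not\ni c^*$, so $a_i$ is unsatisfied there. Hence at most one $L_i^j$ has a committee containing $c_\true$, which is part~(i). Applying the identical bound to $a_{\{i,i'\}}$ over its $M$ edge-levels yields part~(ii): at most one pair $j,j'$ admits a committee $\{c_\true\}$ in level $L_{i,i'}^{j,j'}$.

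I do not expect a genuine obstacle here, since the statement is a direct consequence of the $M-1$ satisfaction budget. The only point requiring care is the bookkeeping against the four level-types of \cref{constr:smla:nopkn}: one must verify that the $c^*$-nomination count is \emph{exactly} $M$ for each vertex- and each edge-agent---in particular that these agents do not additionally nominate $c^*$ in the padding levels of the other type---because the whole argument hinges on this equality between the satisfiable-level count and $y+1$.
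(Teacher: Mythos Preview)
Your proposal is correct and is essentially the paper's own argument: both count that each agent $a_i$ (respectively $a_{\{i,i'\}}$) nominates $c^*$ in exactly $M$ levels, so with $y=M-1$ at most one of those levels may fail to contain $c^*$, hence at most one may contain $c_\true$. The paper's proof is terser but uses the identical $(M-N)+(N-2)=M-2<y$ and $M-2<y$ counts you spell out.
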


{
  \begin{proof}
  (i) Suppose not,
  i.e.,
  there is $i\in\set{k}$,
  with at least two distinct~$j,j'\in\set{N}$
  such that the committee in level~$L_i^j$ and~$L_i^{j'}$ contains~$c_\true$.
  Then,
  agent~$a_i$ gets score at most~$(M-N)+(N-2)<y$,
  a contradiction.
  
  (ii) Suppose not,
  i.e.,
  there is a pair~$i,i'\in\set{k}$,
  with at least two pairs~$j_1,j_1'\in\set{N}$ and $j_2,j_2'\in\set{N}$
  such that the committee in level~$L_{i,i'}^{j_1,j_1'}$
  and $L_{i,i'}^{j_2,j_2'}$ 
  contains~$c_\true$.
  Then,
  agent~$a_{\{i,i'\}}$ gets score at most~$M-2<y$,
  a contradiction.
  \end{proof}
}

\begin{observation}%
 \label{obs:smla:nopkn:exact}
 Let~$I'$ be a \yes-instance.
 For every solution,
 the following holds true:
 \begin{compactenum}[(i)]
  \item For every~$i\in\set{k}$,
 there is exactly one~$j\in\set{N}$
 such that the committee in level~$L_i^j$ contains~$c_\true$.
  \item For every~$i,i'\in\set{k}$,
 there is exactly one pair~$j,j'\in\set{N}$
 such that the committee in level~$L_{i,i'}^{j.j'}$ 
 contains~$c_\true$.
 \end{compactenum}
\end{observation}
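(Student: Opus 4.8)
The plan is to upgrade the two ``at most one'' bounds of \cref{obs:smla:nopkn:atmost} to ``exactly one'' by proving the matching ``at least one'' statements; the whole leverage comes from the $2\tlog{N}$ selector agents $a_{i,i'}^\ell$ together with $k=1$, so that every committee holds at most one candidate. First I would record two facts. In every dagger level $L_t^\dagger$ only the selector agents nominate, and all of them nominate $c^*$; hence the constraint $x=1$ forces the committee of each dagger level to be exactly $\{c^*\}$, so every selector agent $a_{i,i'}^\ell$ scores exactly $M-2$ over the $M-2$ dagger levels. To reach its demand $y=M-1$, the agent $a_{i,i'}^\ell$ must therefore score at least once more, and its only remaining point sources are the vertex levels $L_i^{j}$ of color~$i$ (where it scores iff the committee contains $c_\true$ and $B_N(j)$ has a $0$ in position~$\ell$) and the edge levels $L_{i,i'}^{j,j'}$ of the pair $\{i,i'\}$ (where it scores iff the committee contains $c_\true$ and $B_N(j)$ has a $1$ in position~$\ell$). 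Second, by the complement encoding every string $B_N(j)$ contains at least one $0$ and at least one $1$.

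For part~(i) I would fix a color $i$ together with some $i'\neq i$ (which exists since $k\geq 2$) and argue by contradiction: assuming no vertex level $L_i^{j}$ has a committee containing $c_\true$, the agents $a_{i,i'}^\ell$ score nothing in the vertex levels, so each must score in an edge level of $\{i,i'\}$. By \cref{obs:smla:nopkn:atmost} at most one such edge level, say $L_{i,i'}^{j_1,j_1'}$, has a committee containing $c_\true$, so $a_{i,i'}^\ell$ reaches $M-1$ only when $B_N(j_1)$ carries a $1$ in position~$\ell$. Choosing a $0$-position $\ell^*$ of $B_N(j_1)$ then leaves $a_{i,i'}^{\ell^*}$ with score $M-2<y$ (and if no edge level of the pair contains $c_\true$ the contradiction is immediate), so at least one vertex level of color~$i$ must contain $c_\true$; with \cref{obs:smla:nopkn:atmost} this gives exactly one.

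Part~(ii) I would handle symmetrically: fixing $\{i,i'\}$ and assuming no edge level of the pair contains $c_\true$, the selectors $a_{i,i'}^\ell$ can only score in the vertex levels of color~$i$, of which by part~(i) exactly one, say $L_i^{j_0}$, contains $c_\true$; hence $a_{i,i'}^\ell$ scores there iff $B_N(j_0)$ has a $0$ in position~$\ell$, and a $1$-position $\ell^*$ of $B_N(j_0)$ again produces a selector of score $M-2<y$. The step I expect to need the most care is the bookkeeping of a selector's score: I would carefully verify that $a_{i,i'}^\ell$ never gains points in vertex levels of other colors, in edge levels of other pairs, or in the double-dagger levels, so that the ``at most one $c_\true$-level'' bounds really cap its extra score at one; the encoding fact that $B_N(j)$ is neither all-zero nor all-one is precisely what makes a single $c_\true$-level unable to satisfy all $2\tlog{N}$ selectors simultaneously.
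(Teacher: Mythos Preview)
Your proposal is correct and follows essentially the same approach as the paper's proof: both argue by contradiction that if some color (resp.\ pair) has no $c_\true$-level, then the selector agents $a_{i,i'}^\ell$ would force two $c_\true$-levels of the other kind, contradicting \cref{obs:smla:nopkn:atmost}. The paper's proof is a terse two-line sketch per part, whereas you carefully spell out the supporting details---that the dagger levels must elect $c^*$ (giving each selector exactly $M-2$), that selectors nominate nowhere else but in vertex levels of color~$i$ and edge levels of the ordered pair $(i,i')$, and that the complement encoding ensures $B_N(j)$ has both a $0$- and a $1$-position---which are exactly the facts the paper's sketch implicitly relies on.
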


{
  \begin{proof}
  (i) Suppose not,
  i.e.,
  there is none for color~$i$.
  Then to have agent~$a_{i,i'}^\ell$ for each~$\ell$
  be satisfied,
  there must be two levels~$L_{i,i'}^{\cdot,\cdot}$ containing~$c_\true$,
  a contradiction to~\cref{obs:smla:nopkn:atmost}.
  
  (i) Suppose not,
  i.e.,
  there is none for colors~$i,i'$.
  Then to have agent~$a_{i,i'}^\ell$ for each~$\ell$
  be satisfied,
  there must be two levels~$L_{i}^{\cdot}$ containing~$c_\true$,
  a contradiction to~\cref{obs:smla:nopkn:atmost}.
  \end{proof}
}

\begin{observation}%
 \label{obs:smla:nopkn:match}
 Let~$I'$ be a \yes-instance.
 For every solution,
 if level~$L_i^j$'s committee contains~$c_\true$,
 then for every~$i'\in\set{k}$ there is a~$j_{i'}\in\set{N}$ such that
 the committee in~$L_{i,i'}^{j,j_{i'}}$ contains~$c_\true$.
\end{observation}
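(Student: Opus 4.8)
The plan is to read the quantifier ``for every $i'\in\set{k}$'' as ranging over $i'\in\set{k}\setminus\{i\}$ (there are no edge levels inside a single color) and, for each such~$i'$, to track the $2\tlog{N}$ \emph{identifier agents} $a_{i,i'}^1,\dots,a_{i,i'}^{2\tlog{N}}$ of the ordered pair~$(i,i')$. The argument computes their satisfaction \emph{exactly} in an arbitrary solution and then appeals to the defining property of the encoding~$B_N$.

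First I would fix the shape of each committee. Since~$k=1$, every committee is~$\emptyset$, $\{c^*\}$, or~$\{c_\true\}$; and since~$x=1$ every committee must have positive score, so in any level whose only nominated candidate is~$c^*$ the committee equals~$\{c^*\}$. In particular every level~$L_\cdot^\dagger$ and~$L_\cdot^\ddagger$ carries~$\{c^*\}$. Combined with~\cref{obs:smla:nopkn:exact}, the unique color-$i$ vertex level whose committee contains~$c_\true$ is the given level~$L_i^j$, and there is a unique $\{i,i'\}$-edge level, say~$L_{i,i'}^{p,p'}$, whose committee contains~$c_\true$; every other color-$i$ vertex level and every other $\{i,i'\}$-edge level then carries~$\{c^*\}$.

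Next I would trace where~$a_{i,i'}^\ell$ can be satisfied. By~\cref{constr:smla:nopkn} it nominates~$c^*$ only in the~$M-2$ levels~$L_\cdot^\dagger$ (and is satisfied in all of them), nominates~$c_\true$ in a color-$i$ vertex level~$L_i^{j''}$ exactly when position~$\ell$ of~$B_N(j'')$ is~$0$, nominates~$c_\true$ in an $\{i,i'\}$-edge level~$L_{i,i'}^{q,q'}$ exactly when position~$\ell$ of~$B_N(q)$ is~$1$, and nominates nothing in any other level. By the previous paragraph a $c_\true$-nomination is rewarded only in~$L_i^j$ and in~$L_{i,i'}^{p,p'}$; hence the score of~$a_{i,i'}^\ell$ is exactly~$M-2$, increased by one if position~$\ell$ of~$B_N(j)$ is~$0$ and increased by one if position~$\ell$ of~$B_N(p)$ is~$1$.

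Finally, the agent constraint forces this score to be at least~$y=M-1$ for every~$\ell\in\set{2\tlog{N}}$, so at each position either~$B_N(j)$ is~$0$ or~$B_N(p)$ is~$1$; equivalently there is no position carrying a~$1$ in~$B_N(j)$ and a~$0$ in~$B_N(p)$, i.e.\ $B_N(j)\leq B_N(p)$ coordinatewise. Because the first half of~$B_N(\cdot)$ encodes the index and the second half its complement, this gives the index of~$j$ below that of~$p$ on the first half and the reverse on the complemented half, whence~$j=p$. Thus the unique selected $\{i,i'\}$-edge level is~$L_{i,i'}^{j,p'}$, and setting~$j_{i'}\ceq p'$ yields the claim. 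I expect the main obstacle to be the exact-score bookkeeping of the third step: one must argue that an identifier agent gains nothing outside the~$L_\cdot^\dagger$ levels, the single selected vertex level, and the single selected edge level, which is exactly where the uniqueness from~\cref{obs:smla:nopkn:exact} and the $k=1,x=1$ forcing are combined; once the score is an equality, the coordinatewise-domination step and the appeal to~$B_N$ are immediate.
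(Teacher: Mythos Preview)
Your proof is correct and follows essentially the same idea as the paper's. The paper argues by contradiction in three lines: assuming the unique selected $\{i,i'\}$-edge level is $L_{i,i'}^{j',j_{i'}}$ with $j'\neq j$, it picks a position~$\ell$ where $B_N(j)$ has a~$1$ and $B_N(j')$ has a~$0$ (which exists precisely because of the complement half of~$B_N$) and observes that $a_{i,i'}^\ell$ then falls short of score~$y$. Your direct argument---exactly computing the score of each identifier agent as $M-2$ plus two indicator bits, enforcing the threshold for all~$\ell$, and reading this as $B_N(j)\leq B_N(p)$ coordinatewise which collapses to $j=p$ via the complement half---is the contrapositive of the same reasoning, with the bookkeeping (the role of $x=1$, the forcing of~$\{c^*\}$ in the remaining levels, and the reliance on \cref{obs:smla:nopkn:exact}) made fully explicit.
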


{
  \begin{proof}
  Suppose not,
  i.e.,
  the level containing~$c_\true$ is~$L_{i,i'}^{j',j_{i'}}$. with~$j'\neq j$.
  Then there is an~$\ell\in\set{2\tlog{N}}$ such that
  $a_{i,i'}^\ell$ is not satisfied:
  $\ell$ is the position where~$B_N(j)$ has a 1 
  and~$B_N(j')$ has a 0
  (which exists since~$j\neq j'$).
  \end{proof}
}

\begin{proof}[Proof of \cref{thm:smla:nopkn}]
 \RD{}
 Let~$C=\{v_i^{j_i}\mid i\in\set{k}\}$ form a multicolored clique.
 We construct a solution as follows.
 In each level~$L_{i}^{j_i}$ and~$L_{i,i}^{j_i,j_{i'}}$,
 candidate~$c_\true$ is elected.
 In all other levels,
 candidate~$c^*$ is elected.
 Observe that each agent in~$A_V\cup A_E$ is satisfied
 (recall that there are~$M-N$ levels satisfying each agent from~$A_V$).
 Consider any agent~$a_{i,i'}^\ell$.
 Note that $a_{i,i'}^\ell$ is satisfied in each of the~$M-2$ levels~$L_1^\dagger,\dots,L_{M-2}^\dagger$.
 Since in levels $L_{i}^{j_i}$ and~$L_{i,i}^{j_i,j_{i'}}$
 candidate~$c_\true$ is elected,
 if the~$\ell$'s position in~$B_N(j_i)$ is a 0,
 then~$a_{i,i'}^\ell$ is satisfied in level $L_{i}^{j_i}$,
 otherwise,
 by construction,
 $a_{i,i'}^\ell$ is satisfied in level $L_{i,i}^{j_i,j_{i'}}$.
 
 \LD{}
 Follows from \cref{obs:smla:nopkn:atmost,obs:smla:nopkn:exact,obs:smla:nopkn:match}.
\end{proof}
}
}

\begin{proposition}
 \label{thm:pknpy}
 \smlaAcr{}
 admits 
 a problem kernel of size polynomial in~$n+y$.
\end{proposition}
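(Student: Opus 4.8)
The plan is to reduce the number~$\tau$ of levels to a polynomial in~$n+y$. Since \cref{lem:agentmanycandidates} already gives~$m\le n$ and each level is encoded by~$n$ nominations over at most~$n+1$ symbols (and the integers~$k,x,y$ can be capped polynomially), bounding~$\tau$ by~$\poly(n+y)$ yields a kernel of size polynomial in~$n+y$.

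I would start with two polynomial-time preprocessing steps. First, for every level~$t$ I test whether some size-$\le k$ committee attains committee score at least~$x$ (equivalently, whether the~$k$ largest groups~$u_t^{-1}(c)$ sum to at least~$x$); if some level fails, I return a trivial \no-instance, as that level can never meet~\eqref{prob:smla:x}. Call a level \emph{feasible} if it passes this test. Second, if some agent is active (nominates a candidate) in fewer than~$y$ levels, I return \no. The key structural fact that makes level deletion safe is an \emph{extension property}: after the feasibility check, any committee assignment on a subset of the feasible levels extends to all remaining feasible levels by filling each with an arbitrary score-$\ge x$ committee; the extra nominations only increase agent scores and hence never violate~\eqref{prob:smla:y} in the egalitarian (``$\ge$'') setting. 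Consequently, deleting a feasible level can never turn a \no-instance into a \yes-instance, so one direction of every deletion is automatic.

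The conceptual reason that~$\poly(n+y)$ levels suffice is a \emph{witness} count. In any solution, fix for each agent~$a$ a set~$W_a$ of exactly~$y$ levels in which~$a$ is satisfied; then~$W\ceq\bigcup_a W_a$ has~$|W|\le n\cdot y$, the committees restricted to~$W$ still satisfy every constraint~\eqref{prob:smla:y}, and, together with the extension property, only the levels in~$W$ are genuinely needed. Thus a \yes-instance always admits a solution supported on at most~$n\cdot y$ levels. The real task of the kernelization is therefore to \emph{identify}, in polynomial time and \emph{without} solving the instance (the solver of \cref{thm:ny} runs in \FPT{}, not polynomial, time), a set~$K$ of~$\poly(n+y)$ levels guaranteed to contain such a witness support for \emph{some} solution whenever one exists.

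To build~$K$ I would combine two reduction rules. The first collapses multiplicities: since candidate identities do not persist across levels, a level is characterised solely by the partition of the agents into co-nominating groups, and I keep at most~$O(n\cdot y)$ copies of each such level-type. Here correctness is a clean exchange argument—the per-group coverage demanded by the~$y$-constraints is at most~$y$ per group over at most~$n$ groups, hence realisable by~$O(n\cdot y)$ copies each of which still meets~\eqref{prob:smla:x}, so a surplus copy is deletable. The second rule must bound the number of \emph{distinct} types that are kept: I would mark, per agent, enough levels to protect its~$y$ satisfactions and then argue, via an expansion/flow argument on the agent--level incidence structure (each agent demanding~$y$, each kept level offering a constrained bundle of simultaneous satisfactions), that~$\poly(n+y)$ marked levels retain full satisfying power. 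The main obstacle is exactly this second part: the size bound~$k$ and the per-level requirement~\eqref{prob:smla:x} \emph{couple} the agents within a level—the~$k$ committee slots are contested between meeting~$x$ and satisfying the specifically targeted agents—so a purely per-agent marking need not preserve solvability. Turning the existential witness bound~$n\cdot y$ into a constructive, solution-agnostic marking, whose correctness is established by swapping committee choices between deleted and kept levels, is where the technical work concentrates. (This coupling also explains why the analogous kernel fails for \fmlaAcr{}: with the exact-$y$ requirement the extension property breaks, since filling a deleted level can overshoot some agent's target.)
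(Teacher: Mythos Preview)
Your setup is right: bounding~$\tau$ by~$\poly(n+y)$ suffices, the feasibility pre-check is needed, the extension property and the witness bound~$|W|\le n\cdot y$ are the correct intuitions, and you correctly point out why the same kernel cannot work for~\fmlaAcr{}. However, you yourself flag that the second rule---turning the existential~$n\cdot y$ bound into a constructive marking---is ``where the technical work concentrates,'' and you do not actually supply the argument. Collapsing level-type multiplicities (your first rule) is sound but leaves up to~$(n+1)^n$ distinct types, so everything hinges on the unspecified second rule. A generic ``expansion/flow argument on the agent--level incidence structure'' is not yet a proof: you have not said which levels are marked, nor why every solution can be rerouted into the marked set. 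As written, this is a plan with a gap at the decisive step.

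The paper closes this gap with a single clean idea that replaces both of your rules. Call an agent~$a$ \emph{critical} if the number of levels in which some valid committee contains~$u_t(a)$ is at most~$n\cdot y$, and \emph{non-critical} otherwise. If every agent is non-critical, output \yes{} (a simple greedy peel: satisfy one agent on~$y$ of her levels, delete those levels; each remaining agent still has~$>(n-1)\cdot y$ usable levels; recurse). Otherwise, delete any level in which at least one valid committee exists and \emph{every} valid committee contains only candidates nominated by non-critical agents; the exchange argument here is exactly your extension/witness idea, applied only to the~$q$ non-critical agents that might lose satisfaction, who still have~$\ge q\cdot y$ usable levels outside the deleted one and any minimal support of the other agents. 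When neither rule applies, every remaining level hosts a valid committee that hits some critical agent's nominee; since there are at most~$n$ critical agents, each contributing at most~$n\cdot y$ such levels, one gets~$\tau\le n^2\cdot y$. This critical/non-critical split is the concrete, solution-agnostic marking you were looking for and sidesteps the coupling obstacle you identified.
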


\newcommand{\critical}{critical}
\newcommand{\ncritical}{non-critical}
\newcommand{\validS}{Z}

In the following,
we 
(again) 
call a committee \emph{valid} if its size is at most~$k$ and its score is at least~$x$.
For an agent~$a$,
we denote by~$\validS(a)$ the set of all levels
where there is a valid committee containing $a$'s nominated candidate.
We call an agent $a$ \emph{\ncritical} if $|\validS(a)|>n\cdot y$,
and \emph{\critical} otherwise.
We have the followings.

\begin{rrule}[\appref{rr:onlynoncritical}]
 \label{rr:onlynoncritical}
 If every agent~$a$ is \ncritical{},
 then return a trivial \yes-instance.
\end{rrule}

\appendixproof{rr:onlynoncritical}{
  \begin{proof}
  Via induction on the number~$n$ of agents.
  For~$n=1$ the statement is clear.
  Hence,
  let the statement hold true for~$n-1>1$.
  Having $n$ agents where each agent~$a$ has at least~$n\cdot y$ levels 
  where there is a valid committee containing $a$'s nominated candidate.
  For some arbitrary agent~$a$,
  choose exactly $y$ levels~$t_1,\dots,t_y$ and valid committees $C_{t_1},\dots,C_{t_y}$
  such that $u_{t_i}(a)\in C_{t_i}$ for every $i\in\set{y}$.
  Note that $a$ is satisfied.
  Delete~$a$ and the levels~$t_1,\dots,t_y$.
  Note that for the remaining $n-1$ agents,
  we have that every agent~$a$ has at least~$(n-1)\cdot y$ levels 
  where there is a valid committee containing $a$'s nominated candidate.
  Due the the inductive hypothesis,
  we can return \yes.
  \end{proof}
}

Thus,
if we have a non-trivial instance,
then
there must be a critical agent.
We will see that the number of critical agents 
can upper bound the number of levels.
To this end,
we first delete levels which are irrelevant to critical agents
as follows.

\begin{rrule}%
 \label{rr:onlynoncriticallevels}
 If there is a level $t^*$ such that 
 there is at least one valid committee and 
 every valid committee only includes 
 candidates nominated by \ncritical{} agents,
 then delete this level.
\end{rrule}

{
  \begin{proof}
  Let~$I=(A,C,U,k,x,y)$ be the input instance
  and~$I'\ceq (A,C,U',k,x,y)$ be the instance obtained by the reduction rule.
  Clearly,
  if $I'$ is a \yes-instance,
  then $I$ is a \yes-instance.
  Hence,
  we show the converse next.
  
  Assume towards a contradiction
  that for every solution $\calC\ceq (C_1,\dots,C_{t^*},\dots,C_\tau)$
  it holds true that $\calC'\ceq (C_1,\dots,C_{t^*-1},C_{t^*+1},\dots,C_\tau)$
  is no solution to~$I'$,
  i.e.,
  there is a maximal set~$A^*\subseteq A$ of agents 
  which is not satisfied when $C_{t^*}$ is dropped.
  Let~$q\ceq|A^*|$.
  Recall that~$A^*$ consists of only \ncritical{} agents.
  Let $\ol{T}\subseteq\set{\tau}\setminus\{t^*\}$ be a minimum-size set of levels such that 
  all agents of $A\setminus A^*$ are satisfied
  (which exists since $\calC'$ satisfies all agents except for those in~$A^*$).
  Note that~$|\ol{T}|\leq (n-q)\cdot y$.
  Let~$T\ceq \set{\tau}\setminus (\ol{T}\cup\{t^*\})$.
  Hence,
  for all~$a\in A^*$,
  we have that~$|\validS(a)\cap T|\geq |\validS(a)|-|\ol{T}\cup\{t^*\}|\geq n\cdot y +1 - ((n-q)\cdot y+1) \geq q\cdot y$.
  Thus,
  there is a solution to~$I'$,
  yielding a contradiction.
  \end{proof}
}

It follows that in every level,
there must be a valid committee for any of the at most~$n$ critical agents,
each of which has at most~$n\cdot y$ levels of this kind.
This leads to the following.

\begin{lemma}[\appref{lem:PKnyInapp}]
 \label{lem:PKnyInapp}
 If each of \cref{rr:onlynoncritical} and~\ref{rr:onlynoncriticallevels}
 is inapplicable,
 then there are at most~$n^2\cdot y$ levels.
\end{lemma}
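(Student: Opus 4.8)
The plan is a covering-and-counting argument: I would charge every level to a \critical{} agent whose nominated candidate can still sit in a valid committee at that level, and then bound the number of levels by summing the reaches $\validS(a)$ of the \critical{} agents. Before counting, I first dispose of a degenerate situation that the two rules do not touch. A level $t$ admits a valid committee if and only if the committee formed by the $k$ most-nominated candidates with respect to $u_t$ already has score at least $x$, which is checkable in linear time; if some level fails this test, then constraint~\eqref{prob:smla:x} is violated at $t$ for every choice of $C_t$, so the instance is a \no-instance and we may return a trivial one. Hence I may assume that every level admits at least one valid committee. With this in hand, the inapplicability of \cref{rr:onlynoncriticallevels} yields the key claim: for every level $t$ there is a valid committee at $t$ containing a candidate nominated by some \critical{} agent; otherwise every valid committee at $t$ would use only candidates nominated by \ncritical{} agents and, since $t$ does possess a valid committee, \cref{rr:onlynoncriticallevels} would apply. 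Equivalently, every level $t$ belongs to $\validS(a)$ for at least one \critical{} agent $a$.

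The counting is then a single double-counting estimate. For each level $t$ I fix one \critical{} agent $\sigma(t)$ with $t\in\validS(\sigma(t))$, which exists by the claim; this defines a map from the set of levels to the set of \critical{} agents. For a fixed \critical{} agent $a$, all levels in the preimage $\sigma^{-1}(a)$ are distinct and lie in $\validS(a)$, so $|\sigma^{-1}(a)|\leq|\validS(a)|\leq n\cdot y$, where the last inequality is precisely the definition of criticality. The inapplicability of \cref{rr:onlynoncritical} guarantees that at least one agent is \critical{}, and there are at most $n$ \critical{} agents in total; summing over them bounds the number of levels by $n\cdot(n\cdot y)=n^2\cdot y$.

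I expect the only genuinely delicate point to be the seam between \cref{rr:onlynoncriticallevels} and this degenerate case: the rule is phrased only for levels that already possess a valid committee, so the covering claim really does need the preliminary step that removes (or rejects on) levels without any valid committee; omitting it would let arbitrarily many empty levels survive both rules and break the bound. The remaining role of \cref{rr:onlynoncritical} is mild, merely certifying non-triviality, while the real content is the reach bound $|\validS(a)|\leq n\cdot y$ inherited from criticality.
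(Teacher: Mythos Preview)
Your argument is correct and follows essentially the same covering-and-counting route as the paper: show that every level lies in $\validS(a)$ for some \critical{} agent~$a$, then bound $\tau$ by $\sum_{a\text{ \critical}}|\validS(a)|\leq n\cdot(n\cdot y)$. The paper's proof is a two-line version of this that simply asserts $\set{\tau}=\bigcup_{a\text{ \critical}}\validS(a)$ from the inapplicability of \cref{rr:onlynoncriticallevels}; you are more careful than the paper in singling out the degenerate case of a level with no valid committee at all, which the rule's hypothesis indeed excludes and which the paper's argument silently skips.
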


\appendixproof{lem:PKnyInapp}
{
\begin{proof}
 Due to the inapplicability of \cref{rr:onlynoncritical}
 there is at least one critical agent.
 Due to the inapplicability of \cref{rr:onlynoncriticallevels},
 we have that
 $\set{\tau}=\bigcup_{a\text{ \critical}} \validS(a)$.
 Since
 $|\bigcup_{a\text{ \critical}} \validS(a)|\leq n^2\cdot y$,
 we hence have that
 $|\set{\tau}|\leq n^2\cdot y$
\end{proof}
}

\noindent
To conclude,
\smlaAcr{} admits presumably no problem kernel of size polynomial in~$n$,
but one of size polynomial in~$n+y$.
Interestingly,
for~\fmlaAcr{}
the latter is presumably impossible.

\begin{proposition}[\appref{thm:FMLAnoPKnpy}]
 \label{thm:FMLAnoPKnpy}
 \UnlessPK,
 \fmlaAcr{}
 admits no problem kernel of size polynomial in~$n$,
 even if~$m=2$,
 $k=1$, and~$y=1$.
\end{proposition}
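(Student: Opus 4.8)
The plan is to give a polynomial-parameter transformation~\citep{BTY11} from \prob{Multicolored Clique} (\prob{MCC}) parameterized by $k\cdot\lceil\log|V|\rceil$ --- which is \WKone-hard and admits no polynomial problem kernel \unlessPK~\citep{HermelinKSWW15} --- into \fmlaAcr{} parameterized by~$n$, mirroring the transformation used for \smlaAcr{} in \cref{thm:smla:nopkn}. Let $G=(V=V_1\uplus\dots\uplus V_k,E)$ with $V_i=\{v_i^1,\dots,v_i^N\}$ be the input, and write $\beta_\ell(j)$ for the $\ell$-th bit of $j$. I use two candidates $C=\{c_0,c_1\}$ where $c_0$ is a dummy that is \emph{never} nominated (so that $m=2$), committee-size bound~$1$, per-level bound $x\ceq 0$, and $y\ceq 1$. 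The levels are one \emph{vertex-level}~$L_i^j$ per vertex $v_i^j$ and one \emph{edge-level}~$L_{i,i'}^{j,j'}$ per edge $\{v_i^j,v_{i'}^{j'}\}$ (with $i<i'$, and $j,j'$ the endpoints in $V_i,V_{i'}$); crucially, no dummy levels are needed. I call a level \emph{selected} if its committee is $\{c_1\}$. Since a polynomial-parameter transformation transfers kernelization lower bounds, establishing this reduction yields \cref{thm:FMLAnoPKnpy}.

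I would introduce three kinds of agents, each of which must (under $y=1$) be satisfied exactly once. A \emph{part-agent} $p_i$ nominates $c_1$ in every vertex-level $L_i^{\cdot}$ and nothing elsewhere; as $p_i$ can only be served by selecting some $L_i^j$, exactly-once forces precisely one selected vertex-level per part, identifying a vertex $v_i^{j_i}$. A \emph{pair-agent} $q_{i,i'}$ nominates $c_1$ in every edge-level $L_{i,i'}^{\cdot,\cdot}$ and nothing elsewhere, forcing exactly one selected edge-level per pair, identifying an edge $e_{i,i'}$. Finally, for each ordered pair $(i,i')$ and each bit $\ell\in\set{\lceil\log N\rceil}$ I add a \emph{bit-agent} $b_{i\to i'}^\ell$ that nominates $c_1$ in $L_i^j$ iff $\beta_\ell(j)=1$, nominates $c_1$ in $L_{i,i'}^{j,j'}$ iff $\beta_\ell(j)=0$ (checking the $i$-endpoint~$j$), and nominates nothing otherwise. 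This uses $k+\binom{k}{2}+k(k-1)\lceil\log N\rceil=O(k^2\log|V|)$ agents, which is polynomial in the \prob{MCC}-parameter $k\lceil\log|V|\rceil$, as required; all levels are polynomial in $|V|+|E|$ and the construction is computable in polynomial time.

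For correctness, note first that because $c_0$ is never nominated, selecting it serves nobody, so \wilog{} every committee is $\{c_1\}$ or $\emptyset$; hence $p_i$ and $q_{i,i'}$ genuinely pin down a unique selected vertex-level $L_i^{j_i}$ and a unique selected edge-level per pair. For a fixed bit-agent $b_{i\to i'}^\ell$, the only selected levels in which it nominates $c_1$ are that unique $L_i^{j_i}$ and the unique selected edge-level, whose $i$-endpoint I call~$a$; its score is therefore $\mathbf 1[\beta_\ell(j_i)=1]+\mathbf 1[\beta_\ell(a)=0]$, which equals~$1$ exactly when the two bits agree. Ranging over all~$\ell$, exactly-once forces $j_i=a$, and together with $b_{i'\to i}^\ell$ this forces $e_{i,i'}=\{v_i^{j_i},v_{i'}^{j_{i'}}\}\in E$ for every pair, so $\{v_i^{j_i}\}_i$ is a multicolored clique. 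The converse is the natural direction: selecting the vertex-levels and edge-levels of a clique (and leaving all other committees empty) serves each part-, pair-, and bit-agent exactly once by the same bit count, giving a \yes-instance.

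The step I expect to be the main obstacle --- and the reason this succeeds for the equitable problem even though the analogous $\tau$-composition is left open in \cref{rem:mtaux} --- is engineering the bit-gadget so that the \emph{two-sided} exactly-once constraint alone enforces bitwise \emph{equality}, not merely domination. Using a presence/absence encoding (so that the second candidate is never a useful choice) removes any loophole where selecting $c_0$ or activating spurious levels could repair an otherwise unsatisfied bit-agent; the symmetric over-/under-counting built into $y=1$ then simultaneously rules out both $\beta_\ell(j_i)>\beta_\ell(a)$ and $\beta_\ell(j_i)<\beta_\ell(a)$, which is precisely the property the at-least-$y$ variant lacked. The remaining work is routine bookkeeping on the agent count and on verifying that no dummy levels are required.
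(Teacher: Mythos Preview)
Your proof is correct and takes a genuinely different route from the paper's. The paper establishes \cref{thm:FMLAnoPKnpy} via an \ORcroco{} from \xOTsatAcr{}: it composes~$p$ instances into one \fmlaAcr{} instance using a distinguished agent~$a^*$ whose exactly-once constraint selects one of~$p$ binary-encoded ``instance levels'', together with variable-bit agents whose exactly-once constraints then force a consistent exact-1-in-3 assignment for the selected instance. Your approach instead gives a polynomial-parameter transformation from \prob{MCC} parameterized by~$k\lceil\log|V|\rceil$, closely mirroring the paper's own reduction for the egalitarian case (\cref{thm:smla:nopkn}) but exploiting the equitable constraint to obtain a markedly cleaner gadget: since your bit-agent's score is $[\beta_\ell(j_i){=}1]+[\beta_\ell(a){=}0]$, the value~$1$ is attained precisely when $\beta_\ell(j_i)=\beta_\ell(a)$, so bitwise equality is enforced directly without the large~$y$ value and the~$L^\dagger,L^\ddagger$ dummy levels that the paper needed for \smlaAcr{}. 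The cross-composition is self-contained (it only needs \NP-hardness of \xOTsatAcr{}), whereas your transformation imports the \WKone-hardness of \prob{MCC}; in return, your argument unifies the lower bounds for \smlaAcr{} and \fmlaAcr{} under a common technique and yields a simpler construction with~$y=1$ built in from the start.
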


\appendixproof{thm:FMLAnoPKnpy}
{
  \newcommand{\dbin}{\ensuremath{b}}
  Let~$\dbin^i\ceq B_{\tlog{p}}(i)$
  and denote by~$\ol{\dbin_i}$ the complement of~$\dbin^i$.
  Recall that $\dbin^i$ is of length $2\tlog{p}$.
  With $\dbin^i(c)$ we denote that agents corresponding to ones in~$\dbin^i$ nominate candidate~$c$, 
  and nothing otherwise.

  We give an \ORcroco{} from the \NP-hard \prob{X1-3SAT}.
  We can assume that every variable appears in every clause at most once as a literal.

  \begin{construction}
  \label{constr:FMLAnoPKnpy}
    Let $I_1=(X,\phi_1),\dots,I_p=(X,\phi_p)$
    be~$p=2^q$ instances of \prob{X1-3SAT}
    over the same set~$X$ of~$N$ variables,
    where~$\phi_i=\bigwedge_{j=1}^{M} K_j^{(i)}$ for every $i\in\set{p}$
    (we can assume these properties to hold for the input instances 
    as they form a polynomial equivalence relation).
    We construct an instance~$I=(A,C,U,k,x,y)$ with~$x=0$ and
    $k=y=1$ as follows
    (see \cref{fig:FMLAnoPKnpy} for an illustration).
    \begin{figure*}[t!]
    \centering
    \begin{tikzpicture}
      \def\xr{1.1}
      \def\yr{1}
      \def\xs{1.2}
      \def\ys{0.9}
      \tikzpreamble{}
      
      \newcounter{ccount}
      \setcounter{ccount}{0}

      \newcommandx{\mknode}[6][1=1,3=1,5=lightgray]{%
        \pgfmathsetmacro\xn{int(\value{ccount})}
        \node (Ax\xn) at (#2*\xr*\xs,#4*\yr*\ys)[anchor=north west,minimum width=#1*\xs*\xr cm,minimum height=#3*\ys*\yr cm,font=\tiny,draw=#5]{#6};
        \addtocounter{ccount}{1}
      }
      \newcommandx{\mknodeS}[6][1=1,3=1,5=lightgray]{%
        \pgfmathsetmacro\xn{int(\value{ccount})}
        \node (Ax\xn) at (#2*\xr*\xs,#4*\yr*\ys)[anchor=north west,minimum width=#1*\xs*\xr cm,minimum height=0.5*#3*\ys*\yr cm,font=\tiny,draw=#5]{#6};
        \addtocounter{ccount}{1}
      }
      
      \mknode[10]{0}[6]{4}{$\emptyset$};
      \mknode{0}{-2}[red]{$\ol{\dbin^1}(c^*)$}
      \mknode{0}{-3}[red]{$\ol{\dbin^1}(c^*)$}
      \mknode{0}{-4}[red]{$\vdots$}
      \mknode{0}{-5}[red]{$\ol{\dbin^1}(c^*)$}
      \mknode{0}{-6}[red]{$\ol{\dbin^1}(c^*)$}
      \mknodeS{0}{-7}[red]{$c^*$}
      
      \draw [decorate,decoration={brace,amplitude=4pt},xshift=-2pt,yshift=0pt] (0,-3*\yr*\ys) -- (0,-2*\yr*\ys) node [black,midway,xshift=-5pt,anchor=east] {\footnotesize$x_1$};
      \draw [decorate,decoration={brace,amplitude=4pt},xshift=-2pt,yshift=0pt] (0,-4*\yr*\ys) -- (0,-3*\yr*\ys) node [black,midway,xshift=-5pt,anchor=east] {\footnotesize$\lneg{x_1}$};
    
      \draw [decorate,decoration={brace,amplitude=4pt},xshift=-2pt,yshift=0pt] (0,-6*\yr*\ys) -- (0,-5*\yr*\ys) node [black,midway,xshift=-5pt,anchor=east] {\footnotesize$x_N$};
      \draw [decorate,decoration={brace,amplitude=4pt},xshift=-2pt,yshift=0pt] (0,-7*\yr*\ys) -- (0,-6*\yr*\ys) node [black,midway,xshift=-5pt,anchor=east] {\footnotesize$\lneg{x_N}$};
    
      \node at (Ax6.west)[anchor=east]{\footnotesize$a^*$:};
      
      \mknode{1}[5]{-2}[red]{$\cdots$};
      \mknodeS{1}{-7}[red]{$\cdots$};
      
      \mknode{2}{-2}[red]{$\ol{\dbin^p}(c^*)$}
      \mknode{2}{-3}[red]{$\ol{\dbin^p}(c^*)$}
      \mknode{2}{-4}[red]{$\vdots$};
      \mknode{2}{-5}[red]{$\ol{\dbin^p}(c^*)$};
      \mknode{2}{-6}[red]{$\ol{\dbin^p}(c^*)$};
      \mknodeS{2}{-7}[red]{$c^*$};
      
      \mknodeS[9]{3}{-7}{$\emptyset$};
      
      \mknode{3}{-2}[blue]{$\dbin^1(c_\true)$}
      \mknode{3}{-3}[blue]{$\dbin^1(c_\false)$};
      \mknode{3}[3]{-4}[blue]{$\emptyset$};
      
      \mknode{4}{-2}[blue]{$\cdots$}
      \mknode{4}{-3}[blue]{$\cdots$};
      \mknode{4}[3]{-4}[blue]{$\emptyset$};
      
      \mknode{5}{-2}[blue]{$\dbin^p(c_\true)$}
      \mknode{5}{-3}[blue]{$\dbin^p(c_\false)$};
      \mknode{5}[3]{-4}[blue]{$\emptyset$};
      
      \mknode{6}[5]{-2}[blue]{$\cdots$}
      
      \mknode{7}[3]{-2}[blue]{$\emptyset$};
      \mknode{7}{-5}[blue]{$\dbin^1(c_\true)$}
      \mknode{7}{-6}[blue]{$\dbin^1(c_\false)$};
      
      \mknode{8}[3]{-2}[blue]{$\emptyset$};
      \mknode{8}{-5}[blue]{$\cdots$};
      \mknode{8}{-6}[blue]{$\cdots$};
      
      \mknode{9}[3]{-2}[blue]{$\emptyset$};
      \mknode{9}{-5}[blue]{$\dbin^p(c_\true)$}
      \mknode{9}{-6}[blue]{$\dbin^p(c_\false)$};
      
      \mknode{10}{4}{\scalebox{0.75}{$\vdots$}}
      \mknodeS{10}{3}{$\emptyset$}
      \mknodeS{10}{2.5}{$c_\true$}
      \mknodeS{10}{2}{$\emptyset$}
      \mknode{10}{1.5}{\scalebox{0.75}{$\vdots$}}
      \mknodeS{10}{0.5}{$\emptyset$}
      \mknodeS{10}{-0}{$c_\false$}
      \mknodeS{10}{-0.5}{$\emptyset$}
      \mknode{10}{-1}{\scalebox{0.75}{$\vdots$}}
      \mknode{10}{-2}{$\dbin^1(c_\true)$}
      \mknode{10}{-3}{$\dbin^1(c_\false)$};
      \mknode{10}[3]{-4}{$\emptyset$};
      
      \draw [decorate,decoration={brace,amplitude=4pt},xshift=0pt,yshift=2pt] (10*\xs*\xr,4*\yr*\ys) -- (11*\xs*\xr,4*\yr*\ys) node [black,midway,yshift=10pt] {\footnotesize$x_1$};

      \node at (Ax37.west)[anchor=east]{\footnotesize$(x_1 \lor \cdots)=K_i^{(1)} \cong a_i$:};
      \node at (Ax41.west)[anchor=east]{\footnotesize$(\lneg{x_1} \lor \cdots)=K_j^{(1)} \cong a_j$:};
      
      \mknode{11}[6]{4}{$\cdots$};
      \mknode{11}[5]{-2}{$\cdots$};
      
      \foreach \x/\y in {
          0/$L_1^*$, 1/$\cdots$, 2/$L_p^*$, 3/$L_{1,1}$, 4/$\cdots$, 5/$L_{1,p}$, 
          6/$\cdots$, 7/$L_{N,1}$, 8/$\cdots$, 9/$L_{N,p}$, 10/$L_1^{(1)}$, 11/$\cdots$
      }{
        \node at (\x*\xs*\xr+0.5*\xs*\xr,-8*\ys*\yr)[]{\y};
      }
    \end{tikzpicture}
    \caption{Illustration to~\cref{constr:FMLAnoPKnpy}. 
    In this illustrative example, 
    $K^{(1)}_i$ and~$K^{(1)}_j$ are the only clauses of~$\phi_1$ containing~$x_1$ as a literal.}
    \label{fig:FMLAnoPKnpy}
    \end{figure*}
    Let~$C\ceq\{c_\true,c_\false\}$ and~$c^*\ceq c_\true$.
    Let~$A=A'\cup A_X \cup \{a^*\}$,
    where~$A'\ceq \{a_1,\dots,a_M\}$,
    and~$A_X\ceq \{a_{x_i}^\ell,a_{\lneg{x_i}}^\ell\mid i\in\set{N},\ell\in\set{q}\}$.
    \begin{compactenum}[(i)]
    \item There are levels~$L_1^*,\dots,L_p^*$
    where~$a^*$ nominates~$c^*$ in each level,
    and in~$L_j^*$ 
    agents~$a_{x_i}^\ell$ and $a_{\lneg{x_i}}^\ell$
    nominate~$c^*$ if there is a 0 in~$B_{q}(j)$ at position~$\ell$,
    and~$\emptyset$ otherwise,
    for every~$i\in\set{N}$.
    All the other agents nominate nothing.
    \item There are levels~$L_{i,j}$ with~$i\in\set{N}$ and~$j\in\set{p}$
    such that agent~$a_{x_i}^\ell$ nominates~$c_\true$ if there is a 1 in~$B_{q}(j)$ at position~$\ell$
    and agent~$a_{\lneg{x_i}}^\ell$ nominates~$c_\false$ if there is a 1 in~$B_{q}(j)$ at position~$\ell$,
    and each nominate~$\emptyset$ otherwise.
    All the other agents nominate nothing.
    \item There are levels~$L_{i}^{(j)}$ with~$i\in\set{N}$ and~$j\in\set{p}$ 
    such that for each~$h\in\set{M}$,
    we have that
    \[ \text{agent~$a_h$ nominates} \begin{cases}
                                    c_\true, & \text{if~$x_i$ appears in~$K_h^{(j)}$},\\
                                    c_\false, & \text{if~$\lneg{x_i}$ appears in~$K_h^{(j)}$},\\
                                    \emptyset,& \text{otherwise},
                                    \end{cases}
    \] 
    agent~$a_{x_i}^\ell$ nominates~$c_\true$ if there is a 1 in~$B_{q}(j)$ at position~$\ell$
    and agent~$a_{\lneg{x_i}}^\ell$ nominates~$c_\false$ if there is a 1 in~$B_{q}(j)$ at position~$\ell$,
    and each nominate~$\emptyset$ otherwise.
    All the other agents nominate no candidate.
    \end{compactenum}
    This finishes the construction.
    \cqed
  \end{construction}

  Due to~$a^*$ having its only nominations in the levels~$L_1^*,\dots,L_p^*$,
  we have the following.

  \begin{observation}
  \label{obs:FMLAnoPKnpy}
  In every solution
  there is exactly one~$j\in\set{p}$ such that
  the committee in~$L_j^*$ contains~$c^*$.
  \end{observation}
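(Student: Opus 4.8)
The plan is to read the claim directly off the nomination pattern of the distinguished agent~$a^*$ together with the equitability constraint of the model. First I would invoke \cref{constr:FMLAnoPKnpy}: the only nominations of~$a^*$ occur in the levels~$L_1^*,\dots,L_p^*$, and in each of these levels~$a^*$ nominates the single candidate~$c^*$ (recall $c^*=c_\true$), while in every other level~$a^*$ nominates nothing. This isolation of~$a^*$'s activity to the~$L_j^*$ levels is the whole point of the observation, so I would state it explicitly as the starting premise.

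Next I would compute the agent score of~$a^*$ in an arbitrary solution sequence~$(C_1,\dots,C_\tau)$. Since~$k=1$, each committee holds at most one candidate, so for each~$j\in\set{p}$ the event ``$c^*\in C_{L_j^*}$'' is an unambiguous binary alternative and~$|u_{L_j^*}(a^*)\cap C_{L_j^*}|$ equals~$1$ exactly when~$c^*\in C_{L_j^*}$ and~$0$ otherwise. As all terms of~$a^*$'s score coming from levels outside~$\{L_1^*,\dots,L_p^*\}$ vanish, the agent score of~$a^*$ is precisely the number of indices~$j\in\set{p}$ with~$c^*\in C_{L_j^*}$, i.e. a pure $0/1$-count over the~$p$ levels~$L_j^*$.

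The last step is to apply that we are in the \emph{equitable} variant~\fmlaAcr{} with~$y=1$: every agent, and in particular~$a^*$, must achieve agent score \emph{exactly}~$y=1$. Combined with the count just derived, this forces the number of~$j\in\set{p}$ with~$c^*\in C_{L_j^*}$ to be exactly one, which is the assertion. I expect no genuine technical obstacle here; the statement is an immediate consequence of isolating~$a^*$'s nominations and invoking the exact-equality constraint. The only point needing (trivial) care is the use of~$k=1$ to make ``$c^*$ lies in the committee of~$L_j^*$'' a clean binary event, after which equitability pins the resulting count to~$1$.
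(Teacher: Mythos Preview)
Your proposal is correct and follows exactly the reasoning the paper intends: the paper merely states ``Due to~$a^*$ having its only nominations in the levels~$L_1^*,\dots,L_p^*$'' as the entire justification, and you have spelled out that single step in detail. One tiny remark: the use of~$k=1$ is harmless but unnecessary, since $|u_t(a^*)\cap C_t|\in\{0,1\}$ already holds by virtue of~$u_t(a^*)$ being a single candidate.
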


  \begin{lemma}
  \label{lem:FMLAnoPKnpy}
  Given any solution
  where~$j\in\set{p}$ is such that
  the committee in~$L_j^*$ contains~$c^*$,
  then for every~$i\in\set{N}$,
  the committee in each of $L_{i,j'}$ and~$L_{i}^{(j')}$ is disjoint from~$\{c_\true,c_\false\}$ if~$j\neq j'$ 
  and contains exactly one candidate from~$\{c_\true,c_\false\}$ if~$j=j'$.
  \end{lemma}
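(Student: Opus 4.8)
The plan is to exploit that $k=1$ and $y=1$ hold throughout, so every committee is either empty or a singleton $\{c_\true\}$ or $\{c_\false\}$, and every agent must be satisfied in \emph{exactly} one level. I fix the given solution and the index $j\in\set{p}$ from the hypothesis, so that by \cref{obs:FMLAnoPKnpy} the level $L_j^*$ is the unique star-level whose committee equals $\{c^*\}=\{c_\true\}$; since no agent nominates $c_\false$ in any star-level, I may take every other star-level committee to be empty without affecting any agent's satisfaction. The argument rests on two elementary facts about the encoding $B_q$, both consequences of the second half being the bitwise complement of the first: (P1) $B_q(j)$ has at least one $1$-entry, and (P2) for distinct $j,j'$ there is a position that is $1$ in $B_q(j')$ and $0$ in $B_q(j)$. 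For each position $\ell$ of $B_q$ there are selector agents $a_{x_i}^\ell$ and $a_{\lneg{x_i}}^\ell$; note that $a_{x_i}^\ell$ only ever nominates $c_\true$ (it nominates $c^*=c_\true$ in star-levels and $c_\true$ in the $L_{i,\cdot}$ and $L_i^{(\cdot)}$ levels), whereas $a_{\lneg{x_i}}^\ell$ nominates $c_\true$ in star-levels and $c_\false$ in the $L_{i,\cdot}$, $L_i^{(\cdot)}$ levels.

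First I would settle the ``$j'\neq j$'' case. The driving observation is that any selector agent keyed to a $0$-position of $B_q(j)$ is already satisfied in $L_j^*$ and hence, by equitability, may never be satisfied again. Given $j'\neq j$, I pick via (P2) a position $\ell^*$ that is $0$ in $B_q(j)$ and $1$ in $B_q(j')$. Then $a_{x_i}^{\ell^*}$ is satisfied in $L_j^*$ (it nominates $c_\true$ there because $B_q(j)_{\ell^*}=0$), so in $L_{i,j'}$ and $L_i^{(j')}$ — where it nominates $c_\true$ since $B_q(j')_{\ell^*}=1$ — the committee must avoid $c_\true$; symmetrically, $a_{\lneg{x_i}}^{\ell^*}$ forces those two committees to avoid $c_\false$. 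As $k=1$, a committee containing neither candidate is empty, so $L_{i,j'}$ and $L_i^{(j')}$ are disjoint from $\{c_\true,c_\false\}$, and this holds for every $i\in\set{N}$.

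Next I would treat the ``$j'=j$'' case using a $1$-position. By (P1) fix $\ell$ with $B_q(j)_\ell=1$. Then for every $i$ the agents $a_{x_i}^\ell$ and $a_{\lneg{x_i}}^\ell$ are satisfied in no star-level ($L_j^*$ assigns them $\emptyset$, and all other star-levels are empty). Combined with the first case — which empties every $L_{i,j'},L_i^{(j')}$ with $j'\neq j$ — the only levels in which $a_{x_i}^\ell$ can be satisfied are $L_{i,j}$ and $L_i^{(j)}$, and only via $c_\true$; equitability then forces \emph{exactly one} of these two committees to equal $\{c_\true\}$. The identical reasoning applied to $a_{\lneg{x_i}}^\ell$, whose only satisfying candidate is $c_\false$, forces exactly one of the two to equal $\{c_\false\}$. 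Since a single committee cannot be both $\{c_\true\}$ and $\{c_\false\}$, the levels $L_{i,j}$ and $L_i^{(j)}$ realize $\{c_\true\}$ and $\{c_\false\}$ in some order; in particular each contains exactly one candidate from $\{c_\true,c_\false\}$, as claimed.

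The main obstacle is this last case: knowing that \emph{one} of $L_{i,j}$, $L_i^{(j)}$ is nonempty is not enough, since the statement asserts that \emph{both} are, and the clean way to get this is to run the exactly-once constraint separately for the positive selector $a_{x_i}^\ell$ (carrying $c_\true$) and the negative selector $a_{\lneg{x_i}}^\ell$ (carrying $c_\false$) and to note that their satisfying candidates differ. Verifying that the complement structure of $B_q$ genuinely yields (P1) and (P2) — so that a suitable position $\ell$ or $\ell^*$, together with a corresponding selector agent, always exists — is the other point requiring care; everything else is bookkeeping over exactly where each selector agent places its nominations.
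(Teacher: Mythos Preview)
Your proof is correct and follows essentially the same approach as the paper's: both arguments hinge on the selector agents $a_{x_i}^\ell,a_{\lneg{x_i}}^\ell$ together with the bit-complement structure of $B_q$, using double-satisfaction to rule out $j'\neq j$ and unsatisfied selectors to force nonemptiness at $j'=j$. Your treatment is in fact more explicit than the paper's---you spell out both the $c_\true$ and $c_\false$ subcases and both level families $L_{i,j'}$, $L_i^{(j')}$, and in the $j'=j$ case you obtain the slightly stronger conclusion that the two committees are $\{c_\true\}$ and $\{c_\false\}$ in some order---whereas the paper leaves several of these to symmetry.
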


  \begin{proof}
  Suppose firstly that~$j\neq j'$ but $C_{i,j'}\cap \{c_\true,c_\false\}\neq \emptyset$,
  say~$c_\true\in C_{i,j'}$.
  Due to the unique committee~$C_j^*$,
  there is~$a_{x_i}^\ell$ having~$c^*$ nominated in~$C_j^*$
  and~$c_\true$ in~$C_{i,j'}$
  since there is a position~$\ell$ such that~$B_q(j)$ is 0 and~$B_q(j')$ is 1.
  
  Next,
  suppose that there is~$C_{i,j}\cap \{c_\true,c_\false\}= \emptyset$.
  If~$C_i^{(j)}=\{c_\true\}$,
  then there is~$a_{\lneg{x_i}}^\ell$ being never satisfied.
  If~$C_i^{(j)}=\{c_\false\}$,
  then there is~$a_{x_i}^\ell$ being never satisfied.
  If~$C_i^{(j)}=\emptyset$ or~$C_i^{(j)}=\{c^*\}$,
  then we find two such agents.
  \end{proof}

  \begin{proof}[Proof of~\cref{thm:FMLAnoPKnpy}]
  \RD{} 
  Let~$I_t$ be a \yes-instance.
  Let~$f\colon X\to\{\true,\false\}$ be an exact-1-in-3 truth assignment,
  and let~$\lneg{f}$ be the complementary assignment of~$f$.
  In~$L_r^*$, $c^*$ is nominated if~$r=t$,
  and no candidate is nominated otherwise.
  For each~$i\in\set{N}$,
  in~$L_{i,r}$, $c_{\lneg{f}(x_i)}$ is nominated if~$r=t$,
  and no candidate is nominated otherwise.
  Finally,
  for each~$i\in\set{N}$,
  in~$L_i^{(t)}$ candidate~$c_{f(x_i)}$ is nominated.
  By construction,
  each agent in~$A_X\cup\{a^*\}$ has exactly one committee respecting their nomination.
  Consider an agent~$a_j$ with~$j\in\set{M}$
  having either no committee or at least two committees respecting their nomination.
  Note that by construction,
  we have that~$f(x_i)=\true$ if and only if~$c_\true\in C_i^{(t)}$.
  Hence,
  in either case,
  $f$ is not an exact-1-in-3 truth assignment,
  a contradiction.
  
  \LD{}
  Let~$\calC$ be a solution to~$I$,
  and let~$t\in\set{p}$ such that~$L_t^*$ is the unique level where~$c^*$ is nominated
  (see~\cref{obs:FMLAnoPKnpy}).
  Due to~\cref{lem:FMLAnoPKnpy},
  we know that each of~$C_i^{(t)}\cap \{c_\true,c_\false\}\neq \emptyset$.
  Let~$f\colon X\to \{\true,\false\}$ be such that~$f(x_i)=\true \iff c_\true\in C_i^{(t)}$.
  Since we have that~$a_j$ has a score of~$y$ if and only if 
  there are~$y$ variables in~$K_j^{(t)}$ set to true by~$f$,
  it follows that~$f$ is an exact-1-in-3 truth assignment for~$I_t=(X,\phi_t)$.
  \end{proof}
}

\section{Epilogue}
\label{sec:epilogue}
\appendixsection{sec:epilogue}

We settled the parameterized complexity
for both \smlaAcr{} and \fmlaAcr{}
for several natural parameters and their combinations.
We found that both problems become tractable only
if either the number of agents
or the solution size is lower bounding the parameter.
Hence,
short trips with few per-day activities like in our introductory example
can be tractable even if many agents participate and if there are many activities available.
Also the practically relevant setting where few agents have to select from many options,
where egalitarian or even equitable solutions appear particularly relevant,
can be solved efficiently.

Our two problems have a very similar complexity fingerprint,
yet, they distinguish through the lens of efficient and effective data reduction:
While
\smlaAcr{} admits a problem kernel of size polynomial in~$n+y$,
\fmlaAcr{} presumably does not.
In other words,
it appears unlikely that we can efficiently and effectively
shrink the number of levels for \fmlaAcr{}.

\paragraph{Other Variants.}
Looking at the constraints in \smlaAcr{} and \fmlaAcr{},
one quickly arrives at the following general problem.
Herein,
we generalize to preference functions, where each agent assigns some utility value to each candidate.
Moreover,
we use generalized OWA-based aggregation, 
e.g.,
allowing $\max(\cdot)$ and thus modeling rules such as Chamberlin-Courant.
Let $\sim\,\in\{\leq,=,\geq\}$,
$\Lambda=\{\Lambda_k \in \R^k \mid k \in \N\}$
be a family of (OWA)~vectors, and
${\cal U}$ be a class of preference functions.
We write $\vec{u} (C')$ for the vector of utilities that $u$~assigns to the candidates from~$C'$
sorted in nonincreasing order.
See \citet{BFKKN20} for details.

\newcommand{\vecprod}[2]{\left\langle#1,#2\right\rangle}
\decprob{\bicmce[$\sim_{\rm{k}}$][$\sim_{\rm{x}}$][$\sim_{\rm{y}}$]{$\Lambda,{\cal U}$}}{bicmcegen}
{A set~$A$ of agents,
a set~$C$ of candidates,
a sequential profile of preference functions~$U=(u_{a,t}: C \rightarrow \Nzero \mid a \in A, t\in\set{\tau})$ each from~$\cal U$,
and three integers~\mbox{$k,x,y\in\Nzero$}.}
{Is there a sequence~$C_1,\dots,C_\tau\subseteq C$
such that
\begin{align}
 \text{$\forall t\in\set{\tau}$:} && |C_t| &\sim_{\rm{k}} k, \label{prob:bicmcegen:k}\\
 \text{$\forall t\in\set{\tau}$:} && \sum\nolimits_{a \in A} \vecprod{\Lambda_{|C_t|}}{\vec{u}_{a,t}(C_t)} &\sim_{\rm{x}} x, \label{prob:bicmcegen:x}\\
 \text{and $\forall a\in A$:} && \sum\nolimits_{t=1}^\tau \vecprod{\Lambda_{|C_t|}}{\vec{u}_{a,t}(C_t)} &\sim_{\rm{y}} y?\label{prob:bicmcegen:y}
\end{align}
}

\newcommand{\ourOWA}{SUM}
\newcommand{\ourPREF}{NOM}
\noindent
Let SUM denote the family of OWA-vectors containing only 1-entries,
and NOM be the class of preference functions that contain only 0-entries except for at most one 1-entry.
We have that \smlaAcr{} is \bicmce[$\leq$][$\geq$][$\geq$]{\ourOWA,\ourPREF}
and \fmlaAcr{} is \bicmce[$\leq$][$\geq$][$=$]{\ourOWA,\ourPREF}.
It turns out that all variants except for 
\bicmce[$\leq$][$\leq$][$\leq$]{\ourOWA,\ourPREF}
and
\bicmce[$\geq$][$\geq$][$\geq$]{\ourOWA,\ourPREF} 
are \NP-hard.
In fact,
most of the variants 
(including \smlaAcr{} and \fmlaAcr{})
are \NP-hard even if every voter does not change their vote over the levels.
We defer the details to the appendix~\apprefX{app:sec:epilogue}.

\toappendix
{
  \begin{table*}[t!]
    \centering
    \newcommand{\colBoxA}{magenta!30!white}
    \newcommand{\colBoxB}{orange!40!white}
    \newcommand{\colBoxC}{cyan!70!white}
    \begin{tabular}{r||p{1.25cm}p{1.25cm}p{1.25cm}|p{1.25cm}p{1.25cm}p{1.25cm}|p{1.25cm}p{1.25cm}p{1.25cm}}\toprule
      & \multicolumn{3}{c|}{``$\leq k$''} & \multicolumn{3}{c|}{``$= k$''} & \multicolumn{3}{c}{``$\geq k$''}
      \\\midrule
      $x\backslash y$ & $\leq$ & $=$ & $\geq$ & $\leq$ & $=$ & $\geq$ & $\leq$ & $=$ & $\geq$ \\\midrule\midrule
      $\leq$ & \classP~\tref{obs:easygens} & \cellcolor{\colBoxA}\NP-h.  & \cellcolor{\colBoxA}\NP-h.   
        & \cellcolor{\colBoxB}\NP-h. & \cellcolor{\colBoxA}\NP-h.   & \cellcolor{\colBoxA}\NP-h.  
        & \cellcolor{\colBoxB}\NP-h.   & \cellcolor{\colBoxA}\NP-h.   & \cellcolor{\colBoxA}\NP-h.   \\
      $=$ & \cellcolor{\colBoxB}\NP-h.   & \cellcolor{\colBoxA}\NP-h.   & \cellcolor{\colBoxA}\NP-h.   
        & \cellcolor{\colBoxB}\NP-h.   & \cellcolor{\colBoxA}\NP-h.   & \cellcolor{\colBoxA}\NP-h.  
        & \cellcolor{\colBoxB}\NP-h.   & \cellcolor{\colBoxA}\NP-h.   & \cellcolor{\colBoxA}\NP-h.   \\
      $\geq$ & \cellcolor{\colBoxB}\NP-h.   & \cellcolor{gray!15}\NP-h. \tref{thm:threelevels} & \cellcolor{gray!25}\NP-h. \tref{thm:twolevels}     
        & \cellcolor{\colBoxB}\NP-h.   & \cellcolor{\colBoxA}\NP-h. & \NP-h. \tref{thm:twolevels}
        & \cellcolor{\colBoxB}\NP-h.   & \cellcolor{\colBoxB}\NP-h. & \classP~\tref{obs:easygens} \\
      \bottomrule
    \end{tabular}
    \caption{$\sim_{\rm{x}}$ versus $\sim_{\rm{y}}$ versus~$\sim_{\rm{k}}$
    for \bicmce[$\sim_{\rm{k}}$][$\sim_{\rm{x}}$][$\sim_{\rm{y}}$]{\ourOWA,\ourPREF}. 
    The darkgray-colored cell corresponds to \smlaAcr{},
    and the lightgray-colored cell to \fmlaAcr{}.
    \colorbox{\colBoxA}{\Cref{cor:3part}\phantom{p}}
    \colorbox{\colBoxB}{\Cref{prop:rmis}}
    }
    \label{tab:variants}
  \end{table*}
  \cref{tab:variants} gives an overview on the computational complexity results
  for the variants.
  The following is immediate
  (by selecting no or all candidates).

  \begin{observation}\label{obs:easygens}
    Each of 
    \bicmce[$\leq$][$\leq$][$\leq$]{\ourOWA,\ourPREF}
    and
    \bicmce[$\geq$][$\geq$][$\geq$]{\ourOWA,\ourPREF} 
    is in~\classP{}.
  \end{observation}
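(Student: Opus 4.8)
The plan is to exploit the two extreme committee choices suggested by the hint, using the fact that under \ourOWA{} and \ourPREF{} the aggregation collapses to a monotone count. Under \ourPREF{}, each preference function $u_{a,t}$ assigns utility~$1$ to at most one candidate and $0$ to all others, so with the all-ones vector $\Lambda_{|C_t|}$ from \ourOWA{} we have $\vecprod{\Lambda_{|C_t|}}{\vec{u}_{a,t}(C_t)}=\sum_{c\in C_t}u_{a,t}(c)\in\{0,1\}$, i.e.\ the indicator of whether the (at most one) candidate valued by~$a$ at level~$t$ lies in~$C_t$. Consequently the left-hand side of~\eqref{prob:bicmcegen:x} is exactly the committee score at level~$t$ and the left-hand side of~\eqref{prob:bicmcegen:y} is exactly the agent score of~$a$; since each summand is an indicator, both quantities are \emph{monotone non-decreasing} under enlarging the chosen committees.

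First I would dispatch \bicmce[$\leq$][$\leq$][$\leq$]{\ourOWA,\ourPREF} by selecting no candidates, i.e.\ $C_t=\emptyset$ for every $t\in\set{\tau}$. Then the left-hand sides of~\eqref{prob:bicmcegen:k}, \eqref{prob:bicmcegen:x}, and~\eqref{prob:bicmcegen:y} all equal~$0$, and because $k,x,y\in\Nzero$ are nonnegative, every ``$\leq$''-constraint is satisfied simultaneously. Hence every instance is a \yes-instance, and the algorithm simply outputs \yes.

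For \bicmce[$\geq$][$\geq$][$\geq$]{\ourOWA,\ourPREF} I would instead test the maximal sequence $C_t=C$ for every~$t$. By the monotonicity observed above, this choice simultaneously maximizes every committee score and every agent score, while giving $|C_t|=m$. Therefore, if $m\geq k$ and this all-candidates sequence meets the ``$\geq x$'' and ``$\geq y$'' constraints, the instance is a \yes-instance. Conversely, if $m<k$ then $|C_t|\geq k$ is unsatisfiable (as $C_t\subseteq C$ forces $|C_t|\leq m$), and if any score constraint fails under the maximizer $C_t=C$ it fails for every sequence; in both cases the instance is a \no-instance. Testing $m\geq k$ and evaluating the two score constraints for the single candidate-maximal sequence is clearly polynomial, so the problem lies in \classP.

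I do not expect any substantial obstacle here: the entire argument rests on the monotonicity of the \ourPREF{}/\ourOWA{} scores, which is immediate, and the only point needing care is the boundary case $k>m$ in the ``$\geq$''-variant, where the size constraint by itself already forces a negative answer.
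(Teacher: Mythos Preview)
Your argument is correct and matches the paper's own justification, which is the parenthetical remark ``by selecting no or all candidates.'' You have simply spelled out the details (in particular the boundary case $k>m$ and the possibility that scores under $C_t=C$ still miss $x$ or $y$ in the $\geq$-variant) that the paper leaves implicit.
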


  \begin{proposition}\label{prop:rmis}
    Each of the following problems
  is~\NP-hard,
  where $\sim^* \in\{\leq,=,\geq\}$
  and $\sim^\dagger \in\{=,\geq\}$:
  \begin{inparaenum}[(i)]
    \item \bicmce[$\sim^*$][$\geq$][$\leq$]{\ourOWA,\ourPREF},
    \item \bicmce[$\sim^*$][$=$][$\leq$]{\ourOWA,\ourPREF},
    \item \bicmce[$\sim^\dagger$][$\leq$][$\leq$]{\ourOWA,\ourPREF}, \label{prop:rmis:dagger}
    and, 
    \item \bicmce[$\geq$][$\geq$][$=$]{\ourOWA,\ourPREF}.
  \end{inparaenum} 
  \end{proposition}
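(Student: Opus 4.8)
The plan is to exploit a complementation duality internal to \bicmce{}: replacing every chosen committee $C_t$ by its complement $C\setminus C_t$ turns an instance of \bicmce[$\sim_{\rm k}$][$\sim_{\rm x}$][$\sim_{\rm y}$]{\ourOWA,\ourPREF} into an instance of the ``mirrored'' problem in which each relation is flipped according to $\leq\,\leftrightarrow\,\geq$ (fixing $=$) and the thresholds are shifted. Each of the nine listed problems has such a mirror among the cells already recorded as \NP-hard in \cref{tab:variants}, namely the $\sim_{\rm y}\in\{{=},{\geq}\}$ cells covered by \Cref{cor:3part} together with the \smlaAcr{}-cells of \Cref{thm:twolevels}. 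So it suffices to set up this duality as an equivalence-preserving, parameter-transforming reduction and then instantiate it once per problem.

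Concretely, I would first prove a complementation lemma. Let $I=(A,C,U,k,x,y)$ be an instance in which \emph{every} agent nominates a non-empty candidate in every level, write $n=|A|$ and $m=|C|$, and assume $k\le m$, $x\le n$, $y\le\tau$ (degenerate values are discarded as trivial yes/no instances). Then $(C_1,\dots,C_\tau)$ is a solution to $I$ read as \bicmce[$\sim_{\rm k}$][$\sim_{\rm x}$][$\sim_{\rm y}$]{\ourOWA,\ourPREF} if and only if $(C\setminus C_1,\dots,C\setminus C_\tau)$ is a solution to $(A,C,U,m-k,n-x,\tau-y)$ read as the mirrored problem. The size constraint is immediate from $|C\setminus C_t|=m-|C_t|$. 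For the committee score, note that under \ourOWA/\ourPREF{} the level-$t$ score of $C_t$ is exactly the number of agents whose nominee lies in $C_t$; since every agent nominates non-emptily in level $t$, the complementary committee scores exactly $n$ minus this, giving threshold $n-x$. For the agent score, because every agent nominates in all $\tau$ levels, each agent's score under the complement is exactly $\tau$ minus her original score, yielding the uniform threshold $\tau-y$. This ``nominates in every level'' hypothesis is precisely where the uniform shifts $n-x$ and $\tau-y$ come from, so it is the one step to guard.

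The second step is to verify that the source constructions already satisfy this hypothesis, so that the mirrored instances are legitimate inputs to the lemma. The \smlaAcr{}-hardness of \Cref{thm:twolevels}, and likewise its exact-size variant \bicmce[$=$][$\geq$][$\geq$]{\ourOWA,\ourPREF}, uses \Cref{constr:twolevels}, in which each agent $a_{j,j'}$ nominates a candidate in both levels, so no empty nominations occur; and the \Cref{cor:3part}-constructions can be taken with constant, non-empty profiles (each agent nominating one fixed candidate throughout), which trivially satisfies the hypothesis. I expect checking this for \Cref{cor:3part} to be the main obstacle, since if any of its hard instances contained abstaining agents the complement threshold for the $y$-constraint would no longer be uniform across agents; if needed, one pads each abstaining agent with a private throwaway candidate to restore the property without changing the answer.

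Finally I would instantiate the lemma once per problem. Groups~(i) and~(ii) mirror onto the $\sim_{\rm y}={\geq}$ cells: for instance \bicmce[$\geq$][$\geq$][$\leq$]{\ourOWA,\ourPREF} mirrors \bicmce[$\leq$][$\leq$][$\geq$]{\ourOWA,\ourPREF}, and \bicmce[$=$][$=$][$\leq$]{\ourOWA,\ourPREF} mirrors \bicmce[$=$][$=$][$\geq$]{\ourOWA,\ourPREF}, all \NP-hard by \Cref{cor:3part}; the remaining four members of~(i)--(ii) are handled identically. Group~(iii) mirrors \bicmce[$\geq$][$\leq$][$\leq$]{\ourOWA,\ourPREF} and \bicmce[$=$][$\leq$][$\leq$]{\ourOWA,\ourPREF} onto \bicmce[$\leq$][$\geq$][$\geq$]{\ourOWA,\ourPREF} (that is, \smlaAcr) and \bicmce[$=$][$\geq$][$\geq$]{\ourOWA,\ourPREF}, both from \Cref{thm:twolevels}; and~(iv), \bicmce[$\geq$][$\geq$][$=$]{\ourOWA,\ourPREF}, mirrors \bicmce[$\leq$][$\leq$][$=$]{\ourOWA,\ourPREF}, again \NP-hard by \Cref{cor:3part}. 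Each mirroring is a polynomial-time, answer-preserving substitution of parameters, hence a valid many-one reduction from an \NP-hard problem, so all nine problems are \NP-hard.
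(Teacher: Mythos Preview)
Your complementation argument is correct and is a genuinely different route from the paper's. The paper proves \cref{prop:rmis} by a single direct reduction from \textsc{Regular Multicolored Independent Set} (\cref{constr:rmis}): it adds vertex-, edge-, and dummy-agents together with equalizing candidates so that, in every one of the nine variants, the $y$-constraint forces each agent to be satisfied at most once while the $x$- or $k$-constraint forces at least one candidate per level; selecting one vertex per color class then corresponds exactly to a multicolored independent set. Your approach instead observes the structural duality $C_t\mapsto C\setminus C_t$ and reduces each of the nine target cells to its mirror among the cells already handled by \cref{cor:3part} and the CBVC-based hardness behind \cref{thm:twolevels}; your ``no empty nominations'' hypothesis is indeed met by both \cref{constr:3part} and \cref{constr:twolevels}, so the uniform shifts $m-k$, $n-x$, $\tau-y$ go through. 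What you gain is elegance and reuse: no new gadgetry, and you expose a symmetry in the problem family that the paper only hints at in its later remark about CBIS. What the paper's route gains is self-containment: it does not need \cref{cor:3part} or the separate $(=\mid\geq,\geq)$ hardness (the latter is stated after \cref{prop:rmis} in the paper, though independently of it, so there is no circularity in your use of it---just a forward reference you should flag).
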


  The reduction is from the following \NP-hard problem.

  \decprob{Regular Multicolored Indepedent Set (RMIS)}{rmis}
  {An undirected $d$-regular graph~$G=(V,E)$ with~$V=V_1\uplus\dots\uplus V_r$ where each~$V_i$ forms an independent set.}
  {Is there an independent set~$S$ with~$|S\cap V_i|=1$ for all~$i\in\set{r}$?}

  \begin{construction}
  \label{constr:rmis}
  Let~$C\ceq \{c_v\mid v\in V\}\cup C_= \cup\{c_+\}$ where
  $C_= \ceq \{c_=^v\mid v\in V\}\cup \{c_=^e\mid e\in E\}$. 
  Let~$A\ceq A_V\uplus A_E\uplus A_+$ 
  with~$A_V\ceq\{a_v\mid v\in V\}$,
  $A_E\ceq\{a_e\mid e\in E\}$, and
  $A_+\ceq \{a_+^1,\dots,a_+^{d+1}\}$.
  Let~$\tau\ceq r+1$.
  For~$a_v$ with~$v\in V_i$,
  agent~$a_v$ nominates~$c_v$ only in level~$i$,
  $c_=^v$ only in the $(r+1)$st level,
  and nothing in the other layers.
  For~$a_e$ with~$e=\{v,w\}$ and~$v\in V_i$ and~$w\in V_j$,
  agent~$a_e$ nominates~$c_v$ in level~$i$,
  $c_w$ in level~$j$,
  $c_=^e$ in the $(r+1)$st level,
  and nothing in the other layers.
  Each agent from~$A_+$ nominates no candidate in the first~$r$ levels and candidate~$c_+$ in the $(r+1)$st level.
  Let~$k\ceq1$,
  $y\ceq 1$,
  and 
  $x\ceq d+1$.
  \cqed
  \end{construction}

  \begin{proof}
  Follows from the fact that in each variant,
  no agent is allowed to be satisfied more than once.
  However,
  in every level at least one candidate must be chosen,
  either due to the~$x$-constraint or the~$k$-constraint.
  \end{proof}
  
  The \NP-hardness of~\bicmce[$=$][$\geq$][$\geq$]{\ourOWA,\ourPREF}
  follows from the same reduction
  from CBVC
  (see~\cref{ssec:dichotau}) 
  behind~\cref{thm:twolevels}.

  \begin{remark}
    The following problem is in some sense dual to CBVC: 

    \decprob{Constraint Bipartite Indepedent Set (CBIS)}{cbis}
    {An undirected bipartite graph~$G=(V,E)$ with~$V=V_1\uplus V_2$ and~$k_1,k_2\in\N$.}
    {Is there a set~$X\subseteq V$ with~$|X\cap V_i|\geq k_i$ for each~$i\in\{1,2\}$ such that~$G[X]$ contains no edge?}
    
    It is not hard to see that CBIS is \NP-hard via a reduction from CBVC
    by the classic connection of IS to VC.
    Thus, 
    for several problem variants like \bicmce[$\geq$][$\geq$][$\leq$]{\ourOWA,\ourPREF}
    we can adapt the reduction behind~\cref{thm:twolevels}
    using~CBIS to obtain
    \NP-hardness already for two levels.
    \rqed
  \end{remark}

  We next prove that our main problems 
  \smlaAcr{} and \fmlaAcr{} are \NP-hard even if every level looks the same.
  The reduction will also apply for several variants
  (see~\cref{tab:variants}).

  \begin{proposition}\label{prop:3part}
    Each of \smlaAcr{} and \fmlaAcr{}
    is~\NP-hard,
    even if each agent nominates one and the same candidate in every level.
  \end{proposition}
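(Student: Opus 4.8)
The plan is to reduce from \prob{3-Partition}, which is strongly \NP-hard: given $3m$ positive integers $a_1,\dots,a_{3m}$ with each $a_i\in(B/4,B/2)$ and $\sum_i a_i=mB$, decide whether $\{1,\dots,3m\}$ can be partitioned into $m$ triples each summing to $B$. Because the problem is \emph{strongly} \NP-hard, the $a_i$ (and hence $\sum_i a_i$) are polynomially bounded, so we may afford $\sum_i a_i$ many agents. I would build a single instance that simultaneously witnesses hardness of \smlaAcr{} and \fmlaAcr{}: introduce one candidate $c_i$ per integer, and let exactly $a_i$ agents nominate $c_i$ in \emph{every} level (so the profile is identical across all levels, as required by the statement). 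Set $\tau\ceq m$, $k\ceq 3$, $x\ceq B$, and $y\ceq 1$.

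The two driving observations are that, since nominations never change, the committee score of level $t$ equals $\sum_{c\in C_t} w_c$, where $w_c$ is the number of agents nominating $c$, and the score of any agent nominating $c$ equals the number of levels whose committee contains $c$. For the forward direction, a partition of the integers into $m$ triples of sum $B$ maps to the committee sequence placing the $j$-th triple's candidates into $C_j$: each level then holds exactly three candidates and has committee score exactly $B\ge x$, while each candidate is selected in exactly one level, so every agent has score exactly $1$. This satisfies both the ``$\geq y$'' requirement of \smlaAcr{} and the ``$=y$'' requirement of \fmlaAcr{}.

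The crux, and the step I expect to need the most care, is the backward direction, which must force the intended structure from the numerical constraints alone. Since $y=1$ and every candidate has positive support, each candidate must be selected in at least one level (for \smlaAcr{}), respectively exactly one level (for \fmlaAcr{}). There are $3m$ candidates and only $\tau\cdot k=3m$ candidate-slots across all committees, so in either case every candidate is selected exactly once and every committee is full with exactly three candidates. The committee scores then sum to $\sum_i a_i=mB$; as each of the $m$ scores is at least $x=B$, each equals $B$. Reading off, for each level, the three integers whose candidates were placed there yields a partition of $\{1,\dots,3m\}$ into $m$ triples of sum $B$, i.e.\ a solution to the \prob{3-Partition} instance. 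Note that the restriction $a_i\in(B/4,B/2)$ is not even needed for the slot-counting to pin down triples, but retaining it keeps us inside the strongly \NP-hard version. Hence both \smlaAcr{} and \fmlaAcr{} are \NP-hard even when every agent nominates one and the same candidate in every level.
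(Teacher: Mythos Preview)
Your proposal is correct and follows essentially the same approach as the paper: both reduce from strongly \NP-hard \prob{3-Partition} via the identical construction (one candidate per integer, $a_i$ many agents per candidate all nominating that candidate in every level, $\tau=m$, $k=3$, $x=B$, $y=1$), and both use the same slot-counting argument ($3m$ candidates into $3m$ slots forces a partition, then the sum constraint forces each triple to sum to exactly $B$). The paper packages the backward direction into a separate observation but the reasoning is the same as yours.
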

  
  The reduction is from the following strongly \NP-hard problem~\cite{GareyJ79}.

  \decprob{3-Partition}{tpart}
  {A multiset~$S=\{s_1,\dots,s_n\}$ of $n=3m$ positive integers with~$\sum_{i=1}^n s_i = T\cdot m$.}
  {Is there a partition~$(S_1,\dots,S_m)$ into triplets such that for every~$i\in\set{m}$ it holds that~$\sum_{s\in S_i} s = T$?}

  \begin{construction}
    \label{constr:3part}
    Let~$I=(S=\{s_1,\dots,s_n\})$ of $n=3m$ positive integers with~$\sum_{i=1}^n s_i = T\cdot m$
    be an instance of~\textsc{3-Partition}.
    We construct an instance~$I'\ceq (A,C,U,k,x,y)$ as follows.
    Let~$C=\{c_1,\dots,c_n\}$ be the set of candidates,
    let $A=A_1\cup\dots\cup A_n$ with~$A_i=\{a_i^1,\dots,a_i^{s{_i}}\}$
    be the sets of agents,
    let each agent from~$A_i$ nominate~$c_i$ in each of the~$m$ levels,
    and let~$x\ceq T$, 
    $k\ceq 3$, and 
    $y\ceq 1$.
    \cqed
  \end{construction}

  \begin{observation}
    \label{obs:3part}
    If~$I'$ is a \yes-instance,
    then every solution~$(C_1,\dots,C_m)$ is a partition of~$C$ into triples
    with for every~$t\in\set{m}$: $\sum_{c\in C_t} |u^{-1}_t(c)| = x$.
  \end{observation}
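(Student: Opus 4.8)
The plan is to prove this by a single double-counting argument on the incidences between committees and candidates, carried out uniformly for both \smlaAcr{} and \fmlaAcr{}. First I would record the two structural facts supplied by \cref{constr:3part}: every agent in $A_i$ nominates $c_i$ in \emph{every} level, so the preimage size $|u^{-1}_t(c_i)|=|A_i|=s_i$ is independent of~$t$, and hence the committee score of $C_t$ equals $\sum_{c_i\in C_t}s_i$. Likewise, since $a_i^\ell$ is satisfied in level~$t$ exactly when $c_i\in C_t$, the agent score of every member of $A_i$ equals $d_i\ceq|\{t\in\set{m}\mid c_i\in C_t\}|$, the number of committees containing~$c_i$.

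Next I would bound the total number of committee--candidate incidences. As each committee has size at most $k=3$ and there are $\tau=m$ levels, $\sum_{i=1}^n d_i=\sum_{t=1}^m|C_t|\le 3m=n$. The $y$-constraint (with $y=1$) forces $d_i\ge 1$ for every~$i$ in the \smlaAcr{} case and $d_i=1$ in the \fmlaAcr{} case; either way, combined with $\sum_i d_i\le n$ ranging over the $n$ candidates, this pins down $d_i=1$ for all~$i$ and hence $\sum_t|C_t|=n=3m$. Since each $|C_t|\le 3$, every committee must then contain exactly three candidates, while $d_i=1$ says the $C_t$ are pairwise disjoint and together cover all of~$C$; thus $(C_1,\dots,C_m)$ is a partition of~$C$ into triples.

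Finally I would establish the score equality. Summing the committee scores over all levels and using $d_i=1$ yields $\sum_{t=1}^m\sum_{c_i\in C_t}s_i=\sum_{i=1}^n s_i=Tm$. The $x$-constraint gives $\sum_{c\in C_t}|u^{-1}_t(c)|\ge x=T$ for each of the $m$ levels, so $m$ nonnegative terms with sum $Tm$, each at least $T$, must all equal~$T$. Hence $\sum_{c\in C_t}|u^{-1}_t(c)|=x$ for every $t\in\set{m}$, which completes the claim.

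I do not anticipate a genuine obstacle here; the only care needed is to treat the ``at least once'' (\smlaAcr) and ``exactly once'' (\fmlaAcr) readings of the $y$-constraint in parallel, and to observe that the incidence budget $3m=n$ is \emph{exactly} tight, which is precisely what simultaneously forces full coverage, triple-sized committees, and disjointness.
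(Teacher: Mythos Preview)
Your proposal is correct and follows essentially the same route as the paper: a tight counting of committee--candidate incidences ($3m$ slots for $3m$ candidates each needing to appear at least once) to force a partition into triples, followed by the observation that $m$ terms each at least~$T$ summing to~$Tm$ must all equal~$T$. The paper's own proof is terser (and phrases the score-equality step as a contradiction), but the substance is identical; your explicit introduction of the incidence counts~$d_i$ and your parallel treatment of the \smlaAcr{} and \fmlaAcr{} $y$-constraints simply make the argument more transparent.
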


  \begin{proof}
  As there are $3m$ candidates, each with an one-to-one correspondence to a set~$A_i$,
  and there are~$m$ levels 
  where in each at most~$3$ candidates can be chosen,
  the first statement follows.
  Assume there is~$r\in\set{m}$ with $\sum_{c\in C_r} |u^{-1}_r(c)| > x=T$.
  Then,
  \begin{align*}
    \sum_{i=1}^n s_i &= \sum_{t=1}^m \sum_{c\in C_t} |u^{-1}_t(c)| 
    \\ 
    &\geq (m-1)\cdot T + \sum_{c\in C_r} |u^{-1}_r(c)| > m\cdot T,
  \end{align*}
  a contradiction.
  \end{proof}

  \begin{proof}[Proof of~\cref{prop:3part}]
    Given instance $I=(S=\{s_1,\dots,s_n\})$ of $n=3m$ positive integers with~$\sum_{i=1}^n s_i = T\cdot m$ of~\textsc{3-Partition},
    we obtain instance~$I'\ceq (A,C,U,k,x,y)$ using~\cref{constr:3part}.
    We show that~$I$ is a \yes-instance
    if and only if
    $I'$ is a \yes-instance.
    
    \RD{}
    Let~$(S_1,\dots,S_m)$ be a solution to~$I$ with~$S_t=\{s_{t,1},s_{t,2},s_{t,3}\}$.
    We claim that~$(C_1,\dots,C_n)$ with~$C_t\ceq \{c_{t,1},c_{t,2},c_{t,3}\}$ is a solution to~$I'$.
    Since~$|C_t|\leq 3=k$ and~$(S_1,\dots,S_m)$ is a partition,
    it remains to show that the score of each~$C_t$ is at least~$x=T$.
    Note that there are exactly~$s_{t,i}$ many agents nominating~$c_{t,i}$.
    Hence, 
    we have that~$\sum_{c\in C_t} |u^{-1}_t(c)| = \sum_{i=1}^3 |u^{-1}_t(c_{t,i})| = \sum_{i=1}^3 s_{t,i} = \sum_{s\in S_t} s = T$.

    \LD{} 
    Follows from~\cref{obs:3part}.
  \end{proof}
  
  In fact,
  the reduction above works for surprisingly many variants.

  \begin{corollary}\label{cor:3part}
  Each of the following problems with $\sim^*\,\in\{\leq,=,\geq\}$ 
  is~\NP-hard,
  even if each agent nominates the same candidate in every level:
  \begin{inparaenum}[(i)]
    \item \bicmce[$\sim^*$][$=$][$=$]{\ourOWA,\ourPREF}, 
    \item \bicmce[$\sim^*$][$=$][$\geq$]{\ourOWA,\ourPREF},
    \item \bicmce[$\sim^*$][$\leq$][$=$]{\ourOWA,\ourPREF},
    \item \bicmce[$\sim^*$][$\leq$][$\geq$]{\ourOWA,\ourPREF},
    and 
    \item \bicmce[$=$][$\geq$][$=$]{\ourOWA,\ourPREF},
  \end{inparaenum} 
  \end{corollary}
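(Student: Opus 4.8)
The plan is to reuse the reduction from \textsc{3-Partition} behind \cref{prop:3part}, i.e.\ \cref{constr:3part}, \emph{unchanged}, and to verify for each of the five constraint signatures that the produced instance is a yes-instance of the respective variant exactly when the \textsc{3-Partition} instance is. Since in \cref{constr:3part} every agent of $A_i$ nominates $c_i$ in every level, the claimed restriction (each agent nominating the same candidate throughout) comes for free. The forward direction is uniform across all variants: given a partition $(S_1,\dots,S_m)$ into triplets with $\sum_{s\in S_t}s=T$, the committees $C_t\ceq\{c_{t,1},c_{t,2},c_{t,3}\}$ have size exactly $3$, committee score exactly $T$, and leave each agent satisfied in exactly one level. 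Hence they simultaneously satisfy the committee-size constraint for every $\sim_{\rm k}\in\{\leq,=,\geq\}$ (as $|C_t|=3$), the committee-score constraint for every $\sim_{\rm x}\in\{\leq,=,\geq\}$ (as the score is $T$), and the agent-score constraint for both $\sim_{\rm y}\in\{=,\geq\}$ (as each agent scores $1$).

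The work lies in the backward direction, which I would organize around three facts that any solution must satisfy: (A) every candidate is chosen in exactly one level, so the committees partition $C$; (B) every level picks exactly three candidates; and (C) every committee has score exactly $T$. Given (A)--(C), the triples $S_t\ceq\{s_i\mid c_i\in C_t\}$ form a valid \textsc{3-Partition}, closing the reduction. Fact (B) follows from (A) together with $\sim_{\rm k}$ at value $3$: since $\sum_t|C_t|=|C|=3m$ over $m$ levels, each of ``$\leq 3$'', ``$=3$'', ``$\geq 3$'' forces $|C_t|=3$. Likewise, fact (C) follows from (A) together with $\sim_{\rm x}$ at value $T$: once $C$ is partitioned, the total committee score equals $\sum_i s_i=mT$, so each of ``$=T$'', ``$\leq T$'', and ``$\geq T$'' forces every single level to score exactly $T$ (the last case covers item~(v)).

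It therefore remains to establish (A). When $\sim_{\rm y}$ is ``$=$'' (items (i), (iii), (v)) this is immediate and is precisely \cref{obs:3part}: since each agent $a_i^j$ nominates $c_i$ in every level, an agent score of exactly $y=1$ forces $c_i$ into exactly one level. When $\sim_{\rm y}$ is ``$\geq$'' (items (ii), (iv)), each $c_i$ is chosen in at least one level, say in $n_i\geq 1$ levels; then the total committee score is $\sum_i n_i s_i\geq\sum_i s_i=mT$, while $\sim_{\rm x}\in\{{=},{\leq}\}$ at value $T$ caps the total by $mT$. Equality together with $s_i>0$ forces $n_i=1$ for all $i$, which is (A). I expect this double-counting step for the two ``$\geq$''-on-$y$ variants to be the only genuinely new argument; everything else is bookkeeping over the $(\sim_{\rm k},\sim_{\rm x},\sim_{\rm y})$-combinations covered by the five items, which I would tabulate rather than expand.
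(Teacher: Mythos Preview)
Your proposal is correct and follows the intended approach: reuse \cref{constr:3part} unchanged and verify it works for each constraint signature. The paper in fact provides no proof for \cref{cor:3part} at all, so your detailed case analysis (in particular the double-counting argument for the $\sim_{\rm y}={\geq}$ cases) fills in what the paper leaves implicit. One minor point: your reference to \cref{obs:3part} for the $\sim_{\rm y}={=}$ cases is slightly off, since that observation is stated and proved for \smlaAcr{} (the $\leq,\geq,\geq$ signature) and its proof relies on $|C_t|\leq k$; however, the self-contained argument you give right after (``agent score exactly $1$ forces $c_i$ into exactly one level'') is correct and does not need \cref{obs:3part}, so simply drop or rephrase the citation.
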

}

\paragraph{Outlook.}
Since our model is novel,
also several future research directions 
come to mind.
A parameterized analysis of the variants of \bicmce[$\sim_{\rm{k}}$][$\sim_{\rm{x}}$][$\sim_{\rm{y}}$]{$\Lambda,{\cal U}$}
next to \smlaAcr{} and \fmlaAcr{} could reveal where these variants differ from each other.
One could consider a global budget instead of a budget for each level,
that is,
variants where~$|\bigcup_{t=1}^\tau C_t|\leq k$ or $\sum_{t=1}^\tau |C_t|\leq k$.
Speaking of variants,
another modification could be where the score of any two agents must not differ by more than some given~$\gamma$.
Finally,
as a concrete question:
does \smlaAcr{} or \fmlaAcr{} admit a problem kernel of size
polynomial in~$k$ if $\tau$ is constant?
(Recall that due to~\cref{thm:nopkmtaux},
we know that there is presumably no problem kernel for \smlaAcr{} of size
polynomial in~$\tau$ if $k$ is constant.)

\section*{Acknowledgements}
We thank IJCAI'23 reviewers for constructive feedback.

\printbibliography

\clearpage
\appendix
\section*{Appendix}
\appendixProofText

\end{document}